\documentclass[11pt,a4paper,reqno,intlimits,sumlimits]{amsart}
\usepackage{amssymb}
\usepackage{color}
\usepackage[mathscr]{euscript}
\usepackage{xspace}
\usepackage{enumerate}
\usepackage{epsfig,rotating}
\usepackage{graphicx}
\input{xy}\xyoption{all}
\usepackage{todonotes}
\usepackage{amsmath}
\usepackage{lscape}
\usepackage[latin1]{inputenc}
\usepackage{chicago}
\usepackage{accents}
\usepackage{dsfont}
\usepackage{comment}
\usepackage{mathrsfs}
\usepackage{float}
\usepackage{footmisc}
\usepackage{ragged2e}
\usepackage{multirow}
\usepackage{multicol}
\usepackage{todonotes}
\usepackage{centernot}

\usepackage[left=1.42in,right=1.42in,top=1.6in,bottom=1.3in]{geometry}

\DeclareMathAlphabet{\mathpzc}{OT1}{pzc}{m}{it}

\usepackage[bookmarksopen,pdfstartview=FitH]{hyperref}
\hypersetup{colorlinks,%
            citecolor=blue,%
            filecolor=blue,%
            linkcolor=blue,%
            urlcolor=blue}%

\theoremstyle{plain}
\newtheorem{theorem}{Theorem}[section]
\newtheorem{lemma}[theorem]{Lemma}
\newtheorem{proposition}[theorem]{Proposition}

\theoremstyle{definition}
\newtheorem{remark}[theorem]{Remark}

\newtheorem{definition}[theorem]{Definition}


\numberwithin{equation}{section}

\numberwithin{equation}{section}

\renewcommand{\setminus}{\backslash}

\makeatletter
\def\Ddots{\mathinner{\mkern1mu\raise\p@
\vbox{\kern7\p@\hbox{.}}\mkern2mu
\raise4\p@\hbox{.}\mkern2mu\raise7\p@\hbox{.}\mkern1mu}}
\makeatother

\newcommand{\de}[1]{\partial_{#1}}

\newcommand{\tr}[1]{\textup{Tr}\left[#1\right]}


\def\F{\ensuremath{\mathcal{F}}}

\def\R{\ensuremath{\mathbb{R}}}

\def\U{\ensuremath{\mathcal{U}}}
\def\Ec{\ensuremath{\mathcal{E}}}
\def\N{\mathbb{N}}

\def\B{\mathcal{B}}
\def\X{\mathcal{X}}

\def\M{\mathcal{M}}
\def\D{\mathcal{D}}
\def\G{\mathcal{G}}

\def\S{\mathcal{S}}

\newcommand{\diag}{\text{Diag}}
\newcommand{\SP}{\S_n^{+}}
\newcommand{\SPP}{\S_n^{+,*}}

\def\P{\mathbb{P}}
\def\Q{\mathbb{Q}}
\def\E{\mathbb{E}}

\def\hP{\ensuremath{\widehat{\mathrm{I\kern-.2em P}}}}
\def\hE{\ensuremath{\widehat{\mathrm{I\kern-.2em E}}}}

\def\bP{\ensuremath{\overline{\mathrm{I\kern-.2em P}}}}
\def\bE{\ensuremath{\overline{\mathrm{I\kern-.2em E}}}}

\def\y1{y_{-1}}


\newcommand{\q}[1]{\hspace*{-.7mm}\left\langle #1 \right\rangle}

\newcommand\smallO[1]{
    \,
    \mathchoice
    {
      {\scriptstyle\mathcal{O}}\left(#1\right)
    }
    {
      {\scriptstyle\mathcal{O}}\left(#1\right)
    }
    {
      {\scriptscriptstyle\mathcal{O}}\left(#1\right)
    }
    {
      \scalebox{0.8}{${\scriptscriptstyle\mathcal{O}}$}\left(#1\right)
    }
}
\makeatletter
\allowdisplaybreaks


\setlength{\parskip}{1em}

\begin{document}
\begin{center}
{\Large \bf Long-time large deviations for the multi-asset Wishart stochastic volatility model and option pricing}
\vspace*{.5cm}
\begin{multicols}{3}
  Aur\'elien {\sc Alfonsi}\footnote{Universit\'e Paris-Est, Cermics (ENPC), INRIA, F-77455 Marne-la-Vall\'ee, France. 
  } \\
David {\sc Krief} \footnote{\label{fn:Paris7}LPSM, Universit\'e Paris Diderot, Paris, France}\\
Peter {\sc Tankov} \footnote{ENSAE ParisTech, Palaiseau, France}
\end{multicols} 
\vspace*{.5cm}
\end{center}
{\justifying {\it In this paper, we prove a large deviations principle
    for the class of multidimensional affine stochastic volatility
    models considered in (Gourieroux, C.~and Sufana, R.,
    J.~Bus.~Econ.~Stat., 28(3), 2010), where the volatility matrix
    is modelled by a Wishart process. This class extends the very
    popular Heston model to the multivariate setting,
    thus allowing to model the joint behaviour of a basket of stocks
    or several interest rates. We then use the large deviation
    principle to obtain an asymptotic approximation for the implied
    volatility of basket options and to develop an asymptotically optimal
    importance sampling algorithm, to reduce the number of simulations when using
    Monte-Carlo methods to price derivatives.}}

\medskip

Key words: Large deviations, Wishart process, Importance sampling, Basket options,
Implied volatility

\medskip

MSC2010: \textbf{60F10}, 91G20, 91G60

\setlength{\parindent}{0in}

\section{Introduction}

The Heston stochastic volatility model~\cite{Hes1993} is one of the
most popular models in quantitative finance for the evolution of a
single asset price. The Wishart stochastic volatility model is its natural extension to a basket of assets, since it coincides with the Heston model in dimension~$1$ and preserves the affine structure. This model, proposed in~\cite{Go:Su2004}, assumes that under the risk-neutral
probability, the vector of $n$ asset prices is modelled as an It\^o process
\begin{equation}\label{eq:S_tDaF}
dS_t = \diag(S_t) \, \left( r \mathbf{1} \,dt + \tilde{X}_t^{1/2} \, d\tilde{Z}_t \right) \:,
\end{equation}
where the $n\times n$ volatility matrix $(\tilde X_t)$ follows the Wishart process with dynamics
\begin{equation}\label{eq:X_tDaF}
d\tilde{X}_t = \left( \alpha \, a^{\top} a + \tilde{b} \tilde{X}_t + \tilde{X}_t \tilde{b}^{\top} \right) \,dt + \tilde{X}_t^{1/2} \, d\tilde{W}_t \, a + a^{\top} (d\tilde{W}_t)^\top \tilde{X}_t^{1/2} \:,
\end{equation}
where $\tilde{Z}$ and $\tilde{W}$ are independent standard
$n$-dimensional and $n\times n$-dimensional Brownian motions, and
$\diag(S_t)$ is the diagonal matrix whose diagonal elements are given
by the vector $S_t\in\R^n$. 

The
matrix process \eqref{eq:X_tDaF} has been introduced by \cite{Bru1991}
to model the perturbation of experimental biological data. As shown by
\cite{Bru1991} and \cite{Cu:Fi:Ma:Te2011} in a more general framework,
for $\alpha \ge n+1$ (resp. $\alpha \ge n-1$), the SDE
\eqref{eq:X_tDaF} has a unique strong (resp. weak)
solution. Furthermore, since $\tilde{X}_t$ is positive semi-definite
\cite[Prop.~4]{Bru1991}, Wishart processes turn out to be very
suitable processes to model covariance matrices. This, and the affine property of the Wishart process, led several
authors to use them in stochastic volatility models for a single
asset, such as \cite{DaF:Gr:Te2008} and \cite{Be:Be:ElK2008} and in
the Wishart stochastic volatility model  for multiple assets
\eqref{eq:S_tDaF}--\eqref{eq:X_tDaF}. Subsequently, this model has been extended by~\cite{DaF:Gr:Te2007} to include
a constant correlation between $W$ and $Z$ in a way to preserve the
affine structure.

By using the affine property, the Laplace transform of the model
\eqref{eq:S_tDaF}--\eqref{eq:X_tDaF}  is computed as follows~\cite{DaF:Gr:Te2007}. 
\begin{equation}\label{Laplace_intro}
\E\left(e^{\theta^\top \log(S_t)}\right) = \exp\left(\beta_\theta(t) + \tr{\gamma_\theta(t) \, \tilde{X}_0} + \delta_\theta^\top(t) \log(S_t) \right) \:,
\end{equation}
where $\beta_\theta, \gamma_\theta$ and $\delta_\theta$ satisfy the matrix Riccati equations
\begin{align*}
\de{t} \beta_\theta(t) & = r \, \delta_\theta^\top(t) \, \mathbf{1} + \alpha \, \tr{\gamma_\theta(t)} \\
\de{t} \gamma_\theta(t) & = \tilde{b}^\top \gamma_\theta(t) + \gamma_\theta(t) \, \tilde{b} + 2 \gamma_\theta(t) \, a^\top a \, \gamma_\theta(t) - \frac{1}{2}\,  \left(\diag(\delta_\theta(t)) - \delta_\theta(t) \delta_\theta^\top(t) \right)  \\
\de{t} \delta_\theta(t) & = 0 \:,
\end{align*}
with initial conditions $\beta_\theta(0) = 0$, $\gamma_\theta(0)=0$
and $\delta_\theta(0) = \theta$. Since the Riccati equations can be
solved explicitly, the Laplace transform can be expressed explicitly
in terms of matrix exponentials and inverses.

The goal of the present paper is to prove a large deviations principle
the Wishart stochastic volatility
model~\eqref{eq:S_tDaF}--\eqref{eq:X_tDaF} in the large-time
asymptotic regime. Since the Laplace transform of the log-price vector in the Wishart
model is known explicitly, a natural path towards a large deviations
principle is via G\"artner-Ellis theorem. However, despite the
explicit form of the Laplace transform, it is not easy to calculate
its long-time asymptotics and to check the assumptions of the theorem
because of the multi-dimensional setting. In this paper we therefore
focus on a (large enough) subclass of the model
\eqref{eq:S_tDaF}--\eqref{eq:X_tDaF} which enables us to obtain a
simpler formula for the limiting Laplace transform and then prove a large deviations principle. 

 Beyond its theoretical interest, knowing that a
given model satisfies a large deviations principle, and knowing the
explicit form of the rate function, enables one to develop a number of
important applications. One can mention e.g., efficient importance
sampling methods for Monte Carlo option pricing; asymptotic formulas for option prices
and implied
volatilities in various asymptotic regimes, approximate evaluation of risk
measures, simulation of rare events and others. We refer the reader to
\cite{pham2007some} for a review of various applications of large
deviations methods in finance. In this paper we develop applications
to variance reduction of Monte Carlo methods
and to the asymtotic computation of implied volatilities far from
maturity.  

Our variance reduction method follows previous works of
\cite{Gu:Ro2008}, \cite{Rob2010} and \cite{Ge:Ta} and uses Varadhan's lemma of large deviations theory to approximate the
optimal measure change in the importance sampling algorithm. 
Note that since the Laplace
  tranform is known explicitly, Fourier inversion methods can be used,
  as explained in \cite{DaF:Gr:Te2007}. However, these methods are
  much less competitive than in dimension~$1$ since they
  require to approximate an integral on $\R^n$. When, for complexity
  reasons, Fourier methods are not an option, the use of a large
  number of Monte-Carlo simulations is necessary. \cite{AA2013} present an exact simulation method for Wishart processes and a
  second order scheme for the Gourieroux and Sufana
  model~\eqref{eq:S_tDaF}--\eqref{eq:X_tDaF}. Thus, it is possible
  to sample efficiently such processes, and it is relevant to develop
  variance reduction techniques to reduce computational costs.

The approximation of implied volatility far from maturity extends
earlier results on the Heston model and the one-dimensional affine
stochastic volatility models \cite{FordeJacquier2011,Ja:Ke:Mi2013} to
the multidimensional setting of Wishart model. Once again, this
approach is more relevant in the multidimensional setting, since in
one-dimensional affine models the implied volatility may be quickly
computed by Fourier inversion.

In this paper, we denote $\M_n$ the set of real squared $n\times n$
matrices, $\S_n \subset \M_n$ the set of symmetric matrices and $\SP$,
(resp. $\SPP$), the sets of symmetric an non-negative (resp.) positive
definite. For a Borel set $A$, we denote by $\bar A$ the closure of
$A$ and by $\circ A$ the interior of $A$. 

The paper is structured as follows.
In Section \ref{sec:Model}, we describe the model, make certain assumptions on the parameters and give some properties of the model. In Section \ref{sec:LDP}, we prove that the asset log-price vector satisfies large deviations principle when maturity goes to infinity. In Section \ref{sec:AsymptoticPricing}, we calculate the asymptotic put basket implied volatility, following the approach of \cite{Ja:Ke:Mi2013}. In Section \ref{sec:VarianceReduction}, we develop the variance reduction method using Varadhan's lemma. Finally, in Section \ref{sec:Numerics}, we test numerically the results of Sections \ref{sec:AsymptoticPricing} and \ref{sec:VarianceReduction}.

\section{The Wishart stochastic volatility model}\label{sec:Model}
In this section we introduce the subclass of the Wishart stochastic
volatility models, in which we are interested in the present paper,
and compute the Laplace transform of the log stock price process. 

Let $(S_t)_{t \geq 0}$ be a $n$-dimensional vector stochastic process with dynamics 
\begin{equation}\label{eq:S_t}
dS_t = \diag(S_t) \, \left( r \mathbf{1} \,dt + a^{\top} X_t^{1/2} \,
  dZ_t \right) \:, \qquad S^i_0 > 0, \ i=1,\dots,n,
\end{equation}
where $\mathbf{1} = (1,...,1)^\top$, $\diag(S_t)_{ij}=  \mathds{1}_{\{i=j\}}S^i_t$, $Z_t$ is $n$-dimensional standard Brownian motion and the stochastic volatility matrix $X$ is a Wishart process with dynamics
\begin{equation}\label{eq:X_t}
dX_t = \left( \alpha I_n + b X_t + X_t b \right) \,dt + X_t^{1/2} \, dW_t + (dW_t)^\top X_t^{1/2} \:, \qquad X_0 = x \:.
\end{equation}
with $\alpha > n-1$, $a \in \M_n$ invertible, $-b,x \in \SPP$ and $W$
is a $n\times n$ matrix standard Brownian motion independent of $Z$. Note again that $X_t \in \SP$ \cite[Prop. 4]{Bru1991}. Let us also assume that $a$ is such that $a^\top a \in \SPP$. 
\begin{remark}
The model $(S,X)$ defined in \eqref{eq:S_t} and \eqref{eq:X_t} is a
(quite large) subclass of the one defined in \eqref{eq:S_tDaF} and \eqref{eq:X_tDaF}. Indeed, defining $\tilde{X}_t := a^\top X_t \, a$, we have $a^{\top} X_t^{1/2} \, dZ_t = \tilde{X}_t^{1/2} \, d\tilde{Z}_t$, where $\tilde{Z}_t$ is another $n$-dimensional standard Brownian motion and 
\[
d\tilde{X}_t = \left( \alpha \, a^{\top} a + \tilde{b} \tilde{X}_t + \tilde{X}_t \tilde{b}^{\top} \right) \,dt + \tilde{X}_t^{1/2} \, d\tilde{W}_t \, a + a^{\top} (d\tilde{W}_t)^\top \tilde{X}_t^{1/2} \:, \qquad \tilde{X}_0 = a^\top x \, a \:,
\]
where $\tilde{b} = a^{\top} \! b \, (a^{\top})^{-1}$ and $\tilde{W}_t$ is another $n\times n$-Brownian motion.
\end{remark}
\begin{remark}
In dimension one, the model defined by eqs. \eqref{eq:S_t} and \eqref{eq:X_t} corresponds to the famous Heston model \cite{Hes1993} and $b$ being negative definite yields the mean reversion property of the stochastic volatility process. 
\end{remark}
Defining the log-price $Y_t^k := \log(S_t^k), \, k=1,...,n$, a simple application of It\=o's lemma gives 
\begin{equation}\label{eq:Y_t}
d Y_t = \left( r \mathbf{1} - \frac{1}{2} \left( (a^\top X_t \, a)_{11}\,,\,...\,,\,(a^\top X_t \, a)_{nn}\right)^\top \right) \, dt + a^\top X_t^{1/2} \, dZ_t \:.
\end{equation}
We are interested in the Laplace transform of $Y_t$. In order to calculate it, we first cite the following proposition.
\begin{proposition}\textup{\cite[Prop. 5.1.]{Al:Ke:Re2016}\textbf{.}} \label{prop:Alfonsi}
Let $\alpha \ge n-1$, $x \in \SP$, $b \in \S_n$ and $X$ with dynamics \eqref{eq:X_t}. Let $v,w \in \S_n$ be such that 
\[
\exists m \in \S_n, \quad \frac{v}{2} - mb - bm - 2m^2 \in \SP \quad \text{and} \quad \frac{w}{2} + m \in \SP \:.
\] 
If $R_t := \int_0^t X_s \, ds$, then we have for $t \ge 0$
\begin{align*}
& \E\left[ \exp\left(- \frac{1}{2} \tr{w X_t} - \frac{1}{2} \tr{v R_t}\right) \right] \\
& \qquad = \frac{\exp\left(-\frac{\alpha}{2} \tr{b}t\right)}{\det\left[ V_{v,w}(t)\right]^{\alpha/2}} \, \exp\left( -\frac{1}{2} \tr{\left(V_{v,w}'(t) V_{v,w}^{-1}(t) + b\right) x}\right) \:,
\end{align*}
with
\[
V_{v,w}(t) = \left(\sum_{k=0}^\infty t^{2k+1} \frac{\tilde{v}^k}{(2k+1)!}\right) \tilde{w} + \sum_{k=0}^\infty t^{2k} \frac{\tilde{v}^{k}}{(2k)!}, \quad \tilde{v} = v+b^2 \quad \text{and} \quad \tilde{w} = w - b \:. 
\]
If besides, $\tilde{v} \in \SPP$, then
\[
V_{v,w}(t) = \tilde{v}^{-1/2} \sinh\left(\tilde{v}^{1/2} t\right) \tilde{w} + \cosh\left(\tilde{v}^{1/2} t\right)
\]
and
\[
V_{v,w}'(t) = \cosh\left(\tilde{v}^{1/2} t\right) \tilde{w} + \sinh\left(\tilde{v}^{1/2} t\right) \, \tilde{v}^{1/2}\:.
\]
\end{proposition}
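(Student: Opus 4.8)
The plan is to exploit the affine structure of the Wishart process $X$. Write $u(t,x):=\E_x\!\left[\exp\!\left(-\tfrac12\tr{wX_t}-\tfrac12\tr{v\!\int_0^t X_s\,\ud s}\right)\right]$. Assuming finiteness and enough regularity, the Feynman--Kac formula gives $\de{t}u=\mathcal{L}u-\tfrac12\tr{vx}\,u$ with $u(0,\cdot)=\exp(-\tfrac12\tr{wx})$, $\mathcal{L}$ being the generator of \eqref{eq:X_t}; since $\mathcal{L}f=\bigl[\alpha\tr{\psi}+\tr{x(2\psi^2+b\psi+\psi b)}\bigr]f$ for $f(x)=e^{\tr{\psi x}}$ with $\psi\in\S_n$, the exponential--affine ansatz $u(t,x)=\exp\!\bigl(\phi(t)+\tr{\psi(t)x}\bigr)$ is consistent and, matching the constant and the linear-in-$x$ parts, leads to
\begin{equation*}
\phi'(t)=\alpha\,\tr{\psi(t)},\quad \phi(0)=0,\qquad \psi'(t)=2\psi(t)^2+b\,\psi(t)+\psi(t)\,b-\tfrac12 v,\quad \psi(0)=-\tfrac12 w .
\end{equation*}
So the first task is to solve this symmetric matrix Riccati equation explicitly and then integrate to recover $\phi$.

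The algebraic heart of the proof is that the linear terms can be removed. Since $b\in\S_n$ is constant, the substitution $\psi=\widehat\psi-\tfrac12 b$ collapses the linear and constant parts and produces $\widehat\psi'(t)=2\widehat\psi(t)^2-\tfrac12\widetilde v$ with $\widehat\psi(0)=-\tfrac12\widetilde w$, where $\widetilde v=v+b^2$ and $\widetilde w=w-b$ --- precisely the matrices appearing in the statement. This pure Riccati equation linearizes via $\widehat\psi(t)=-\tfrac12 V'(t)V(t)^{-1}$: the quadratic terms cancel and one is left with the second-order linear matrix ODE $V''(t)=\widetilde v\,V(t)$ with the normalization $V(0)=I_n$, $V'(0)=\widetilde w$ (forced by $\widehat\psi(0)=-\tfrac12\widetilde w$), whose unique solution is exactly $V_{v,w}$ --- the power-series form being valid for arbitrary $\widetilde v$ and reducing to the $\cosh$/$\sinh$ expression, together with $V_{v,w}'$, when $\widetilde v\in\SPP$. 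Using that $V^{\top}V'-(V')^{\top}V$ is constant and vanishes at $0$, one checks that $V'V^{-1}$ is symmetric, so $\psi(t)=-\tfrac12\bigl(V_{v,w}'(t)V_{v,w}^{-1}(t)+b\bigr)\in\S_n$, which gives the factor $\exp\!\bigl(-\tfrac12\tr{(V_{v,w}'(t)V_{v,w}^{-1}(t)+b)x}\bigr)$. For $\phi$, substitute and apply Jacobi's formula $\tr{V_{v,w}'V_{v,w}^{-1}}=\tfrac{d}{dt}\log\det V_{v,w}(t)$, so $\phi'(t)=\alpha\tr{\psi(t)}=-\tfrac{\alpha}{2}\tfrac{d}{dt}\log\det V_{v,w}(t)-\tfrac{\alpha}{2}\tr{b}$; integrating from $0$ (where $\det V_{v,w}(0)=1$) yields $\phi(t)=-\tfrac{\alpha}{2}\log\det V_{v,w}(t)-\tfrac{\alpha}{2}\tr{b}\,t$, i.e. the prefactor $e^{-\frac{\alpha}{2}\tr{b}t}/\det[V_{v,w}(t)]^{\alpha/2}$. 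Assembling the two pieces reproduces the asserted identity.

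The point needing genuine care --- and where the hypothesis involving the auxiliary $m\in\S_n$ comes in --- is the rigorous justification: one must know that $V_{v,w}(s)$ is invertible for all $s\in[0,t]$ (so $\psi$ does not blow up before $t$) and, above all, that the local martingale obtained by applying It\^o's formula to $s\mapsto\exp\!\bigl(\phi(t-s)+\tr{\psi(t-s)X_s}\bigr)\exp\!\bigl(-\tfrac12\tr{v\!\int_0^s X_r\,\ud r}\bigr)$ is a \emph{true} martingale, so that its expectation equals $u(t,x)$ rather than merely dominating it. The strategy is a change of measure tuned by $m$: It\^o's formula gives $\tr{mX_t}-\tr{mx}=\int_0^t\tr{m(\alpha I_n+bX_s+X_s b)}\,\ud s+M_t^m$ with $\langle M^m\rangle_t=4\int_0^t\tr{m^2X_s}\,\ud s$, whence $\exp\!\bigl(-\tfrac12\tr{wX_t}-\tfrac12\tr{vR_t}\bigr)$ equals the Dol\'eans exponential $\mathcal{E}(M^m)_t$ times $e^{\tr{mx}+\alpha\tr{m}\,t}\exp\!\bigl(-\tr{(\tfrac12 w+m)X_t}\bigr)\exp\!\bigl(-\int_0^t\tr{(\tfrac12 v-mb-bm-2m^2)X_s}\,\ud s\bigr)$. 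Precisely the assumptions $\tfrac12 v-mb-bm-2m^2\in\SP$ and $\tfrac12 w+m\in\SP$ make the last two factors lie in $(0,1]$ (the trace of a product of two positive semidefinite matrices is nonnegative and $X_s\in\SP$); this at once yields the true-martingale property of $\mathcal{E}(M^m)$ and, under the new measure --- under which $X$ is again a Wishart process of the same form with $b$ replaced by $b+2m$ --- reduces the identity to the case of a bounded integrand, where the affine computation above goes through by a plain bounded-martingale argument. I expect this integrability / true-martingale verification, and not the explicit ODE solving, to be the main obstacle; a variant that avoids the change of measure is to first establish the formula for $v,w\in\SP$ and then extend it to the parameter region of the statement by analytic continuation in $(v,w)$.
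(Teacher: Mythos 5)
The paper does not prove this proposition at all: it is imported, with citation, from Alfonsi--Kebaier--Rey (2016, Prop.~5.1), so there is no in-paper argument to compare yours against. On its own merits, your proposal is essentially the proof from that reference, and the computational core is correct: the exponential-affine ansatz with $\phi'=\alpha\tr{\psi}$ and the matrix Riccati for $\psi$, the removal of the linear part via $\psi=\widehat\psi-\frac{1}{2}b$ giving $\widetilde v=v+b^2$, $\widetilde w=w-b$, the linearization $\widehat\psi=-\frac{1}{2}V'V^{-1}$ with $V''=\widetilde v\,V$, $V(0)=I_n$, $V'(0)=\widetilde w$ (whose series solution is exactly $V_{v,w}$ and reduces to the $\cosh/\sinh$ form when $\widetilde v\in\SPP$), the symmetry of $V'V^{-1}$ via the constancy of $V^\top V'-(V')^\top V$, and Jacobi's formula for the $\det$-prefactor all check out, as does the Dol\'eans decomposition built from $m$ and the fact that the tilt turns $X$ into a Wishart process with drift $b+2m$.

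Two steps are, however, claimed faster than they can be justified. First, the observation that the two extra factors lie in $(0,1]$ does not ``at once'' yield that $\mathcal{E}(M^m)$ is a true martingale: a nonnegative stochastic exponential is only a supermartingale, and bounding the other factors in your identity says nothing about $\E[\mathcal{E}(M^m)_t]=1$. The standard fix is a localization argument: with $\tau_N$ the exit time of $X$ from $\{\|X\|\le N\}$, one has $\E[\mathcal{E}(M^m)_t]\ge \Q_N(\tau_N>t)$, where under the stopped tilted measure $\Q_N$ the process is Wishart with drift $b+2m$, and non-explosion of that Wishart process (here $\alpha\ge n-1$, $b+2m\in\S_n$) gives the limit $1$. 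Second, in the reduced problem under the new measure (parameters $\frac{1}{2}w+m\in\SP$, $\frac{1}{2}v-mb-bm-2m^2\in\SP$, drift $b+2m$) the ``plain bounded-martingale argument'' needs the verification process $s\mapsto e^{\phi(t-s)+\tr{\psi(t-s)X_s}-\frac{1}{2}\int_0^s\tr{v'X_r}\ud r}$ to be bounded, i.e.\ that the Riccati solution stays negative semidefinite on $[0,t]$ --- equivalently that $V$ remains invertible there. This is precisely the invertibility issue you flag at the outset, and it must actually be proved (e.g.\ by a matrix Riccati comparison with the zero solution, or a monotonicity argument for the flow), not merely deferred to the reduced case. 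Both points are standard and fixable, so the plan is sound, but they are where the genuine work of the proof lies rather than in the explicit ODE solving.
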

The following proposition provides and explicit formula for the
Laplace transform of the log stock price $Y_t$ in the model \eqref{eq:S_t}--\eqref{eq:X_t}.
\begin{proposition}\label{prop:LaplaceExplicitForm}
Let $\phi: \R^n \rightarrow \S_n$ be the function defined by
\begin{equation}\label{eq:PhiDefinition}
\phi(\theta) := b^2 + a \, \left(\diag(\theta) - \theta\theta^\top \right) a^\top \in \S_n \:,
\end{equation}
Let $\U \subset \R^n$, be the set defined by
\[
\U := \left\{ \theta \in \R^n \,:\, \phi(\theta) \in \SP \right\}.
\]
Then, for all $\theta \in \U$,  the Laplace transform of $Y_t$  is 
\[
\E\left( e^{\theta^\top Y_t} \right) 
= \frac{e^{\theta^\top Y_0 + r \theta^\top \mathbf{1} \, t -\frac{\alpha}{2} \tr{b} t - \frac{1}{2} \tr{\left(b + \phi^{1/2}(\theta)\right) x - \exp\left(-t \, \phi^{1/2}(\theta)\right) \left( b + \phi^{1/2}(\theta)\right) V^{-1}(t)\, x}}}{\det\left[ V(t)\right]^{\alpha/2}}  \:,
\]
where 
\[
V(t) = \cosh\left(t \, \phi^{1/2}(\theta)\right) - \phi^{-1/2}(\theta) \, \sinh\left(t \, \phi^{1/2}(\theta)\right) \, b \:.
\]
\end{proposition}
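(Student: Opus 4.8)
The plan is to reduce the Laplace transform of $Y_t$ to the setting of Proposition~\ref{prop:Alfonsi} by conditioning on the Wishart path. Fix $\theta \in \U$. From \eqref{eq:Y_t} we have
\[
\theta^\top Y_t = \theta^\top Y_0 + r\,\theta^\top\mathbf{1}\,t - \tfrac{1}{2}\sum_{k=1}^n \theta_k \int_0^t (a^\top X_s a)_{kk}\,ds + \int_0^t \theta^\top a^\top X_s^{1/2}\,dZ_s .
\]
Since $W$ and $Z$ are independent, conditionally on the $\sigma$-algebra generated by $X$ the stochastic integral $\int_0^t \theta^\top a^\top X_s^{1/2}\,dZ_s$ is Gaussian with (conditional) mean zero and variance $\int_0^t \theta^\top a^\top X_s a\,\theta\,ds = \int_0^t \tr{a\theta\theta^\top a^\top X_s}\,ds$. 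Taking the conditional expectation and using $\sum_k \theta_k (a^\top X_s a)_{kk} = \tr{\diag(\theta)\,a^\top X_s a} = \tr{a\diag(\theta)a^\top X_s}$, the exponent collapses to a linear functional of $R_t = \int_0^t X_s\,ds$: one gets
\[
\E\bigl(e^{\theta^\top Y_t}\bigr) = e^{\theta^\top Y_0 + r\theta^\top\mathbf{1}\,t}\;\E\Bigl[\exp\Bigl(-\tfrac{1}{2}\tr{v R_t}\Bigr)\Bigr],
\qquad v := a\bigl(\diag(\theta)-\theta\theta^\top\bigr)a^\top .
\]
Note $v \in \S_n$ and $\tilde v = v + b^2 = \phi(\theta)$, which is exactly the object in the statement; the condition $\theta \in \U$ says precisely $\tilde v = \phi(\theta) \in \SP$, and in fact one wants $\phi(\theta) \in \SPP$ (or argues by a limiting/continuity argument on the boundary) so that the square-root formulas in Proposition~\ref{prop:Alfonsi} apply.

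Next I would apply Proposition~\ref{prop:Alfonsi} with this $v$ and with $w = 0$. One must first check the hypothesis: there exists $m \in \S_n$ with $\tfrac{v}{2} - mb - bm - 2m^2 \in \SP$ and $\tfrac{w}{2}+m = m \in \SP$. The natural choice is $m = -b/2$ (legitimate since $-b \in \SPP$), which makes $-mb - bm - 2m^2 = b^2 - b^2/2 = \tfrac12 b^2$, wait—recompute: $-mb-bm = b^2$ and $-2m^2 = -b^2/2$, so $\tfrac{v}{2} - mb - bm - 2m^2 = \tfrac12(v + b^2) = \tfrac12\phi(\theta) \in \SP$ by definition of $\U$; and $m = -b/2 \in \SPP \subset \SP$. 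So the hypotheses hold. Then Proposition~\ref{prop:Alfonsi} with $w=0$ gives $\tilde w = w - b = -b$, so
\[
V_{v,0}(t) = \tilde v^{-1/2}\sinh\!\bigl(\tilde v^{1/2} t\bigr)(-b) + \cosh\!\bigl(\tilde v^{1/2} t\bigr) = \cosh\!\bigl(t\,\phi^{1/2}(\theta)\bigr) - \phi^{-1/2}(\theta)\sinh\!\bigl(t\,\phi^{1/2}(\theta)\bigr)\,b,
\]
which is exactly the claimed $V(t)$; similarly $V_{v,0}'(t) = \cosh(t\phi^{1/2}(\theta))(-b) + \sinh(t\phi^{1/2}(\theta))\phi^{1/2}(\theta)$. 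Substituting into the formula of Proposition~\ref{prop:Alfonsi},
\[
\E\Bigl[e^{-\frac12\tr{v R_t}}\Bigr] = \frac{e^{-\frac{\alpha}{2}\tr{b}t}}{\det[V(t)]^{\alpha/2}}\exp\!\Bigl(-\tfrac12\tr{\bigl(V'(t)V^{-1}(t)+b\bigr)x}\Bigr).
\]

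The remaining work is purely algebraic: simplify $V'(t)V^{-1}(t) + b$. Since everything is a function of the single matrix $\phi^{1/2}(\theta)$ together with $b$, and $\cosh,\sinh$ of $t\phi^{1/2}(\theta)$ commute with $\phi^{1/2}(\theta)$, one can manipulate as with scalars. Write $C = \cosh(t\phi^{1/2})$, $\Sigma = \phi^{-1/2}\sinh(t\phi^{1/2})$, so $V = C - \Sigma b$ and $V' = -Cb + \phi\Sigma$ (using $\sinh(t\phi^{1/2})\phi^{1/2} = \phi\cdot\phi^{-1/2}\sinh(t\phi^{1/2}) = \phi\Sigma$). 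One checks the identity $V'(t) + bV(t) = \bigl(\phi - b^2\bigr)\Sigma(t) = \bigl(\phi(\theta)-b^2\bigr)\phi^{-1/2}(\theta)\sinh(t\phi^{1/2}(\theta))$ wait, more useful: I would instead verify directly that $V'(t)V^{-1}(t) + b = \bigl(b + \phi^{1/2}(\theta)\bigr) - e^{-t\phi^{1/2}(\theta)}\bigl(b+\phi^{1/2}(\theta)\bigr)V^{-1}(t)$, equivalently $V'(t) + bV(t) = \bigl(b+\phi^{1/2}\bigr)V(t) - e^{-t\phi^{1/2}}\bigl(b+\phi^{1/2}\bigr)$. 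Expanding the right side using $C = \tfrac12(e^{t\phi^{1/2}}+e^{-t\phi^{1/2}})$ and $\Sigma = \tfrac12\phi^{-1/2}(e^{t\phi^{1/2}}-e^{-t\phi^{1/2}})$, and doing the same for the left side $V' + bV = -Cb + \phi\Sigma + bC - b\Sigma b$, both sides reduce to the same expression in $e^{\pm t\phi^{1/2}}$; this is the one computation that takes care but no real ideas. Plugging this into the exponent yields exactly the stated formula.

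The main obstacle, such as it is, is not conceptual but bookkeeping: keeping track of non-commutativity between $b$ and $\phi^{1/2}(\theta)$ (they need not commute, since $\phi(\theta) = b^2 + a(\diag\theta - \theta\theta^\top)a^\top$) while simplifying $V'V^{-1}+b$, and handling the case $\phi(\theta) \in \SP \setminus \SPP$ on the boundary of $\U$, where $\phi^{-1/2}(\theta)$ is not defined and one should read $\phi^{-1/2}\sinh(t\phi^{1/2})$ as the everywhere-defined power series $\sum_{k\ge0} t^{2k+1}\phi^k/(2k+1)!$ (as in the first, series form of Proposition~\ref{prop:Alfonsi}) and pass to the limit by continuity. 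A minor additional point to verify is the integrability/finiteness of $\E[e^{\theta^\top Y_t}]$ justifying the conditioning step, which follows once $\phi(\theta)\in\SP$ guarantees the right-hand side of Proposition~\ref{prop:Alfonsi} is finite.
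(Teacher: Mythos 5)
Your proposal is correct and follows essentially the same route as the paper: condition on the Wishart path to reduce to $\E[e^{-\frac12\tr{vR_t}}]$ with $v=a(\diag(\theta)-\theta\theta^\top)a^\top$, verify the hypothesis of Proposition~\ref{prop:Alfonsi} with $m=-b/2$ and $w=0$, and then simplify $V'(t)V^{-1}(t)+b$ via the identity $V'(t)=\phi^{1/2}(\theta)V(t)-e^{-t\phi^{1/2}(\theta)}\bigl(b+\phi^{1/2}(\theta)\bigr)$, including the same limiting interpretation of $\phi^{-1/2}\sinh(t\phi^{1/2})$ on $\partial\U$. The only cosmetic difference is that you check this identity by expanding in $e^{\pm t\phi^{1/2}}$ directly, whereas the paper diagonalizes $\phi(\theta)=PDP^\top$ first; both computations are equivalent and correctly avoid assuming that $b$ and $\phi^{1/2}(\theta)$ commute.
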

\begin{proof}
By conditioning on the trajectory of $X$, we have
\[
\E\left( e^{\theta^\top Y_t} \right) = \E\left(\E\left( e^{\theta^\top Y_t} \,\middle|\, (X_s)_{s \le t}\right)\right)\:,
\]
where 
\begin{align*}
\E\left( e^{\theta^\top Y_t} \,\middle|\, (X_s)_{s \le t}\right)
& = e^{\theta^\top Y_0 + r \theta^\top \mathbf{1} \, t - \frac{1}{2} \int_0^t \theta^\top \left( (a^\top X_s \, a)_{11}\,,\,...\,,\,(a^\top X_s \, a)_{nn}\right)^\top \!\! - \theta^\top a^\top X_s \, a \, \theta\, ds} \\
& = e^{\theta^\top Y_0 + r \theta^\top \mathbf{1} \, t - \frac{1}{2} \int_0^t \tr{\diag(\theta) \, a^\top X_s \, a} - \tr{\theta^\top a^\top X_s \, a \, \theta}\, ds} \\
& = e^{\theta^\top Y_0 + r \theta^\top \mathbf{1} \, t - \frac{1}{2} \tr{a \, \left(\diag(\theta) - \theta\theta^\top \right) a^\top R_t }} \:.
\end{align*}
Let $m = -b/2$. Then $m \in \SP$ and
\[
\frac{a \, \left(\diag(\theta) - \theta\theta^\top \right) a^\top}{2} - mb - bm - 2m^2 = \frac{\phi(\theta)}{2} \in \SP \:.
\] 
Therefore, by Proposition \ref{prop:Alfonsi},
\begin{align}
\E\left( e^{\theta^\top Y_t} \right)
& = e^{\theta^\top Y_0 + r \theta^\top \mathbf{1} \, t} \, \E\left( e^{ - \frac{1}{2} \tr{a \, \left(\diag(\theta) - \theta\theta^\top \right) a^\top R_t }} \right) \notag \\
& = e^{\theta^\top Y_0 + r \theta^\top \mathbf{1} \, t} \, \frac{\exp\left(-\frac{\alpha}{2} \tr{b} t\right)}{\det\left[ V(t)\right]^{\alpha/2}} \, \exp\left( -\frac{1}{2} \tr{\left(V'(t) V^{-1}(t) + b\right) x}\right) \label{eq:LaplaceEquation}
\end{align}
where
\[
\begin{cases}
V(t) & = \cosh\left(t \, \phi^{1/2}(\theta)\right) - \phi^{-1/2}(\theta) \, \sinh\left(t \, \phi^{1/2}(\theta)\right) \, b \:, \\
V'(t) & = \sinh\left(t \, \phi^{1/2}(\theta)\right) \, \phi^{1/2}(\theta) - \cosh\left(t \, \phi^{1/2}(\theta)\right) \, b \:.
\end{cases}
\]
Since $\phi(\theta) \in \SP$, we can write $\phi(\theta) = P D P^\top$, where $D$ is diagonal, $P$ is orthonormal and $\hat{b} = - P^\top b \, P \in \SPP$.
\[
\begin{cases}
V(t) & = P \left( \cosh\left(t \, D^{1/2}\right) + \sinh\left(t \, D^{1/2}\right) D^{-1/2} \, \hat{b}\right) P^\top \:, \\
V'(t) & = P \left( \sinh\left(t \, D^{1/2}\right) D^{1/2} + \cosh\left(t \, D^{1/2}\right) \, \hat{b} \right) P^\top \\
& = \phi^{1/2}(\theta) V(t) - \exp\left(-t \, \phi^{1/2}(\theta)\right) \left( b + \phi^{1/2}(\theta)\right) \:.
\end{cases}
\]
Replacing $V'$ by the latter expression finishes the proof.\end{proof}
\begin{remark}
Note that, when $\phi(\theta) \in \SP \backslash \SPP$, $\phi^{1/2}(\theta)$ is not invertible. The notation $\phi^{-1/2}(\theta) \, \sinh\left(t \, \phi^{1/2}(\theta)\right)$ is therefore abusive and is to be interpreted as the finite limit 
\[
\lim_{\SPP \ni \, \phi \, \rightarrow \, \phi(\theta)} \phi^{-1/2} \, \sinh\left(t \, \phi^{1/2}\right)=\sum_{k=0}^\infty \frac{\phi(\theta)^k t^{2k+1}}{(2k+1)!} \:.
\]
\end{remark}

\begin{remark}\label{rem:bounded}
The set $\mathcal U$ is bounded. Indeed, let $\theta =
\lambda \bar\theta$, with $\lambda>0$ and $\|\bar \theta\|=1$. Then, letting $u =
(a^\top)^{-1}\bar \theta$, we have
$$
u^\top \phi(\theta) u = \|b (a^\top)^{-1}\bar \theta\|^2 + \lambda \bar
\theta^\top \text{Diag}(\bar\theta) \bar \theta - \lambda^2 \leq \|b
(a^\top)^{-1}\|^2 + \lambda - \lambda^2
$$
It follows that $\mathcal U$ is contained, e.g., in the set $\|\theta\|\leq
\lambda^*$ with 
$$
\lambda^* = \max\{2, \|b (a^\top)^{-1}\bar \theta\|\sqrt{2}\}. 
$$
\end{remark}

\section{Long-time large deviations for the Wishart volatility model}\label{sec:LDP}

In this section, we prove that the Wishart stochastic volatility model satisfies a large deviation principle when time tends to infinity. 

\subsection{Reminder of large deviations theory}

Let us recall some standard definitions and results of large
deviations theory. For a wider overview of large deviations theory, we
refer the reader to \cite{De:Ze}. We consider a family
$(X^\epsilon)_{\epsilon >0}$ of random variables on a
measurable space $(\X, \B)$, where $\X$ is a topological space. 
\begin{definition}[Rate function]
A rate function $\Lambda^*$ is a lower semi-continuous mapping $\Lambda^*: \X \rightarrow [0,\infty]$. A good rate function is a rate function such that, for every $a \in [0,\infty]$, $\{x : \Lambda^*(x) \le a\}$ is compact. 
\end{definition}
\begin{definition}[Large deviation principle]
$(X^\epsilon)_{\epsilon>0}$ satisfies a large deviation principle with rate function $\Lambda^*$ if, for every $A \in \B$, denoting $\overset{\circ}{A}$ and $\bar{A}$ the interior and the closure of $A$, 
\[
-\inf_{x \in \overset{\circ}{A}} \Lambda^*(x) \le \liminf_{\epsilon \rightarrow 0} \epsilon \log \P(X^\epsilon \in A) \le \limsup_{\epsilon \rightarrow 0} \epsilon \log \P(X^\epsilon \in A) \le -\inf_{x \in \bar{A}} \Lambda^*(x) \:.
\]
\end{definition}
\begin{definition}
Let $f \, : \R^n \rightarrow \R \cup \{+\infty\}$ be a 
convex function with domain $\D:=\{x\in \R^n: f(x)<\infty\}$. $f$ is
called \textbf{\textit{essentially smooth}} if $f$ is differentiable
on $\overset{\circ}{\D} \neq \emptyset$ and for every $x \in \bar{\D}
\backslash \overset{\circ}{\D}$, $\lim_{y \rightarrow x} ||\nabla
f(y)|| = +\infty$.
\end{definition}
The following theorem is the celebrated G\"artner-Ellis theorem of the
large deviations theory. \cite{De:Ze} give a version of this
theorem for a family of random variables parameterized by an integer
number (see paragraph 2.3 in their book), but the version for families
parameterized by a real number is easily deduced from the abstract
G\"artner-Ellis theorem given in paragraph 4.5.3. 
\begin{theorem}[G\"artner-Ellis]\label{thm:GartnerEllis}
Let $\left(X^\epsilon\right)_{\epsilon >0}$ be a family of random vectors in $\R^n$. Assume that for each $\lambda \in \R^n$,
\begin{equation}\label{eq:LambdaDefinition}
\Lambda(\lambda) := \lim_{\epsilon \rightarrow 0} \epsilon \log \E\left[ e^{\frac{\q{\lambda,X^\epsilon}}{\epsilon}}\right] 
\end{equation}
exists as an extended real number. Assume also that 0 belongs to the interior of $D_\Lambda:=\{\lambda \in \R^n \,:\, \Lambda(\lambda) < \infty\}$. Denoting 
\[
\Lambda^*(x) = \sup_{\lambda\in \R^n} \q{\lambda,x} - \Lambda(\lambda)\:, 
\]
the Fenchel-Legendre transform of $\Lambda$, the following hold.
\begin{enumerate}[(a)]
\item For any closed set $F$,
\[
\limsup_{\epsilon \rightarrow 0} \epsilon \, \log \P(X^\epsilon \in F) \le -\inf_{x \in F} \Lambda^*(x) \:.
\]
\item For any open set $G$,
\[
\liminf_{\epsilon \rightarrow 0} \epsilon \, \log \P(X^\epsilon \in G) \ge -\inf_{x \in G \cap \F} \Lambda^*(x) \:,
\]
where $\F$ is the set of exposed points of $\Lambda^*$, whose exposing hyperplane belongs to the interior of $D_\Lambda$.
\item If $\Lambda$ is an essentially smooth, lower semi-continuous function, then $(X^\epsilon)_{\epsilon>0}$ satisfies a large deviations principle with good rate function $\Lambda^*$. 
\end{enumerate}
\end{theorem}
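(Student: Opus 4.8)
The plan is to follow the classical three-step route. Write $\Lambda_\epsilon(\lambda):=\epsilon\log\E[e^{\langle\lambda,X^\epsilon\rangle/\epsilon}]$, so that $\Lambda=\lim_{\epsilon\to0}\Lambda_\epsilon$ pointwise; each $\Lambda_\epsilon$ is convex in $\lambda$ by H\"older's inequality, hence so is $\Lambda$, while $\Lambda^*$ is convex and lower semicontinuous as a supremum of affine functions and $\Lambda(0)=0$ gives $\Lambda^*\ge 0$, so $\Lambda^*$ is a rate function. I would first prove the upper bound (a) for compact sets: given a compact $F$ and $\delta>0$, to each $x\in F$ associate $\lambda_x$ with $\langle\lambda_x,x\rangle-\Lambda(\lambda_x)\ge\min(\Lambda^*(x),1/\delta)-\delta$, apply the exponential Markov inequality on a ball $B_{\rho_x}(x)$ with $\rho_x\|\lambda_x\|\le\delta$, cover $F$ by finitely many such balls and take a union bound; since $\Lambda_\epsilon\to\Lambda$, this yields $\limsup_\epsilon\epsilon\log\P(X^\epsilon\in F)\le-\min(\inf_F\Lambda^*,1/\delta)+2\delta$, and $\delta\downarrow0$ concludes. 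To pass to all closed sets I would invoke exponential tightness, which is exactly where $0\in\overset{\circ}{D_\Lambda}$ enters: there is $\rho>0$ with $\Lambda(\pm\rho e_i)<\infty$ for $i=1,\dots,n$, so $\limsup_\epsilon\epsilon\log\P(\|X^\epsilon\|_\infty\ge M)\le-\rho M+C$ and $[-M,M]^n$ carries all but exponentially negligible mass when $M$ is large; the same estimate gives $\Lambda^*(x)\ge\rho\|x\|-C'$, so the level sets of $\Lambda^*$ are compact and $\Lambda^*$ is a \emph{good} rate function.

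For the lower bound I would argue near a single exposed point. Let $x$ be an exposed point of $\Lambda^*$ with exposing hyperplane $\eta\in\overset{\circ}{D_\Lambda}$; since $\Lambda$ is convex and $\eta$ lies in the interior of $D_\Lambda$ one has $\Lambda^{**}(\eta)=\Lambda(\eta)$, and the exposed-point property then forces $\Lambda^*(x)=\langle\eta,x\rangle-\Lambda(\eta)$. Introduce the tilted probabilities $d\P^\epsilon_\eta/d\P=\exp(\langle\eta,X^\epsilon\rangle/\epsilon-\Lambda_\epsilon(\eta)/\epsilon)$. Bounding $\langle\eta,X^\epsilon\rangle$ on $\{X^\epsilon\in B_\rho(x)\}$ gives
\[
\epsilon\log\P(X^\epsilon\in B_\rho(x))\ \ge\ \Lambda_\epsilon(\eta)-\langle\eta,x\rangle-\rho\|\eta\|+\epsilon\log\P^\epsilon_\eta\big(X^\epsilon\in B_\rho(x)\big),
\]
and the first three terms tend to $-\Lambda^*(x)-\rho\|\eta\|$. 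For the last term, note that under $\P^\epsilon_\eta$ the limiting cumulant generating function of $X^\epsilon$ is $\lambda\mapsto\Lambda(\lambda+\eta)-\Lambda(\eta)$, finite near $0$ because $\eta\in\overset{\circ}{D_\Lambda}$; hence Step 1 applies to this tilted family, with rate function $z\mapsto\Lambda^*(z)-\langle\eta,z\rangle+\Lambda(\eta)$, which is good, vanishes only at $z=x$ (uniqueness of the maximiser in the exposed-point condition), and is therefore bounded below on $\{\|z-x\|\ge\rho\}$; consequently $\P^\epsilon_\eta(X^\epsilon\notin B_\rho(x))\to0$ exponentially fast, so the last term is $o(1)$. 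Letting $\rho\downarrow0$ (any open $G\ni x$ contains some $B_\rho(x)$) gives $\liminf_\epsilon\epsilon\log\P(X^\epsilon\in G)\ge-\Lambda^*(x)$, and the supremum over $x\in G\cap\F$ is (b).

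For (c) I would combine (a) and (b) with the convex-analysis facts that, when $\Lambda$ is convex, lower semicontinuous and essentially smooth, $\partial\Lambda(\eta)=\{\nabla\Lambda(\eta)\}$ is nonempty exactly for $\eta\in\overset{\circ}{D_\Lambda}$, each $y=\nabla\Lambda(\eta)$ with $\eta\in\overset{\circ}{D_\Lambda}$ is an exposed point of $\Lambda^*$ with exposing hyperplane $\eta$, and consequently $\overset{\circ}{D_{\Lambda^*}}\subseteq\F$ (here $D_{\Lambda^*}=\{x:\Lambda^*(x)<\infty\}$). Given an open $G$ and $x\in G$ with $\Lambda^*(x)<\infty$, and fixing any $x_0\in\overset{\circ}{D_{\Lambda^*}}$, the points $x_t=(1-t)x_0+tx$ with $t\uparrow1$ lie in $G\cap\overset{\circ}{D_{\Lambda^*}}\subseteq G\cap\F$ and satisfy $\Lambda^*(x_t)\to\Lambda^*(x)$ (convexity of $\overset{\circ}{D_{\Lambda^*}}$, convexity and lower semicontinuity of $\Lambda^*$); hence $\inf_{G\cap\F}\Lambda^*=\inf_G\Lambda^*$, and (a) and (b) upgrade to the full large deviations principle with good rate function $\Lambda^*$.

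I expect the main obstacle to be twofold. The convex-analysis input of Step 3 -- that $\partial\Lambda$ is empty on $\partial D_\Lambda$ and that the exposed points reached by $\nabla\Lambda$ over $\overset{\circ}{D_\Lambda}$ exhaust the interior of $D_{\Lambda^*}$, together with the correct handling of the relative boundary of $D_{\Lambda^*}$ and of the degenerate case $\overset{\circ}{D_{\Lambda^*}}=\emptyset$ -- is the genuinely delicate point, and is precisely what essential smoothness buys. The other subtlety is the change of measure in Step 2: one needs control of the tilted family $(X^\epsilon,\P^\epsilon_\eta)$ that is uniform in $\epsilon$ in order to apply the upper bound to it, and this uniformity comes exactly from the finiteness of $\Lambda$ on a neighbourhood of $\eta$, i.e. $\eta\in\overset{\circ}{D_\Lambda}$.
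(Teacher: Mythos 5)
The paper does not actually prove this theorem: it is quoted from Dembo and Zeitouni (Sections 2.3 and 4.5.3), so there is no in-paper argument to compare against, and your sketch is essentially a faithful reproduction of that classical proof — Chernoff bound plus covering for compact sets, exponential tightness from $0\in\overset{\circ}{D_\Lambda}$ to pass to closed sets and to get goodness of $\Lambda^*$, exponential tilting at an exposed point with the upper bound applied to the tilted family for (b), and the convex-analysis step for (c). All of this is correct; note in passing that in (b) you do not even need the identity $\Lambda^*(x)=\langle\eta,x\rangle-\Lambda(\eta)$ (the inequality $\Lambda(\eta)-\langle\eta,x\rangle\ge-\Lambda^*(x)$ already gives the right prefactor), though your derivation of it via $\Lambda^{**}(\eta)=\Lambda(\eta)$ at interior points is sound. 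The one loose end you flag but do not close is the degenerate case $\overset{\circ}{D_{\Lambda^*}}=\emptyset$ in (c); the standard fix is to work with the relative interior: for $y\in\mathrm{ri}(D_{\Lambda^*})$ the subdifferential $\partial\Lambda^*(y)$ is nonempty, any $\eta\in\partial\Lambda^*(y)$ satisfies $y\in\partial\Lambda^{**}(\eta)=\partial\Lambda(\eta)$ (lower semicontinuity), essential smoothness forces $\eta\in\overset{\circ}{D_\Lambda}$ with $y=\nabla\Lambda(\eta)$, and differentiability at $\eta$ makes $y$ exposed exactly as in your argument, so $\mathrm{ri}(D_{\Lambda^*})\subseteq\F$; your segment argument then goes through verbatim with $x_0\in\mathrm{ri}(D_{\Lambda^*})$, since $x_t\in\mathrm{ri}(D_{\Lambda^*})$ for $t<1$ and $\Lambda^*(x_t)\to\Lambda^*(x)$ by convexity and lower semicontinuity. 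With that substitution your outline is a complete and correct proof along the same lines as the cited source.
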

\begin{remark}\label{rem:ConvexityOfLambda} 
The function $\Lambda$ of \eqref{eq:LambdaDefinition} is a convex function. Indeed, let $\lambda, \mu \in \R^n$ and $u \in (0,1)$. A direct application of H\"older's inequality yields
\[
\E\left[ e^{\frac{\q{u \lambda + (1-u)\mu , X^\epsilon}}{\epsilon}}\right] = \E\left[ e^{\frac{\q{u \lambda, X^\epsilon}}{\epsilon}} e^{\frac{\q{(1-u)\mu , X^\epsilon}}{\epsilon}}\right] \le \left(\E\left[ e^{\frac{\q{\lambda, X^\epsilon}}{\epsilon}}\right]\right)^u \left(\E\left[e^{\frac{\q{\mu , X^\epsilon}}{\epsilon}}\right]\right)^{1-u} \:.
\]
Applying the logarithm then proves that $\lambda \mapsto \log \E\left[ e^{\frac{\q{\lambda,X^\epsilon}}{\epsilon}}\right]$ and therefore $\Lambda$ are convex.
\end{remark}
\begin{theorem}[Varadhan's Lemma, extension of \cite{Gu:Ro2008}]\label{thm:Varadhan}
Let $(\X,\B)$ be a metric space with its Borel $\sigma$-field. Let $(X^\epsilon)_{\epsilon >0}$ be a family of $\X$-valued random variables that satisfies a large deviations principle with rate function~$\Lambda^*$. If $\varphi: \X \rightarrow \R \cup \{-\infty\}$ is a continuous function which satisfies 
\[
\limsup_{\epsilon \rightarrow 0} \epsilon\,\log \E\left[\exp\left(\frac{\gamma\,\varphi(X^\epsilon)}{\epsilon}\right) \right] < \infty
\]
for some $\gamma > 1$, then, for any $A\in \B$, 
\begin{multline*}
\sup_{x\in A^\circ} \{ \varphi(x)-\Lambda^*(x) \} \leq
\liminf_{\epsilon\rightarrow 0} \epsilon \,\log\int_{A^\circ}
\exp\left( \frac{\varphi(z)}{\epsilon} \right)
d\mu_\epsilon(z) \\\leq
\limsup_{\epsilon\rightarrow 0} \epsilon \,\log\int_{\bar A}
\exp\left( \frac{\varphi(z)}{\epsilon} \right)
d\mu_\epsilon (z)= \sup_{x\in\bar A} \{ \varphi(x)-\Lambda^*(x) \}\:,
\end{multline*}
where $\mu^\epsilon$ denotes the law of $X^\epsilon$
\end{theorem}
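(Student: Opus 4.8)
The plan is to reduce the claim to two one-sided estimates and to glue them with the elementary inclusion $A^\circ\subseteq A\subseteq\bar A$. The first is a lower bound over the interior,
\[
\liminf_{\epsilon\to0}\epsilon\log\int_{A^\circ}\exp(\varphi(z)/\epsilon)\,d\mu_\epsilon(z)\ \ge\ \sup_{x\in A^\circ}\{\varphi(x)-\Lambda^*(x)\},
\]
which will use only the large deviations lower bound on open sets and the continuity of $\varphi$; the second is an upper bound over the closure,
\[
\limsup_{\epsilon\to0}\epsilon\log\int_{\bar A}\exp(\varphi(z)/\epsilon)\,d\mu_\epsilon(z)\ \le\ \sup_{x\in\bar A}\{\varphi(x)-\Lambda^*(x)\},
\]
which will use the large deviations upper bound on closed sets together with the integrability hypothesis. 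Since $\int_{A^\circ}\le\int_{\bar A}$, the middle inequality of the claim is immediate, so the two displays yield the whole chain.

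For the lower bound, fix $x_0\in A^\circ$; one may assume $\varphi(x_0)>-\infty$, otherwise the corresponding term of the supremum is $-\infty$. Given $\delta>0$, continuity of $\varphi$ furnishes an open $G$ with $x_0\in G\subseteq A^\circ$ and $\varphi\ge\varphi(x_0)-\delta$ on $G$, so $\int_{A^\circ}\exp(\varphi/\epsilon)\,d\mu_\epsilon\ge\exp((\varphi(x_0)-\delta)/\epsilon)\,\mu_\epsilon(G)$. Applying $\epsilon\log(\cdot)$, passing to $\liminf_{\epsilon\to0}$, and invoking $\liminf_{\epsilon\to0}\epsilon\log\mu_\epsilon(G)\ge-\inf_G\Lambda^*\ge-\Lambda^*(x_0)$ gives $\varphi(x_0)-\delta-\Lambda^*(x_0)$ on the right; now let $\delta\downarrow0$ and take the supremum over $x_0\in A^\circ$.

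The upper bound is the core of the argument, and this is where the assumption $\gamma>1$ enters, to neutralise the region where $\varphi$ is large. Fix $M>0$ and split the integral over $\{\varphi\le M\}$ and $\{\varphi>M\}$. On $\{\varphi>M\}$, the identity $\exp(\varphi/\epsilon)=\exp(\gamma\varphi/\epsilon)\exp(-(\gamma-1)\varphi/\epsilon)$ together with $\varphi>M$ gives the bound $\exp(-(\gamma-1)M/\epsilon)\exp(\gamma\varphi/\epsilon)$, so this piece is at most $\exp(-(\gamma-1)M/\epsilon)\,\E[\exp(\gamma\varphi(X^\epsilon)/\epsilon)]$, whose $\limsup_{\epsilon\to0}\epsilon\log$ is $\le-(\gamma-1)M+C$ with $C:=\limsup_{\epsilon\to0}\epsilon\log\E[\exp(\gamma\varphi(X^\epsilon)/\epsilon)]<\infty$; this tends to $-\infty$ as $M\to\infty$. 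For the piece on $\{\varphi\le M\}$, partition $(-\infty,M]$ into slabs $(M-(m+1)\alpha,\,M-m\alpha]$ of mesh $\alpha>0$; the sets $F_m:=\bar A\cap\{\varphi\ge M-(m+1)\alpha\}$ are closed by continuity of $\varphi$, the $m$-th slab contributes at most $\exp((M-m\alpha)/\epsilon)\,\mu_\epsilon(F_m)$, and the large deviations upper bound gives $\limsup_{\epsilon\to0}\epsilon\log\mu_\epsilon(F_m)\le-\inf_{F_m}\Lambda^*$. Summing over $m$ and truncating the series at an index $N$: the head $\sum_{m<N}$ has $\limsup_{\epsilon\to0}\epsilon\log(\cdot)\le\max_{m<N}\{M-m\alpha-\inf_{F_m}\Lambda^*\}\le\sup_{\bar A}\{\varphi-\Lambda^*\}+\alpha$, because $M-m\alpha\le\varphi(x)+\alpha$ for every $x\in F_m$; the tail is dominated by $\sum_{m\ge N}\exp((M-m\alpha)/\epsilon)\le 2\exp((M-N\alpha)/\epsilon)$ for $\epsilon$ small, contributing at most $M-N\alpha$. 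Letting $N\to\infty$, then $\alpha\downarrow0$, then $M\to\infty$ produces the announced upper bound.

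The only genuinely delicate point is this upper bound: the level-set partition combined with the $\gamma$-moment control is what replaces the compactness of the level sets of $\Lambda^*$ (or exponential tightness) normally invoked in Varadhan-type arguments, and one must be careful with the infinite series and with the order of the limits $N\to\infty$, $\alpha\downarrow0$, $M\to\infty$. Assembling the lower bound, the inclusion $A^\circ\subseteq\bar A$, and the upper bound then yields the displayed chain of relations.
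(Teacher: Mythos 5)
The paper never proves Theorem \ref{thm:Varadhan}: it is quoted as a known extension of Varadhan's lemma from \cite{Gu:Ro2008}, so there is no internal proof to compare against. Your argument is the classical Dembo--Zeitouni-style proof adapted to sets, and it is correct: the lower bound over $A^\circ$ follows from a small open neighbourhood of $x_0$ and the LDP lower bound; the upper bound over $\bar A$ follows from the split at level $M$, the estimate $e^{\varphi/\epsilon}\le e^{-(\gamma-1)M/\epsilon}e^{\gamma\varphi/\epsilon}$ on $\{\varphi>M\}$ (exactly where $\gamma>1$ and the moment hypothesis enter), and the slab decomposition of $\{\varphi\le M\}$ into the closed sets $F_m$, with the infinite tail of the series controlled by $\mu_\epsilon\le 1$ and a geometric sum; the order of limits $N\to\infty$, then $\alpha\downarrow 0$, then $M\to\infty$ is the right one, and the set $\{\varphi=-\infty\}$ contributes nothing. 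Note also that your route needs neither goodness of $\Lambda^*$ nor exponential tightness, consistent with the statement, which only assumes a rate function.

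One caveat: the statement as printed ends with an equality, $\limsup_{\epsilon\to0}\epsilon\log\int_{\bar A}e^{\varphi/\epsilon}\,d\mu_\epsilon=\sup_{\bar A}\{\varphi-\Lambda^*\}$, whereas your proof (rightly) delivers only ``$\le$''. The equality cannot follow from the stated hypotheses: the LDP lower bound would only give $\ge\sup_{(\bar A)^\circ}\{\varphi-\Lambda^*\}$, and, for instance, if $A=\{x_0\}$ is a singleton with $\Lambda^*(x_0)<\infty$ and the laws $\mu_\epsilon$ are atomless, the left-hand side is $-\infty$ while the right-hand side is finite. So either treat the ``$=$'' as a misprint for ``$\le$'' (only the inequalities are used later, e.g.\ in Lemma \ref{bounds.lm} and Proposition \ref{prop:AsymptoticProblem}), or state explicitly that you prove the inequality version; apart from this mismatch with the printed statement, your proof is complete.
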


\subsection{Long-time behaviour of the Laplace transform of the log-price}

Let $T > 0$ and define the transformation $Y_T^\epsilon := \epsilon Y_{T/\epsilon}$, which corresponds to the long-time behaviour of $Y_T$. We are interested in the function 
\[
\theta \mapsto \lim_{\epsilon \rightarrow 0} \epsilon \, \log\E\left[e^{\epsilon^{-1}\theta^\top Y_T^\epsilon}\right]\:.
\]
We first give the following lemma.
\begin{lemma}\label{lemma:MatrixBound}
Let $A,B \in \M_n$ such that $A+t B$ est invertible for all
$t\geq t_0$. Then, $(A+t B)^{-1} tB$ is bounded for all sufficiently
large $t$. 
\end{lemma}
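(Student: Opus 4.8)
The statement is essentially a linear-algebra fact about the large-$t$ behaviour of $(A+tB)^{-1}tB$, so the plan is to reduce it to a normal form in which the blow-up is visible and then bound each block. First I would write everything in terms of the pencil $A + tB$. Dividing through by $t$, we have $(A+tB)^{-1} tB = (t^{-1}A + B)^{-1} B$, so equivalently I need to show that $(s A + B)^{-1} B$ stays bounded as $s = 1/t \to 0^+$. Invertibility of $A + tB$ for all $t \ge t_0$ means $sA + B$ is invertible for all sufficiently small $s>0$; in particular the matrix pencil $\{sA+B : s\in\R\}$ is regular (not identically singular). The danger is precisely that $B$ itself (the value at $s=0$) may be singular, so one cannot simply pass to the limit $s\to 0$ inside the inverse.

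The key step is to use a Schur-type / Weierstrass canonical form for the regular pencil. Choose invertible $P, Q \in \M_n$ (e.g.\ from the Kronecker--Weierstrass decomposition, valid since the pencil is regular) so that $P(sA+B)Q$ is block-diagonal with one block of the form $s I + N$ with $N$ nilpotent (the "infinite eigenvalue" part of the pencil $A+tB$, i.e.\ the part where $B$ is singular) and one block of the form $sJ + I$ with $J$ some matrix (the finite part). On the finite block, $(sJ+I)^{-1}$ is analytic near $s=0$ and tends to $I$, and it is multiplied by the corresponding block of $B$ which is simply $I$ there, so that block of $(sA+B)^{-1}B$ converges and is bounded. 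On the nilpotent block, $(sI+N)^{-1}$ blows up like $s^{-1}$, but the relevant block of $B$ after the change of variables is exactly $N$ (since $P B Q$ restricted to that block is $N$), and $(sI+N)^{-1}N = I - s(sI+N)^{-1} + \cdots$ — more precisely, writing $(sI+N)^{-1} = \sum_{k\ge 0} (-1)^k s^{-k-1} N^k \cdot$ (adjusting; since $N$ nilpotent this is $\sum_{k} (-1)^k N^k s^k$-type finite expansion once one inverts correctly) one checks $(sI+N)^{-1}N$ is a polynomial in $s$ with no negative powers, hence bounded as $s\to 0$. Reassembling via $Q$ and $P$, which are fixed invertible matrices, $(sA+B)^{-1}B = Q\,\mathrm{blockdiag}(\dots)\,P^{-1}$ is bounded for all small $s>0$, i.e.\ $(A+tB)^{-1}tB$ is bounded for all large $t$.

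An alternative, perhaps cleaner, route avoids canonical forms: the entries of $(A+tB)^{-1}tB$ are, by Cramer's rule, ratios $p(t)/q(t)$ of polynomials in $t$, where $q(t) = \det(A+tB)$. Since $A+tB$ is invertible for all $t \ge t_0$, the polynomial $q$ has no zeros on $[t_0,\infty)$, and in particular the rational functions $p(t)/q(t)$ have no poles there. Boundedness at $+\infty$ then reduces to the claim $\deg p \le \deg q$ for each entry, which one sees from the cofactor expansion: each entry of the adjugate of $A+tB$ has degree at most $n-1$, multiplying by $tB$ raises this by $1$ to at most $n$, while $\deg q \le n$ as well — and one must check the leading behaviour matches, i.e.\ that when $\deg q < n$ the numerator degree drops accordingly, which is exactly the regularity of the pencil again. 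I would present the first (canonical-form) argument as the main one since it makes the mechanism transparent, and the main obstacle is simply being careful that the nilpotent ("infinite eigenvalue") block really does get multiplied by a matrix that kills the negative power of $s$ — that is where the hypothesis "$A+tB$ invertible for large $t$" does its work, and it must be invoked precisely rather than hand-waved.
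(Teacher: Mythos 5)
Your reduction to the pencil $sA+B$ with $s=1/t$ and to its Weierstrass form is legitimate, but the decisive step is false. On the block where the pencil reads $sI+N$ with $N$ nilpotent, the corresponding block of $(sA+B)^{-1}B$ is $(sI+N)^{-1}N=\sum_{j\ge 1}(-1)^{j-1}s^{-j}N^{j}$, which is \emph{not} free of negative powers of $s$: it blows up like $s^{-(m-1)}$, where $m$ is the nilpotency index, as soon as $N\neq 0$. Nothing in the hypothesis excludes $N\neq 0$: since $\det(sI+N)=s^{\dim}\neq 0$ for every $s>0$, invertibility of $A+tB$ for all $t\ge t_0$ is perfectly compatible with a nontrivial nilpotent block at the eigenvalue $s=0$. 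Concretely, take $n=2$, $t_0=0$, $A=I_2$, $B=\left(\begin{smallmatrix}0&1\\0&0\end{smallmatrix}\right)$: then $A+tB$ is invertible for every $t\in\R$, while $(A+tB)^{-1}tB=tB$ is unbounded. The same example kills your alternative Cramer's-rule route: there $\det(A+tB)\equiv 1$ has degree $0$ while an entry of the numerator has degree $1$, so regularity of the pencil does not force the degree inequality you need. In short, the point you flagged as ``the main obstacle'' is not a verification to be carried out carefully; it is exactly where the argument --- and, in this generality, the statement itself --- breaks down.

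For comparison, the paper's proof writes $(A+tB)^{-1}tB=\{I+(t-t_0)\tilde B\}^{-1}(t-t_0)\tilde B\cdot\frac{t}{t-t_0}$ with $\tilde B=(A+t_0B)^{-1}B$ and argues via the growth of $\det[I+(t-t_0)\tilde B]$; that argument rests on the eigenvalues of $\tilde B$ being nonzero (more precisely, on the zero eigenvalue being semisimple), which is again exactly the absence of the nilpotent block above and is not implied by the stated hypothesis. What makes the conclusion true where the lemma is applied, in the proof of Proposition \ref{prop:ExponentialCondition}, is the extra structure $A=\hat b^{-1}+\tilde{\Ec}\in\SPP$ and $B=\Ec\in\SP$: setting $C:=A^{-1/2}BA^{-1/2}\in\SP$, one has $(A+tB)^{-1}tB=A^{-1/2}(I+tC)^{-1}tC\,A^{1/2}$, and $(I+tC)^{-1}tC$ is symmetric with eigenvalues $t\mu/(1+t\mu)\in[0,1)$, hence uniformly bounded for $t\ge0$. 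Any complete proof must invoke an assumption of this kind (symmetry and positive semidefiniteness, or semisimplicity of the zero eigenvalue of the pencil); with the hypothesis exactly as written, no proof can close the gap, because the conclusion fails.
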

\begin{proof}
Since $A+t_0 B$ is invertible, for all $t\geq t_0$, 
$$
(A+t B)^{-1} tB = \Big\{I  +(t-t_0) \tilde B\Big\}^{-1} (t-t_0)
\tilde B \frac{t}{t-t_0},
$$
where $\tilde B = (A+t_0B)^{-1} B$. Now, the fact that $A+tB$ est
invertible for $t\geq t_0$ means that the eigenvalues $\lambda_i$ of
$\tilde B$ satisfy $\lambda_i>0$ or $\Im \lambda_i \neq 0$ for all
$i$. { This implies $\det[I  +(t-t_0) \tilde B]\underset{t\rightarrow +\infty}\sim c t^n$ for some $c\not =0$, and since the adjugate matrix of $I  +(t-t_0) \tilde B$ has coefficients of order $\mathcal{O}(t^{n-1})$, we get that $\Big\{I  +(t-t_0) \tilde B\Big\}^{-1}$ is bounded for $t\ge t_0$.} Therefore, 
$\Big\{I  +(t-t_0) \tilde B\Big\}^{-1} (t-t_0)
\tilde B = I-\Big\{I  +(t-t_0) \tilde B\Big\}^{-1}$ is bounded, and $(A+t B)^{-1} tB$ as well, whenever $t$ is sufficiently large. 
\end{proof}
We now characterise the asymptotic behaviour of the Laplace transform of $Y_t^\epsilon$.
\begin{proposition}\label{prop:ExponentialCondition}
Define 
\begin{equation}\label{eq:Lambda}
\Lambda(\theta) := 
\begin{cases}
T\left( r \, \theta^{\!\top} \! \mathbf{1} - \frac{\alpha}{2} \, \tr{b + \phi^{1/2}(\theta)} \right) & \text{ if } \theta \in \U \\
\infty & \text{ if } \theta \not\in \U 
\end{cases} \:.
\end{equation}
For every $\theta \in \U$,
\[
\lim_{\epsilon \rightarrow 0} \epsilon \log \E\left[e^{\epsilon^{-1}\,\theta^\top Y_T^\epsilon}\right] = \Lambda(\theta) \:.
\]
\end{proposition}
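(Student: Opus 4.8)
The plan is to feed the explicit Laplace transform of Proposition~\ref{prop:LaplaceExplicitForm} into the definition and read off the limit directly. Since $Y_T^\epsilon=\epsilon Y_{T/\epsilon}$, we have $\epsilon\log\E[e^{\epsilon^{-1}\theta^\top Y_T^\epsilon}]=\epsilon\log\E[e^{\theta^\top Y_{T/\epsilon}}]$, and substituting $t=T/\epsilon$ into the compact form \eqref{eq:LaplaceEquation} yields
\[
\epsilon\log\E\big[e^{\theta^\top Y_{T/\epsilon}}\big]
=\epsilon\,\theta^\top Y_0+r\,\theta^\top\mathbf 1\,T-\frac{\alpha}{2}\tr{b}\,T
-\frac{\epsilon}{2}\tr{\big(V'(T/\epsilon)V^{-1}(T/\epsilon)+b\big)x}
-\frac{\alpha\epsilon}{2}\log\det V(T/\epsilon).
\]
The first summand tends to $0$, and the second and third are already the first two terms of $\Lambda(\theta)$. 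Hence the statement reduces to two claims: (i) $\tr{(V'(t)V^{-1}(t)+b)x}$ stays bounded as $t\to\infty$, and (ii) $\tfrac1t\log\det V(t)\to \tr{\phi^{1/2}(\theta)}$, so that $\tfrac{\alpha\epsilon}{2}\log\det V(T/\epsilon)\to\tfrac{\alpha}{2}T\,\tr{\phi^{1/2}(\theta)}$ and the three pieces add up to $\Lambda(\theta)=T\big(r\theta^\top\mathbf 1-\tfrac{\alpha}{2}\tr{b+\phi^{1/2}(\theta)}\big)$.

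To handle $V(t)$ uniformly — including the degenerate case $\phi(\theta)\in\SP\setminus\SPP$ — I would diagonalise $\phi(\theta)=PDP^\top$ with $P$ orthogonal and $D=\diag(d_1,\dots,d_n)$, $d_i\ge0$, as in the proof of Proposition~\ref{prop:LaplaceExplicitForm}, and set $\hat b=-P^\top b\,P\in\SPP$ (using $-b\in\SPP$). Then $V(t)=P\,M(t)\,P^\top$ with $M(t)=\Gamma(t)+\Sigma(t)\hat b$, $\Gamma(t)=\cosh(tD^{1/2})$, $\Sigma(t)=\sinh(tD^{1/2})D^{-1/2}$, both diagonal with positive entries, the entries of $\Sigma(t)$ being $\ge t$. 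Writing $M(t)=\Sigma(t)\big(\Sigma(t)^{-1}\Gamma(t)+\hat b\big)$ and observing that the diagonal matrix $\Sigma(t)^{-1}\Gamma(t)$ has entries $\sqrt{d_i}\coth(t\sqrt{d_i})\to\sqrt{d_i}$ (read as $1/t\to0$ when $d_i=0$), one gets $\Sigma(t)^{-1}\Gamma(t)+\hat b\to D^{1/2}+\hat b\in\SPP$. Consequently $M(t)$, hence $V(t)$, is invertible for $t$ large, $M(t)^{-1}=\big(\Sigma(t)^{-1}\Gamma(t)+\hat b\big)^{-1}\Sigma(t)^{-1}\to0$, so $V^{-1}(t)\to0$, and
\[
\log\det V(t)=\log\det\Sigma(t)+\log\det\big(\Sigma(t)^{-1}\Gamma(t)+\hat b\big)=t\,\tr{D^{1/2}}+\mathcal O(\log t),
\]
since $\log\det\Sigma(t)=\sum_i\log\!\big(\sinh(t\sqrt{d_i})/\sqrt{d_i}\big)=t\sum_i\sqrt{d_i}+\mathcal O(\log t)$ and $\tr{D^{1/2}}=\tr{\phi^{1/2}(\theta)}$. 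This proves (ii), the $\mathcal O(\log t)$ error being killed by the factor $\epsilon$.

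For (i), I would use the identity $V'(t)=\phi^{1/2}(\theta)V(t)-\exp(-t\phi^{1/2}(\theta))\big(b+\phi^{1/2}(\theta)\big)$ already derived inside the proof of Proposition~\ref{prop:LaplaceExplicitForm}, which gives
\[
V'(t)V^{-1}(t)+b=\phi^{1/2}(\theta)+b-\exp\!\big(-t\phi^{1/2}(\theta)\big)\big(b+\phi^{1/2}(\theta)\big)V^{-1}(t).
\]
Here $\exp(-t\phi^{1/2}(\theta))$ has eigenvalues in $(0,1]$ and is bounded, $b+\phi^{1/2}(\theta)$ is constant, and $V^{-1}(t)\to0$ by the previous paragraph, so the right-hand side converges to $\phi^{1/2}(\theta)+b$; in particular $\tr{(V'(t)V^{-1}(t)+b)x}$ is bounded and $\tfrac{\epsilon}{2}\tr{(V'(T/\epsilon)V^{-1}(T/\epsilon)+b)x}\to0$. (In the fully degenerate direction $\phi(\theta)=0$ one has $V(t)=I-tb$, exactly the form of Lemma~\ref{lemma:MatrixBound}, which gives the boundedness of $V'(t)V^{-1}(t)=-b(I-tb)^{-1}$ at once.) Collecting the three limits yields $\lim_{\epsilon\to0}\epsilon\log\E[e^{\epsilon^{-1}\theta^\top Y_T^\epsilon}]=\Lambda(\theta)$.

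The step I expect to be the real obstacle is the degenerate regime $\phi(\theta)\in\SP\setminus\SPP$: there $\phi^{1/2}(\theta)$ is singular, the symbol $\phi^{-1/2}\sinh(t\phi^{1/2})$ must be read through the power series $\sum_k\phi(\theta)^k t^{2k+1}/(2k+1)!$, the kernel directions of $\phi(\theta)$ make $V(t)$ grow only linearly, and $\det V(t)$ picks up the $\log t$ correction. The point of diagonalising and of exploiting $\hat b=-P^\top b P\in\SPP$ is precisely that it forces all eigendirections — exponentially growing and linearly growing alike — to behave well simultaneously, so that after multiplication by $\epsilon$ everything but the leading term $T\,\tr{\phi^{1/2}(\theta)}$ vanishes.
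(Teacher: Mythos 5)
Your proof is correct, and while it starts from the same point as the paper (plugging the explicit Laplace transform of Proposition~\ref{prop:LaplaceExplicitForm} into the limit, diagonalising $\phi(\theta)=PDP^\top$ with $\hat b=-P^\top bP\in\SPP$, and reducing everything to the two facts $V^{-1}(t)\to 0$ and $t^{-1}\log\det V(t)\to\tr{\phi^{1/2}(\theta)}$), the technical core is genuinely different. The paper expands $\cosh(tD^{1/2})$ and $\sinh(tD^{1/2})D^{-1/2}$ with $\smallO{t^{-1}}$ error terms, introduces the projection matrices $\Ec,\tilde\Ec$ to isolate the kernel directions of $\phi(\theta)$, invokes Lemma~\ref{lemma:MatrixBound} to bound $\bigl(\hat b^{-1}+t\Ec+\tilde\Ec\bigr)^{-1}t\Ec$, and finally extracts $\epsilon\log\det V(T/\epsilon)$ through a $-n\epsilon\log\epsilon$ term and an argument that a residual determinant is a non-vanishing polynomial in $\epsilon$. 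Your factorisation $V(t)=P\,\Sigma(t)\bigl(\Sigma(t)^{-1}\Gamma(t)+\hat b\bigr)P^\top$ with $\Sigma(t)^{-1}\Gamma(t)\to D^{1/2}$ (entrywise $\sqrt{d_i}\coth(t\sqrt{d_i})$, or $1/t$ on the kernel) replaces all of this at once: since $D^{1/2}+\hat b\in\SPP$, you get $V^{-1}(t)=P\bigl(\Sigma^{-1}\Gamma+\hat b\bigr)^{-1}\Sigma^{-1}P^\top\to 0$ without the matrix-bound lemma, and $\log\det V(t)=t\,\tr{D^{1/2}}+\mathcal O(\log t)$ with the linearly growing (degenerate) directions contributing only the harmless $\log t$, so the $\epsilon$ factor kills the error with no polynomial-in-$\epsilon$ discussion. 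Your treatment of the trace term via $V'(t)V^{-1}(t)+b=\phi^{1/2}(\theta)+b-\exp(-t\phi^{1/2}(\theta))(b+\phi^{1/2}(\theta))V^{-1}(t)$ is exactly the paper's, and your diagnosis that the degenerate regime $\phi(\theta)\in\SP\setminus\SPP$ is the real difficulty is accurate; in fact a small bonus of your route is that positivity of the diagonal entries of $\Sigma(t)^{-1}\Gamma(t)$ gives invertibility of $V(t)$ for every $t>0$ directly, rather than deducing it from the existence of the Laplace transform. The trade-off is purely one of economy: your argument is shorter and more self-contained, while the paper's expansion makes the precise size of each correction term explicit.
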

\begin{proof}
Let $\theta \in \U$. By Proposition \ref{prop:LaplaceExplicitForm}, 
\begin{equation}\label{eq:LaplaceTransformEpsilon}
\begin{aligned}
\epsilon \log \E\left[e^{\epsilon^{-1}\,\theta^\top Y_T^\epsilon}\right] 
& = \epsilon \log \E\left[e^{\,\theta^\top Y_{T/\epsilon}}\right] \\
& = \epsilon \left(\theta^\top Y_0 - \frac{1}{2} \tr{\left(b + \phi^{1/2}(\theta)\right) x} \right) \\
& \qquad + \frac{1}{2} \epsilon \tr{\exp\left(- T/\epsilon \, \phi^{1/2}(\theta)\right) \left( b + \phi^{1/2}(\theta)\right) V^{-1}(T/\epsilon)\, x} \\
& \qquad + T \, r \theta^\top \mathbf{1} - \frac{T\,\alpha}{2}\tr{b} - \frac{\alpha}{2}\epsilon \log \det\left[ V(T/\epsilon)\right] \:.
\end{aligned}
\end{equation}
Write $\phi(\theta) = P D P^\top$, where $D$ is diagonal, $P$ is
orthonormal and let $\hat{b} = - P^\top b \, P \in \SPP$. Then
\[
V(t) = P \left( \cosh\left(t \, D^{1/2}\right) + \sinh\left(t \, D^{1/2}\right) D^{-1/2} \, \hat{b}\right) P^\top \:, 
\]
Let $\Ec$ and $\tilde{\Ec}$ be $n\times n$ square matrices with $\Ec_{ij} = \mathds{1}_{\{i=j, D_{ii} = 0\}}$ and $\tilde{\Ec}_{ij} =  D_{ii}^{-1/2} \mathds{1}_{\{i=j,D_{ii} \neq 0\}}$. We then have 
\[
\cosh\left(t \, D^{1/2}\right) = \frac{e^{t D^{1/2}}}{2}\left(I_n +
  e^{- 2 t D^{1/2}}\right) = \frac{e^{t D^{1/2}}}{2}\left(I_n + \Ec +
  \smallO{t^{-1}} \right)
\]
and 
\[
\sinh\left(t \, D^{1/2}\right) D^{-1/2} = \frac{e^{t D^{1/2}}}{2} D^{-1/2}\left(I_n - e^{- 2 t D^{1/2}}\right) = \frac{e^{t D^{1/2}}}{2}\left(\tilde{\Ec} + 2t \Ec + \smallO{t^{-1}} \right) \:.
\]
Therefore,
\begin{equation}\label{eq:V(t)}
\begin{aligned}
V(t) 
& = \frac{1}{2} \, P e^{t D^{1/2}} \left( \left(I_n + \Ec \right) + (2t \Ec + \tilde{\Ec}) \, \hat{b} + \smallO{t^{-1}} \right) P^\top \\
& = - \frac{1}{2} \, P \left(I_n + \Ec \right) e^{t D^{1/2}} \left( \hat{b}^{-1} + (t \, \Ec + \tilde{\Ec}) + \smallO{t^{-1}} \right) P^\top b
\end{aligned}
\end{equation}
and
\[
V^{-1}(t) 
= - 2 \, b^{-1} P \left( \hat{b}^{-1} + (t \, \Ec + \tilde{\Ec}) + \smallO{t^{-1}} \right)^{-1} e^{-t D^{1/2}} \left(I_n - \frac{1}{2} \Ec \right) P^\top
\]
where the invertibility of  $\left(\hat{b}^{-1} + (t \, \Ec + \tilde{\Ec}) + \smallO{t^{-1}} \right)$ is guaranteed for every $t \ge 0$ by the existence of the Laplace transform. Since $\hat{b}^{-1}\in \SPP$ and $(t \, \Ec + \tilde{\Ec})\in \SP$, $\hat{b}^{-1} + (t \, \Ec + \tilde{\Ec})\in \SPP$ and is therefore invertible. Hence
\[
\left(\hat{b}^{-1} + (t \, \Ec + \tilde{\Ec}) + \smallO{t^{-1}} \right) = \left(\hat{b}^{-1} + (t \, \Ec + \tilde{\Ec}) \right) \left(I_n + \smallO{t^{-1}} \right)
\]
and 
\[
V^{-1}(t) 
= - 2 \, b^{-1} P \left(I_n + \smallO{t^{-1}} \right) \left( \hat{b}^{-1} + (t \, \Ec + \tilde{\Ec}) \right)^{-1} e^{-t D^{1/2}} \left(I_n - \frac{1}{2} \Ec \right) P^\top \:.
\]
But
\begin{align*}
& \left(\hat{b}^{-1} + \left(t \, \Ec + \tilde{\Ec} \right)\right)^{-1} \!\! e^{-t D^{1/2}} \\ 
& \hspace{1cm} = \left(\hat{b}^{-1} + \left(t \, \Ec + \tilde{\Ec} \right)\right)^{-1} \left(\Ec + (I_n-\Ec) \right) \, e^{-t D^{1/2}} \\
& \hspace{1cm} = t^{-1} \left(\hat{b}^{-1} + \left(t \, \Ec + \tilde{\Ec} \right)\right)^{-1} \!\! t \Ec + \left(\hat{b}^{-1} + \left(t \, \Ec + \tilde{\Ec} \right)\right)^{-1} (I_n-\Ec) \, e^{-t D^{1/2}} \:,
\end{align*}
where $\left(\hat{b}^{-1} + \left(t \, \Ec + \tilde{\Ec} \right)\right)^{-1} \!\! t \Ec$ is bounded by Lemma \ref{lemma:MatrixBound}. Therefore, 
\[
\left(\hat{b}^{-1} + \left(t \, \Ec + \tilde{\Ec} \right)\right)^{-1} \!\! e^{-t D^{1/2}} \rightarrow 0
\]
and $V^{-1}(t) \rightarrow 0$ as $t \rightarrow \infty$. {Using \eqref{eq:V(t)}, we find
\begin{align*}
\epsilon \log \det\left[ V(T/\epsilon)\right] 
& = T \, \tr{D^{1/2}} + \epsilon \log \det\left[\frac{1}{2} \left(I_n \!+ \Ec \right) \left( I_n + (\epsilon^{-1} T \Ec \!+ \tilde{\Ec}) \hat{b} + \smallO{\epsilon} \right)\right] \\
& = T \, \tr{\phi^{1/2}(\theta)} + \epsilon \log \det\left[  \epsilon^{-1} T \Ec \hat{b} +  \frac{1}{2} \left(I_n \!+ \Ec \right) \left( I_n + \tilde{\Ec} \hat{b} \right) + \smallO{\epsilon}\right] \\
& = T \, \tr{\phi^{1/2}(\theta)} - n \epsilon \log(\epsilon)  + \epsilon \log \det\left[ T \Ec \hat{b} +  \frac{\epsilon}{2} \left(I_n + \Ec + \tilde{\Ec} \hat{b} \right) + \smallO{\epsilon^2} \right].
\end{align*}
We have $\det\left[ T \Ec \hat{b} +  \frac{\epsilon}{2} \left(I_n + \Ec + \tilde{\Ec} \hat{b} \right) + \smallO{\epsilon^2} \right] \sim_{\epsilon \rightarrow 0}  \det\left[ T \Ec \hat{b} +  \frac{\epsilon}{2} \left(I_n + \Ec + \tilde{\Ec} \hat{b} \right) \right]$, since the latter determinant is a non-zero polynomial of $\epsilon$ (for $\epsilon=2T$ the determinant is clearly positive).
}
Thus, by passing to the limit, $\lim_{\epsilon \rightarrow 0}\epsilon \log \det\left[ V(T/\epsilon)\right] = T \, \tr{\phi^{1/2}(\theta)}$. Furthermore, since $ \phi \in \SP$, $\exp\left(- \frac{T}{\epsilon} \, \phi^{1/2}(\theta)\right)$ is bounded. Therefore, 
\[
\tr{\exp\left(- \frac{T}{\epsilon} \, \phi^{1/2}(\theta)\right) \left( b + \phi^{1/2}(\theta)\right) V^{-1}(T/\epsilon)\, x} \underset{\epsilon \rightarrow 0}{\longrightarrow} 0 \:.
\]
Finally, passing to the limit in \eqref{eq:LaplaceTransformEpsilon} finishes the proof.
\end{proof}
The next proposition proves the essential smoothness of $\Lambda$.
\begin{proposition}\label{prop:EssentialSmoothness}
The function $\theta \mapsto \Lambda(\theta)$ defined in \eqref{eq:Lambda} is essentially smooth.
\end{proposition}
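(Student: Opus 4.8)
\emph{Proof plan.} We verify the two requirements in the definition of essential smoothness for $\Lambda$ of \eqref{eq:Lambda}, whose effective domain is $\D=\{\theta\in\R^n:\Lambda(\theta)<\infty\}=\U$.

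\textbf{Step 1 (topology of $\U$).} The map $\phi$ is concave for the Loewner order, since $\theta\mapsto a\,\diag(\theta)\,a^\top$ is affine and $\theta\mapsto(a\theta)(a\theta)^\top$ is Loewner-convex (as $(a\theta_1-a\theta_2)(a\theta_1-a\theta_2)^\top\in\SP$). Hence $\U=\phi^{-1}(\SP)$ is convex and closed (so $\bar\D=\U$), and $\Lambda$ is convex because $M\mapsto\tr{M^{1/2}}$ is concave and nondecreasing on $\SP$. The inclusion $\{\theta:\phi(\theta)\in\SPP\}\subseteq\overset{\circ}{\U}$ is clear and yields $\overset{\circ}{\D}\neq\emptyset$, since $\phi(0)=b^2\in\SPP$ (the hypothesis $-b\in\SPP$ makes $b$ invertible). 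Conversely, if $\phi(\theta_0)\in\SP\setminus\SPP$, pick a unit $v\in\ker\phi(\theta_0)$; then $g(\theta):=v^\top\phi(\theta)v$ is a concave quadratic, nonnegative on $\U$, with $g(\theta_0)=0$ and $g(0)=\|bv\|^2>0$, so $\theta_0$ cannot be an interior point of $\U$ (otherwise it would be an interior minimum of $g$, forcing $\nabla g(\theta_0)=0$ and hence, by concavity, $g\le0$ everywhere). Therefore $\overset{\circ}{\U}=\{\theta:\phi(\theta)\in\SPP\}$ and $\bar\D\setminus\overset{\circ}{\D}=\{\theta:\phi(\theta)\in\SP\setminus\SPP\}$.

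\textbf{Step 2 (differentiability on $\overset{\circ}{\U}$).} Since $M\mapsto M^{1/2}$ is real-analytic on $\SPP$ and $\phi$ is polynomial, $\theta\mapsto\tr{\phi^{1/2}(\theta)}$ and hence $\Lambda$ are $C^\infty$ on $\overset{\circ}{\U}$; using $d\,\tr{M^{1/2}}[H]=\tfrac12\tr{M^{-1/2}H}$ we get $\nabla\Lambda(\theta)=rT\mathbf{1}-\tfrac{T\alpha}{2}\,g(\theta)$, where $g(\theta):=\nabla_\theta\tr{\phi^{1/2}(\theta)}$ has components $g_k(\theta)=\tfrac12\tr{\phi^{-1/2}(\theta)\,\partial_{\theta_k}\phi(\theta)}$, $\partial_{\theta_k}\phi(\theta)=a\big(e_ke_k^\top-e_k\theta^\top-\theta e_k^\top\big)a^\top$.

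\textbf{Step 3 (steepness, the main point).} Fix $\theta_0\in\bar\D\setminus\overset{\circ}{\D}$ and let $\overset{\circ}{\U}\ni\theta\to\theta_0$. The difficulty is that $\phi^{-1/2}(\theta)$ diverges only along the eigenspace of $\phi(\theta)$ corresponding to its vanishing eigenvalue, and this divergence is a priori invisible to the traces defining $g(\theta)$ because $D\phi(\theta):\R^n\to\S_n$ is far from surjective. The fix is to test $g(\theta)$ against $\theta$: from the formula for $\partial_{\theta_k}\phi$ one finds $\sum_k\theta_k\,\partial_{\theta_k}\phi(\theta)=a\,\diag(\theta)\,a^\top-2a\theta\theta^\top a^\top=\phi(\theta)-b^2-a\theta\theta^\top a^\top$, hence
\[
\langle g(\theta),\theta\rangle=\tfrac12\Big(\tr{\phi^{1/2}(\theta)}-\tr{\phi^{-1/2}(\theta)\,b^2}-(a\theta)^\top\phi^{-1/2}(\theta)\,(a\theta)\Big).
\]
As $\theta\to\theta_0$ the first term is bounded and the third is $\ge0$, whereas $\tr{\phi^{-1/2}(\theta)\,b^2}\ge\lambda_{\min}(b^2)\,\tr{\phi^{-1/2}(\theta)}\ge\lambda_{\min}(b^2)\,\lambda_{\min}(\phi(\theta))^{-1/2}\to+\infty$, because $\lambda_{\min}(\phi(\theta))\to\lambda_{\min}(\phi(\theta_0))=0$ and $\lambda_{\min}(b^2)>0$ — this is exactly where positive definiteness of $b^2$ is used. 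Thus $\langle g(\theta),\theta\rangle\to-\infty$, so $\|g(\theta)\|\ge|\langle g(\theta),\theta\rangle|/\|\theta\|\to+\infty$ (note $\theta_0\neq0$, so $\|\theta\|$ remains bounded away from $0$ and $\infty$), and therefore $\|\nabla\Lambda(\theta)\|\to+\infty$. Combined with Step 2, this shows $\Lambda$ is essentially smooth. I expect Step 3 — and specifically spotting the simplification obtained by pairing the gradient with $\theta$ — to be the only real obstacle; the rest is bookkeeping.
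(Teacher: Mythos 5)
Your proof is correct and follows essentially the same route as the paper: the decisive step in both is to pair the gradient with $\theta$, using $\sum_j\theta_j\de{\theta_j}\phi(\theta)=\phi(\theta)-b^2-a\theta\theta^\top a^\top$ together with $b^2\in\SPP$ to force $\tr{\phi^{-1/2}(\theta)\,b^2}\to+\infty$ at the boundary of $\U$; the paper writes this identity in the eigenbasis of $\phi(\theta)$ while you state it coordinate-free. Your additional verifications (convexity of $\Lambda$ via operator concavity and the identification $\overset{\circ}{\U}=\{\theta:\phi(\theta)\in\SPP\}$) are correct supplementary details that the paper leaves implicit, not a different method.
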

\begin{proof}
The function $\Lambda$ defined in \eqref{eq:Lambda} is a lower semi-continuous proper convex function with domain $\U$. Furthermore, since for every $\theta \in \overset{\circ}{\U}$, $\phi(\theta) \in \SPP$, $\Lambda$ is of class $C^1$ on $\overset{\circ}{\U}$. Only remains to prove that $||\nabla_\theta\Lambda(\theta)|| \rightarrow \infty$ when $\theta$ goes to the boundary of $\U$. Let $\theta \in \overset{\circ}{\U}$. By Proposition \ref{prop:ExponentialCondition} 
\[
\Lambda(\theta) 
= T\left( r \, \theta^{\!\top} \! \mathbf{1} - \frac{\alpha}{2} \, \tr{b + \phi^{1/2}(\theta)} \right)\:.
\]
Then for every $j \in \{1,...,n\}$,
\[
\de{\theta_j}\Lambda(\theta) 
= T\left( r - \frac{\alpha}{2} \, \tr{\de{\theta_j}\!\left[\phi^{1/2}\right](\theta)} \right) \:,
\]
where $\de{\theta_j}\!\left[\phi^{1/2}\right]\!(\theta)$ {satisfies
  $$\de{\theta_j}\phi(\theta)= \de{\theta_j}\left[\phi^{1/2}(\theta)\phi^{1/2}(\theta)\right]=\phi^{1/2}\!(\theta) \, \de{\theta_j}\!\left[\phi^{1/2}\right]\!(\theta) + \de{\theta_j}\!\left[\phi^{1/2}\right]\!(\theta) \, \phi^{1/2}\!(\theta). $$
  Multiplying this equation by $\phi^{-1/2}\!(\theta)$ and using the cyclic property of the trace, we get
  $$\tr{\de{\theta_j}\!\left[\phi^{1/2}\right](\theta)} = \frac{1}{2} \, \tr{\phi^{-1/2}\!(\theta) \, \de{\theta_j}\phi(\theta) }. $$
}

and therefore
\begin{equation}\label{eq:Gradient}
\de{\theta_j}\Lambda(\theta) = 
T\left( r - \frac{\alpha}{2} \, \tr{\de{\theta_j}\!\left[\phi^{1/2}\right](\theta)} \right) = 
T\left( r - \frac{\alpha}{4} \, \tr{\phi^{-1/2}\!(\theta) \, \de{\theta_j}\phi(\theta) } \right) \:,
\end{equation}
where 
\[
\de{\theta_j}\phi(\theta) 
= a \! \left(e_j e_j^\top - \theta e_j^\top - e_j \theta^\top\right) a^\top \:.
\]
{We write $\phi(\theta)=PDP^\top$ with $D \in \SPP$ diagonal and denote $w= a ^\top P$, which is invertible since $P$ is orthonormal and $a^\top a \in \SPP$. Then
\begin{align*}
\tr{\phi^{-1/2}\!(\theta) \, \de{\theta_j}\phi(\theta) }  
& = \tr{D^{-1/2}P^\top \de{\theta_j}\phi(\theta) \: P } \\
& = \tr{D^{-1/2} \, w^\top \! \left(e_j e_j^\top - \theta e_j^\top - e_j \theta^\top\right) w } \\
& = \tr{D^{-1/2} \, w^\top \! \left(e_j e_j^\top - 2 e_j \theta^\top\right) w } = \sum_{i=1}^n D_{ii}^{-1/2} (w_{ji}^2 - 2 w_{ji}\: (\theta^\top w e_i)) \:.
\end{align*}
Now, we observe that 
\begin{align*}
D_{ii} & = P_i^\top \phi(\theta) \: P_i = ||b \, P_i||^2 + e_i^\top w^\top \left(\diag(\theta) - \theta\theta^\top \right) \, w e_i \\
& = ||b \, P_i||^2 + \sum_{j=1}^n \theta_j w_{ji}^2 - (\theta^\top \, w e_i)^2 \\
& = ||b \, P_i||^2 + (\theta^\top \, w e_i)^2 + \sum_{j=1}^n \theta_j (w_{ji}^2 - 2 w_{ji}\: (\theta^\top w e_i) ). 
\end{align*}
Therefore, we get by the triangular inequality
\begin{align*}
  \sum_{j=1}^n |\theta_j| \left| \tr{\phi^{-1/2}\!(\theta) \, \de{\theta_j}\phi(\theta) } \right|&\ge \left| \sum_{j=1}^n \theta_j\sum_{i=1}^n D_{ii}^{-1/2} (w_{ji}^2 - 2 w_{ji}\: (\theta^\top w e_i)) \right|  \\
  &=\left|\sum_{i=1}^n D_{ii}^{1/2}-D_{ii}^{-1/2}(||b \, P_i||^2 + (\theta^\top \, w e_i)^2)\right|.
\end{align*}

Then, if $\theta \rightarrow  \bar{\theta}$ with $\bar{\theta}\in \U \setminus \overset{\circ}{\U}$, there exists $i$ such that $D_{ii} \rightarrow 0$ and therefore $\sum_{i=1}^n D_{ii}^{1/2}-D_{ii}^{-1/2}(||b \, P_i||^2 + (\theta^\top \, w e_i)^2) \rightarrow -\infty$ since $||b \, P_i||^2 + (\theta^\top \, w e_i)^2\ge \underline{\lambda}(-b)^2 >0$, where $ \underline{\lambda}(-b)$ is the smallest eigenvalue of~$-b\in \SPP$. 
Therefore, $ \left| \tr{\phi^{-1/2}\!(\theta) \, \de{\theta_j}\phi(\theta) } \right| \rightarrow +\infty$ for some $j$, which implies then $|\de{\theta_j}\Lambda(\theta)|\rightarrow +\infty$. Thus,  $||\nabla_\theta \Lambda(\theta) || \rightarrow \infty$ and $\Lambda$ is therefore essentially smooth. }
\end{proof}
\begin{remark}\label{rem:ExponentialCondition2}
Since, by Remark \ref{rem:ConvexityOfLambda}, $\theta \mapsto \lim_{\epsilon \rightarrow 0} \epsilon \,\log \E\left[e^{\epsilon^{-1}\theta^\top Y_t^\epsilon}\right]$ is a convex function, and, by Proposition \ref{prop:EssentialSmoothness}, $\Lambda$ admits infinite derivative on $\U \backslash \overset{\circ}{\U}$, then for every $\theta \in \R^n \backslash \U$, $\lim_{\epsilon \rightarrow 0} \epsilon \,\log \E\left[e^{\epsilon^{-1}\theta^\top Y_t^\epsilon}\right] = \Lambda(\theta) = \infty$. Therefore, Proposition \ref{prop:ExponentialCondition} does not only hold for $\theta \in \U$, but for every $\theta \in \R^n$. 
\end{remark}

\subsection{Long-time large deviation principle for the log-price process}

We now state the large deviation principle for the family $(Y_T^\epsilon)_{\epsilon >0}$, when $\epsilon \rightarrow 0$.
\begin{theorem}\label{thm:LDP}
The family $(Y_T^\epsilon)_{\epsilon >0}$ satisfies a large deviation principle, when $\epsilon \rightarrow 0$ with good rate function 
\[
\Lambda^*(y) = \sup_{\lambda\in \R^n} \q{\lambda,y} - \Lambda(\lambda)\:.
\]
\end{theorem}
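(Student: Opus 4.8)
The plan is to apply the G\"artner--Ellis theorem (Theorem \ref{thm:GartnerEllis}, part (c)) to the family $(Y_T^\epsilon)_{\epsilon>0}$ of random vectors in $\R^n$, with the candidate limiting cumulant generating function $\Lambda$ of \eqref{eq:Lambda}. The first hypothesis of that theorem — existence, for every $\theta\in\R^n$, of the limit $\lim_{\epsilon\to0}\epsilon\log\E[e^{\epsilon^{-1}\theta^\top Y_T^\epsilon}]$ as an extended real number — is exactly the content of Proposition \ref{prop:ExponentialCondition} for $\theta\in\U$ and of Remark \ref{rem:ExponentialCondition2} for $\theta\in\R^n\setminus\U$; in both cases the limit equals $\Lambda(\theta)$, so in particular $D_\Lambda=\U$.

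Next I would check the remaining hypotheses. That $0$ lies in the interior of $D_\Lambda=\U$ is immediate: $\phi(0)=b^2$, and since $-b\in\SPP$ the matrix $b$ is invertible, so $b^2\in\SPP$; as $\phi$ is continuous and $\SPP$ is open in $\S_n$, a whole neighbourhood of $0$ is mapped by $\phi$ into $\SPP\subset\SP$, whence $0\in\overset{\circ}{\U}$. Convexity of $\Lambda$ follows from Remark \ref{rem:ConvexityOfLambda}, and properness from $0\in\overset{\circ}{\U}$. For lower semi-continuity, note that $\U=\phi^{-1}(\SP)$ is closed because $\phi$ is continuous and $\SP$ is closed, while on $\U$ the function $\theta\mapsto\tr{b+\phi^{1/2}(\theta)}$ is continuous since the matrix square root is continuous on $\SP$; being finite and continuous on the closed set $\U$ and equal to $+\infty$ off it, $\Lambda$ is lower semi-continuous. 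Essential smoothness is Proposition \ref{prop:EssentialSmoothness}. Having assembled these facts, Theorem \ref{thm:GartnerEllis}(c) applies and delivers the large deviation principle for $(Y_T^\epsilon)_{\epsilon>0}$ with good rate function $\Lambda^*(y)=\sup_{\lambda\in\R^n}\q{\lambda,y}-\Lambda(\lambda)$, which is the assertion.

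There is no deep obstacle at this stage: the two substantial inputs — the long-time asymptotics of the explicit Laplace transform and the blow-up of the gradient of $\Lambda$ at the boundary of $\U$ — have already been established in Propositions \ref{prop:ExponentialCondition} and \ref{prop:EssentialSmoothness}. The only point demanding a little care in writing up the proof is the verification that $\Lambda$ is lower semi-continuous, which rests on the closedness of $\U$ and the continuity of $\phi\mapsto\phi^{1/2}$ on $\SP$; everything else is a bookkeeping check of the G\"artner--Ellis hypotheses.
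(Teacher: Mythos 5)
Your proposal is correct and follows essentially the same route as the paper: assemble the limit of the scaled Laplace transforms (Proposition \ref{prop:ExponentialCondition} plus Remark \ref{rem:ExponentialCondition2}), check $0\in\overset{\circ}{\U}$ via $\phi(0)=b^2\in\SPP$ and continuity of $\phi$, invoke essential smoothness (Proposition \ref{prop:EssentialSmoothness}), and conclude by G\"artner--Ellis. Your explicit verification of lower semi-continuity (closedness of $\U=\phi^{-1}(\SP)$ and continuity of the matrix square root on $\SP$) is a welcome detail that the paper only asserts in passing within the proof of Proposition \ref{prop:EssentialSmoothness}.
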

\begin{proof}.
First note that $\phi(0) = b^2 \in \SPP$. But since 
\[
\theta \mapsto \phi(\theta) := b^2 + a \, \left(\diag(\theta) - \theta\theta^\top \right) a^\top
\] 
is a continuous function, there exists a neighbourhood $B(0,\delta)$ of 0 such that $\phi(\theta) \in \SPP$ for every $\theta \in B(0,\delta)$, hence $0 \in \overset{\circ}{\U}$. 
Furthermore, Proposition \ref{prop:ExponentialCondition} together with the argument in Remark \ref{rem:ExponentialCondition2} prove that 
\[
\Lambda(\theta) = \lim_{\epsilon \rightarrow 0} \epsilon \, \log\E\left[e^{\epsilon^{-1}\theta^\top Y_T^\epsilon}\right] \:,
\] 
where $\Lambda$ is defined in \eqref{eq:Lambda}. 
Finally, Proposition \ref{prop:EssentialSmoothness} yields the essential smoothness of $\Lambda$. Therefore, by the G\"artner-Ellis Theorem \ref{thm:GartnerEllis}, $(Y_T^\epsilon)_{\epsilon >0}$ satisfies a large deviation principle, when $\epsilon \rightarrow 0$ with good rate function $\Lambda^*$.
\end{proof}

\section{Asymptotic implied volatility of basket options}\label{sec:AsymptoticPricing}
In this section, to simplify the formulas and without loss of
generality, we assume that $Y^j_0 = 0$ for $j=1,\dots,n$ and $r = 0$
so that $(e^{Y^j_t})_{t\geq 0}$  is a martingale with initial value
$1$ (this follows from Proposition \ref{prop:LaplaceExplicitForm}). We are interested in the limiting behavior far from
maturity of basket
option
prices and the corresponding implied volatilities  in the Wishart
model.  The basket call option price with log strike $k$ and time to
maturity $T$ is defined by 
$$
C(T,k) = \E\left[\left(\sum_{i=1}^n \omega_i S^i_T - e^k\right)_+\right],
$$
and the corresponding put option price is defined by 
$$
P(T,k) = \E\left[\left(e^k-\sum_{i=1}^n \omega_i S^i_T \right)_+\right]
$$
where $\omega \in (\R_+)^n$ with $\sum_{i=1}^n \omega_i = 1$. 

The implied volatility of basket options is defined by comparing their
price to the corresponding option price in the Black-Scholes model
$\frac{dS_t}{S_t} = \sigma dW_t$:
$$
C^{BS}(T,k,\sigma) =  N(d_1) - e^k N(d_2), \quad d_{12} = \frac{
  k\pm \frac{1}{2} \sigma^2 T}{\sigma\sqrt{T}},
$$
where $N$ is the standard normal distribution function. The implied
volatility for log strike $k$ and time to maturity $T$ is then defined as
the unique value $\sigma(T,k)$ such that 
$$
C^{BS}\left(T,k,\sigma(T,k)\right) = C(T,k).
$$
It can be equivalently defined using the put option price. 

It is well known that in most models, for fixed log strike $k$, the
implied volatility converges to a constant value independent from $k$
as $T\to \infty$ \cite{tehranchi2009asymptotics}. To obtain a
non-trivial limiting smile, we therefore follow \cite{Ja:Ke:Mi2013}
and use a renormalized log strike $k(T) = yT$. We are interested in
computing the limiting implied volatility 
$$
\sigma_\infty (y) = \lim_{T\to \infty} \sigma(T,yT). 
$$

\subsection{Asymptotic price for the Wishart model}
Introduce the renormalized log-price process in the stochastic
volatility Wishart model: $\tilde Y^j_T = T^{-1} Y^j_T$, $j=1,\dots,n$. Note that to
simplify notation, in this section we avoid using an extra parameter
$\epsilon$ and simply consider the asymptotics when $T\to
\infty$. For this reason, the asymptotic Laplace exponent
$\Lambda(\theta)$ will be given by equation \eqref{eq:Lambda} with $T= 1$
and $r=0$. 

Denote the basket log price by   $\B_T := \log
\sum_{j=1}^n \omega_j e^{Y^j_{T}}$, and the corresponding renormalized
price by $\tilde \B_T := T^{-1}\log
\sum_{j=1}^n \omega_j e^{Y^j_{T}}$. We
first show some LDP-like bounds for this quantity. In the following
lemma and below, we will use the fact that $\Lambda(0)=\Lambda(e_j)=0$, which implies in particular that $\Lambda^*(x)\ge 0$ and $\Lambda^*(x)-x_j\ge 0$ for all $x\in\R^d$. Thus, we let  $x^* = \Lambda'(0)$ and
$\tilde x^*_j = \Lambda'_j(e_j)$ for $j=1,\dots,n$ and introduce three
constants: $\beta^* = \max_j x^*_j$, $\hat \beta^* = \min_j \tilde x^*_j$ and
$\tilde \beta^* = \max_j \tilde x^*_j$. It is easy to see from~\eqref{eq:Gradient} that $x^*_j=-\tilde x^*_j<0$ since $\phi(0)=\phi(e_j)=b^2$ is positive definite and $a$ is invertible. We get $\beta^* < 0 < \hat \beta^* \leq \tilde \beta^*$.
\begin{lemma}\label{bounds.lm}
The following estimates hold for $\tilde \B_T$. 
\begin{enumerate}
\item 
If $\beta<\beta^*$ then 
\begin{align}
\lim_{T\to \infty} \,T^{-1} \log
  \P\left(\tilde \B_T \in (-\infty, \,\beta ] \right) &= -\inf_{x\in
  (-\infty, \beta ]^n} \Lambda^*(x) \notag\\ &= \inf_{\lambda \in \mathbb R^n,
  \lambda_i \leq 0, i=1,\dots,n} \{\Lambda(\lambda) - \beta \langle
  \lambda, \mathbf 1\rangle \}<0;\label{bnd1}
\end{align}
otherwise 
$$
\lim_{T\to \infty} \,T^{-1} \log
  \P\left(\tilde \B_T \in (-\infty, \,\beta ] \right)  = 0.
$$
\item If $\beta \geq \beta^*$ then 
\begin{align}
\lim_{T\to \infty} \,T^{-1} \log
  \P\left(\tilde \B_T \in (\beta,\infty ) \right) =-\inf_{x\notin
  (-\infty, \beta ]^n} \Lambda^*(x) = \max_{i=1,\dots,n} \inf_{\lambda\in
  \mathbb R}\{-\lambda \beta + \Lambda(\lambda e_i)\}, \label{bnd2}
\end{align}
otherwise 
$$
\lim_{T\to \infty} \,T^{-1} \log
  \P\left(\tilde \B_T \in (\beta,\infty ) \right)  = 0. 
$$
In addition if $\beta \geq \beta^*$ and $\beta \neq \tilde x^*_i$ for all $i$, then
$$
\lim_{T\to \infty} \,T^{-1} \log
  \P\left(\tilde \B_T \in (\beta,\infty ) \right) < -\beta. 
$$

\item Let $j\in \{1,\dots,n\}$. Then, 
\begin{align}
 \lim_{T\to \infty} \,T^{-1} \log
   \E\left[e^{Y^j_{T}}\mathds 1_{\tilde \B_T \in (-\infty,\beta
   ]} \right]  &= -\inf_{x\in (-\infty, \beta ]^n}
   \Lambda^*(x)-x_j\notag\\
& = \beta + \inf_{\lambda^j \leq 1, \lambda^i\leq 0, i\neq
  j}\{\Lambda(\lambda) - \beta \langle \lambda, \mathbf 1\rangle\}.\label{bnd3}
\end{align}
In addition, if $\tilde x^*_j > \beta$ then 
$$
\lim_{T\to \infty} \,T^{-1} \log
   \E\left[e^{Y^j_{T}}\mathds 1_{\tilde \B_T \in (-\infty,\beta
   ]} \right] <0. 
$$
\item Let $j\in \{1,\dots,n\}$ and assume $\beta>\tilde x_j^*$. Then, \begin{align}
 \lim_{T\to \infty} \,T^{-1} \log
   \E\left[e^{Y^j_{T}}\mathds 1_{\tilde \B_T \in (\beta,\infty
   )} \right]  &= -\inf_{x\notin (-\infty, \beta ]^n}
   \Lambda^*(x)-x_j\notag\\
&=\max_{i=1,\dots,n} \inf_{\lambda \in \mathbb R} \{-\lambda \beta +
  \Lambda(\lambda e_i + e_j)\} < 0. \label{bnd4}
\end{align}
\end{enumerate}
\end{lemma}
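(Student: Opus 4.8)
The plan is to transport everything to the renormalised log-price vector $\tilde Y_T = T^{-1}Y_T$, which by Theorem~\ref{thm:LDP} (in the normalisation fixed in this section, $\Lambda$ as in \eqref{eq:Lambda} with $T=1$, $r=0$) satisfies a large deviation principle with good rate function $\Lambda^*$. The obstruction is that the basket events $\{\tilde\B_T\le\beta\}$ and $\{\tilde\B_T>\beta\}$ depend on $T$ through $y\mapsto T^{-1}\log\sum_j\omega_j e^{Ty_j}$, so Varadhan's Lemma cannot be applied to them directly; instead I would sandwich them between fixed boxes of $\tilde Y_T$. Since $\omega_i>0$ and $\sum_i\omega_i=1$, the elementary bounds $\omega_i e^{Y^i_T}\le\sum_j\omega_je^{Y^j_T}\le n\max_i\omega_ie^{Y^i_T}$ give, for every $\eta>0$ and all $T$ large enough,
\[
\{\max_i\tilde Y^i_T\le\beta-\eta\}\ \subseteq\ \{\tilde\B_T\le\beta\}\ \subseteq\ \{\max_i\tilde Y^i_T\le\beta+\eta\},
\]
and symmetrically $\{\max_i\tilde Y^i_T>\beta+\eta\}\subseteq\{\tilde\B_T>\beta\}\subseteq\{\max_i\tilde Y^i_T\ge\beta-\eta\}$; as $e^{Y^j_T}\ge0$, the same inclusions persist after inserting the factor $e^{Y^j_T}$ in the expectations of (3)--(4).

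For (1) and (2) I would then apply the LDP upper bound on the closed box $(-\infty,\beta+\eta]^n$ (resp.\ on the closed set $\{\max_i y_i\ge\beta-\eta\}$) and the lower bound on the open box $(-\infty,\beta-\eta)^n$ (resp.\ on $\{\max_i y_i>\beta+\eta\}$). For (3) and (4) I would do the same with Varadhan's Lemma~\ref{thm:Varadhan} and the continuous function $\varphi(y)=y_j$; its moment condition holds because, by Proposition~\ref{prop:ExponentialCondition} and Remark~\ref{rem:ExponentialCondition2}, $T^{-1}\log\E[e^{\gamma Y^j_T}]\to\Lambda(\gamma e_j)$, which is finite for $\gamma>1$ close to $1$ since $\phi(e_j)=b^2\in\SPP$ places $e_j$ in the interior of $\U$. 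Letting $\eta\downarrow0$ and using that $\Lambda^*$ is a good rate function that, $\U$ being bounded by Remark~\ref{rem:bounded}, is finite, continuous and superlinear on all of $\R^n$, the $\eta$-perturbed infima and suprema converge to their $\eta=0$ values; the interchange between the infimum over the open box and over the closed box uses convexity of $\Lambda^*$ and of $\Lambda^*-x_j$, by pushing the extremiser slightly inward or outward. This yields the first equalities in \eqref{bnd1}--\eqref{bnd4}, as well as the ``otherwise $=0$'' alternatives in (1) and (2), which correspond to the global minimiser $x^*=\Lambda'(0)$ of $\Lambda^*$ belonging to the relevant region.

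Next I would convert the $x$-variational expressions into the stated $\lambda$-variational ones by Lagrangian/Fenchel duality, using $\Lambda^{**}=\Lambda$ (so $\inf_x\{\Lambda^*(x)-\langle c,x\rangle\}=-\Lambda(c)$) with Slater's condition trivially satisfied. For \eqref{bnd1} dualise the constraint $x\le\beta\mathbf 1$; for \eqref{bnd3} dualise $x\le\beta\mathbf 1$ in $\inf_{x\le\beta\mathbf 1}\{\Lambda^*(x)-x_j\}$, and the change of variable $\lambda=e_j-\nu$ (with $\nu\ge0$ the multiplier) produces exactly the constraint $\lambda_j\le1$, $\lambda_i\le0$ for $i\ne j$. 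For \eqref{bnd2} and \eqref{bnd4} write $\R^n\setminus(-\infty,\beta]^n=\bigcup_i\{x_i>\beta\}$, take the minimum over $i$, and dualise the single constraint $x_i\ge\beta$, obtaining $\sup_{\mu\ge0}\{\mu\beta-\Lambda(\mu e_i)\}$, resp.\ $\sup_{\mu\ge0}\{\mu\beta-\Lambda(\mu e_i+e_j)\}$; to replace $\mu\ge0$ by $\mu\in\R$ it suffices that the concave map being maximised has nonnegative derivative at $\mu=0$, i.e.\ $\Lambda'_i(0)=x^*_i\le\beta^*\le\beta$ for \eqref{bnd2} and $\Lambda'_i(e_j)\le\beta$ for \eqref{bnd4}.

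Finally, the strict sign statements reduce to locating unconstrained minimisers: $\{\Lambda^*=0\}=\partial\Lambda(0)=\{x^*\}$ and $\{\Lambda^*(\cdot)-(\cdot)_j=0\}=\partial\Lambda(e_j)=\{\hat x\}$ with $\hat x:=\nabla\Lambda(e_j)$, and since $\Lambda^*$ (resp.\ $\Lambda^*-x_j$) is lower semicontinuous and superlinear, its infimum over a closed set not containing the minimiser is strictly positive. From \eqref{eq:Gradient} with $r=0$, $T=1$ one computes $\hat x_j=\tilde x^*_j=\tfrac\alpha4 M_{jj}$ and $\hat x_i=\tfrac\alpha4(2M_{ij}-M_{ii})$ for $i\ne j$, where $M:=a^\top(-b)^{-1}a\in\SPP$; since $M_{ij}^2\le M_{ii}M_{jj}$, the AM--GM inequality gives $2M_{ij}\le M_{ii}+M_{jj}$, whence $\max_i\hat x_i=\hat x_j=\tilde x^*_j$ (in particular $\Lambda'_i(e_j)=\hat x_i\le\tilde x^*_j<\beta$ under the hypothesis $\beta>\tilde x^*_j$ of \eqref{bnd4}, which closes the gap left above). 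Therefore $\beta>\tilde x^*_j$ forces $\hat x\notin\{y:\max_i y_i\ge\beta\}$, giving the strict inequality in \eqref{bnd4}; $\tilde x^*_j>\beta$ forces $\hat x\notin(-\infty,\beta]^n$, giving the addendum to \eqref{bnd3}; $\beta<\beta^*$ forces $x^*\notin(-\infty,\beta]^n$, giving the strict inequality in \eqref{bnd1}; and for the addendum to \eqref{bnd2} one notes that $\Lambda^*_i(s):=\sup_\mu\{\mu s-\Lambda(\mu e_i)\}$ satisfies $\Lambda^*_i(s)\ge s$ with equality only at $s=\tilde x^*_i$, so $\beta\ge\beta^*$ together with $\beta\ne\tilde x^*_i$ for all $i$ gives $-\Lambda^*_i(\beta)<-\beta$ for each $i$. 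I expect the main obstacle to be exactly the identity $\max_i\hat x_i=\tilde x^*_j$ underlying \eqref{bnd4}: it is not a soft convexity fact but genuinely uses the explicit form of $\Lambda$ and the AM--GM bound on $M$; the rest is routine convex duality plus careful bookkeeping of the iterated limit $T\to\infty$ then $\eta\downarrow0$ with the good-rate-function property.
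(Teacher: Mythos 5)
Your proposal is correct, and its skeleton coincides with the paper's proof: the same sandwich of $\tilde \B_T$ by $\max_i \tilde Y^i_T$ up to a vanishing shift, the LDP of Theorem~\ref{thm:LDP} and Varadhan's Lemma (Theorem~\ref{thm:Varadhan}) with $\varphi(y)=y_j$ (checking $\Lambda(\gamma e_j)<\infty$ for $\gamma>1$ small), continuity of $\Lambda^*$ deduced from the boundedness of $\U$, and a passage to the dual $\lambda$-expressions. Where you differ is in the convex-analytic bookkeeping: the paper interchanges $\inf$ and $\sup$ via the minimax theorem and obtains all strict inequalities on the dual side, through the one-dimensional functions $f_i(\lambda)=-\lambda\beta+\Lambda(\lambda e_i)$ and $g_i(\lambda)=-\lambda\beta+\Lambda(\lambda e_i+e_j)$ and their derivatives at $\lambda=1$ or $\lambda=0$ (with a perturbation $\beta\to\tilde\beta$ in item (4) when some $g_i'(0)=0$), whereas you dualize each linear constraint directly and get strictness on the primal side by locating the unique zeros $\Lambda'(0)$ and $\Lambda'(e_j)$ of $\Lambda^*$ and $\Lambda^*(\cdot)-(\cdot)_j$ and invoking coercivity; both routes work. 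The one step where you supply a detail the paper leaves essentially implicit is in item (4): passing from $\mu\ge 0$ to $\mu\in\R$ (equivalently, reducing the supremum to the boundary $\{x_i=\beta\}$) requires $\Lambda_i'(e_j)\le\beta$ for \emph{every} $i$, not only $i=j$, and your computation $\Lambda_i'(e_j)=\tfrac{\alpha}{4}(2M_{ij}-M_{ii})$ with $M=a^\top(-b)^{-1}a\in\SPP$, combined with $2M_{ij}\le M_{ii}+M_{jj}$, correctly yields $\max_i\Lambda_i'(e_j)=\tilde x^*_j$. Note, however, that contrary to your closing remark this is in fact a soft convexity fact: since $\Lambda(e_i)=0$ for all $i$ (each $e^{Y^i}$ is a martingale), one has $\Lambda^*(x)\ge x_i$ for every $x$ and $i$, hence $\Lambda_i'(e_j)\le\Lambda^*\left(\Lambda'(e_j)\right)=\tilde x^*_j$ with no computation; this observation also shows that, under the hypothesis $\beta>\tilde x^*_j$ of \eqref{bnd4}, one always has $g_i'(0)<0$, so the paper's perturbation step is never actually triggered.
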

\begin{proof}
\begin{enumerate}
\item Since $\omega_{\min}e^{\max_j Y^j_T}\le \sum_{j=1}^n\omega_je^{Y^j_T}\le n\omega_{\max}e^{\max_j Y^j_T}$ with $(\omega_{\min},\omega_{\max}) := \left(\min_{j=1,...,n}
  \omega_j, \max_{j=1,...,n} \omega_j \right)$, we have for every $T>0$ and $\beta \in \R$,
\begin{align*}
\left(\tilde Y_T \in (-\infty, \beta - T^{-1}\log (n\,\omega_{\max}))^n\right) 
& \subset (\tilde \B_T < \beta) \\
& \subset \left( \tilde Y_T \in (-\infty, \beta - T^{-1}\log \omega_{\min})^n\right) \:.
\end{align*}
Therefore, we get for
every $\delta>0$ and $T$ sufficiently large, 
\[
\P\left(\tilde Y_T \in (-\infty, \beta - \delta)^n\right) \le \P(\tilde \B_T < \beta) \le \P\left( \tilde Y_T \in (-\infty, \beta +\delta)^n\right)\:.
\]
Passing to the $\limsup$ and $\liminf$, we get:
\begin{multline*}
\liminf_{T\to \infty} T^{-1} \log \P\left(\tilde Y_T \in (-\infty, \beta - \delta)^n\right) \le \liminf_{T\to \infty} T^{-1} \log \P(\tilde \B_T < \beta) \\\le \limsup_{T\to \infty} T^{-1} \log \P(\tilde \B_T < \beta) \le \limsup_{T\to \infty} T^{-1} \log \P\left( \tilde Y_T \in (-\infty, \beta +\delta)^n\right)\:.
\end{multline*}
Using the large deviations principle for $\tilde Y_T$ (Theorem~\ref{thm:LDP}) further
yields:
\begin{multline*}
-\inf_{x\in (-\infty, \beta - \delta)^n} \Lambda^*(x)\le \liminf_{T\to \infty} T^{-1} \log \P(\tilde \B_T < \beta) \\\le \limsup_{T\to \infty} T^{-1} \log \P(\tilde \B_T < \beta) \le -\inf_{x\in (-\infty, \beta + \delta]^n} \Lambda^*(x),
\end{multline*}
and making $\delta$ tend to zero, we see that 
\begin{multline*}
-\inf_{x\in (-\infty, \beta )^n} \Lambda^*(x)\le \liminf_{T\to \infty} T^{-1} \log \P(\tilde \B_T < \beta) \\\le \limsup_{T\to \infty} T^{-1} \log \P(\tilde \B_T < \beta) \le -\inf_{x\in (-\infty, \beta ]^n} \Lambda^*(x).
\end{multline*}
The fact that the domain of $\Lambda$ is bounded (Remark
\ref{rem:bounded}) implies that $\Lambda^*$ is locally bounded from above and
therefore continuous. The first equality of~\eqref{bnd1} then follows by
continuity of $\Lambda^*$. The second equality then follows from the
definition of $\Lambda^*$ and the minimax theorem (see, e.g., Corollary 37.3.2 in \cite{rockafellar1970convex})
which can be applied because the domain of $\Lambda$ is bounded (cf.~Remark
\ref{rem:bounded}). Finally, the inequality follows from the fact that
the function $f(\lambda) = \Lambda(\lambda) - \beta\langle \lambda,
\mathbf 1 \rangle$ satisfies $f(0) = 0$ and $f'(0) = x^* - \beta\mathbf
1$. Under the condition $\beta<\beta^*$ at least one component of the
derivative is strictly positive, and hence the minimum of $f$ over the
set $\{\lambda_i \leq 0, i=1,\dots,n\}$ is strictly negative. 
\item The first equality in~\eqref{bnd2} follows similarly to the
  previous item. If $\beta< \beta^*$ then $x^*\notin (-\infty, \beta]^n$ and the
  infimum equals $0$. Otherwise by convexity of $\Lambda^*$ the
  infimum is attained on the boundary of this set. Therefore, we can write:
\begin{align*}
-\inf_{x\notin (-\infty,\beta]^n} \Lambda^*(x) & = \max_{i=1,\dots,n} \sup_{x\in \mathbb R^n: x_i = \beta}\{-
                                                       \Lambda^*(x)\}
  \\
& = \max_{i=1,\dots,n}\sup_{x\in \mathbb R^n: x_i = \beta}
                                                       \inf_{\lambda
  \in \mathbb R^n}
  \{-\langle \lambda, x\rangle + \Lambda(\lambda)\}\\
&=\max_{i=1,\dots,n}
\inf_{\lambda \in \mathbb R} \{-\lambda \beta + \Lambda(\lambda e_i)\},
\end{align*}
since the $\inf$ and $\sup$ may once again be interchanged in virtue of the minimax
theorem and then the supremum on $x\in\R^n$ such that $x_i=\beta$ is clearly $+\infty$ when there is $j\not=i$ such that $\lambda_j\not=0$.
Consider the function $f_i:\mathbb R\to \mathbb R$, $f_i(\lambda) = -\lambda \beta + \Lambda(\lambda
e_i)$. 
Since $f_i(1) = - \beta$ and $f'_i(1) = -\beta + \tilde x^*_i$, it follows
that 
$$
\beta + \max_{i=1,\dots,n}
\inf_{\lambda \in \mathbb R} \{-\lambda \beta + \Lambda(\lambda e_i)\} < 0. 
$$
when  $\beta \not = \tilde x^*_i$ for all $i$. 
\item For the first identity in \eqref{bnd3}, remark that, similarly to the first part,
for $T$ sufficiently large, all $\delta >0$ and
$\beta\in \mathbb R$ we have,
\begin{align*}
\E[e^{Y^j_{T}}\mathds 1_{\{\tilde Y_T \in (-\infty, \beta - \delta]^n\} }]
 \geq \E[e^{Y^j_{T}}\mathds 1_{\{\tilde \B_T \leq \beta\}}] \geq \E[e^{Y^j_{T}}\mathds1_{\{ \tilde Y_T \in
  (-\infty, \beta +\delta]^n\}}] \:,
\end{align*}
We can apply Theorem \ref{thm:Varadhan} with the function $H: x\mapsto x_j$ since $\Lambda(e_j)=0$ and $\Lambda(\gamma e_j)<\infty$ for $\gamma>1$ small enough. When $\delta$ goes to zero, we get
\begin{multline*}
\sup_{x\in (-\infty, \beta)^n} \{x_j - \Lambda^*(x)\}\le
\liminf_{T\to \infty} T^{-1} \log \E[e^{Y^j_{T}}\mathds 1_{\{\tilde \B_T \leq \beta\}}]
\\\le \limsup_{T\to \infty} T^{-1} \log \E[e^{Y^j_{T}}\mathds 1_{\{\tilde \B_T \leq \beta\}}] \le \sup_{x\in (-\infty, \beta]^n} \{x_j - \Lambda^*(x)\}.
\end{multline*}
By continuity of $\Lambda^*$, the lower and the upper bounds are equal. Since $\Lambda^*(x)=\sup_{\lambda\in \R^n} \q{\lambda+e_j,x} - \Lambda(\lambda+e_j)$, we get
$$\sup_{x\in (-\infty, \beta]^n} \{x_j - \Lambda^*(x)\}=\sup_{x\in (-\infty, \beta]^n} \inf_{\lambda\in \R^n}\Lambda(\lambda+e_j)-\q{\lambda,x}. $$
 The second identity in~\eqref{bnd3} then
follows from the minimax theorem as above. Finally, to show the
inequality, remark that 
$$
\inf_{\lambda^j \leq 1, \lambda^i\leq 0, i\neq
  j}\{\Lambda(\lambda) - \beta\langle \lambda, \mathbf 1\rangle\} \leq
\inf_{\lambda \leq 1} f_j(\lambda)
$$
and $f'_j(1) = \tilde x^*_j-\beta>0$.
\item The first identity in~\eqref{bnd4} follows as in item~(3). We have $\Lambda^*(x)-x_j\ge 0$ and $\Lambda^*(\Lambda'(e_j))=\Lambda_j'(e_j)=\tilde{x}^*_j$ since $e_j$ is a critical point of $\lambda \mapsto \q{\lambda ,\Lambda'(e_j)}-\Lambda(\lambda)$.  Since $\beta>\tilde{x}^*_j$ and $\Lambda'(e_j)\not \in (-\infty,\beta]^n$, the supremum is attained as in item~(2) on the boundary:
  $$\sup_{x\in \R^n} x_j-\Lambda^*(x)=\max_{i=1,\dots,n}\sup_{x\in \R^n:x_i=\beta} x_j-\Lambda^*(x)=\max_{i=1,\dots,n}\sup_{x\in \R^n:x_i=\beta} \inf_{\lambda\in \R^n}\Lambda(\lambda+e_j)-\q{\lambda,x}. $$
 The
  second identity in~\eqref{bnd4} holds true in virtue of the minimax theorem as
  above, like in item~(2). To prove the negativity, we consider the functions $g_i(\lambda) =  -\lambda \beta + \Lambda(\lambda e_i + e_j)$. We have that $g_i(0) = 0$
  and $g_i'(0) = -\beta + \Lambda'_i(e_j)$. We have $g_j'(0)=-\beta+\tilde x_j^*<0$. If $g'_i(0)\not = 0$ for all $i$, the result is clear. Otherwise, we can find $\tilde \beta \in (\tilde x_j^*,\beta)$ such that $\tilde \beta \not = \Lambda'_i(e_j)$ for all $i$, and since $e^{Y^j_{T}}\mathds 1_{\tilde \B_T \in (\beta,\infty
   )} \le e^{Y^j_{T}}\mathds 1_{\tilde \B_T \in (\tilde\beta,\infty
   )} $, we get the claim. 
\end{enumerate}
\end{proof}

The following theorem characterizes the asymptotic behavior of basket call
prices in the Wishart model. There are different asymptotic
regimes to consider, depending on the position of $y$ with respect to
the constants $\beta^*$, $\tilde \beta^*$ and $\hat \beta^*$. 
\begin{theorem}Assume that $y \neq \tilde x^*_i$ for all $i$. Then,
  as $T\to \infty$, the call option price in the Wishart model satisfies
\begin{align}
\lim_{T\to \infty} \E\left[(
  e^{\mathcal B_{T}}-e^{yT})_+\right]  =
\sum_{i=1}^{n} \omega_i \mathds 1_{\tilde x_i^* > y}.   \label{limitcall}
\end{align}
In addition, if $y< \beta^*$ then 
\begin{align}
\lim_{T\to \infty} T^{-1} \log \E\left[(e^{yT} -
  e^{\mathcal B_{T}})_+\right] &= \lim_{T\to \infty}
T^{-1} \log \left\{e^{yT} - 1+ \E\left[(
  e^{\mathcal B_{T}}-e^{yT})_+\right]\right\}\notag\\ &=  y - \inf_{z \in (-\infty,y]^n } \Lambda^*(z)<y;\label{ldput}
\end{align}
if $y>\tilde \beta^*$, then
\begin{align}
\lim_{T\to \infty} T^{-1} \log \E\left[(
  e^{\mathcal B_{T}}-e^{yT})_+\right] =
\max_{i,j=1,\dots,n} \inf_{\lambda
  \in \mathbb R}
  \{- \lambda y  + \Lambda(\lambda e_i + e_j)\}<0. \label{ldcall}
\end{align}
and if $y\in (\beta^*,\hat \beta^*)$, then
\begin{align}
\lim_{T\to \infty} T^{-1} \log \left(1
- \E[(e^{\B_{T}}-e^{yT})_+] \right) 
 = y + \max_{i=1,\dots,n}\inf_{\lambda \in \mathbb R} \{-\lambda y + \Lambda(\lambda e_i)\}\}<\min(0,y).\label{ldcovcall}
\end{align}
\end{theorem}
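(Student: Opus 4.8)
The plan is to deduce all four assertions from Lemma~\ref{bounds.lm}, the call--put parity, and the elementary identity $x-(x-c)_+=x\wedge c$ valid for $c>0$. Since each $(e^{Y^j_t})_{t\ge0}$ is a martingale with unit initial value, $\E[e^{\B_T}]=\sum_j\omega_j=1$, whence
\begin{equation*}
\E[(e^{yT}-e^{\B_T})_+]=e^{yT}-1+\E[(e^{\B_T}-e^{yT})_+]\qquad\text{and}\qquad 1-\E[(e^{\B_T}-e^{yT})_+]=\E[e^{\B_T}\wedge e^{yT}].
\end{equation*}
The first equality is exactly the first displayed identity in~\eqref{ldput}; the second, after splitting on $\{\tilde\B_T<y\}$, gives $1-\E[(e^{\B_T}-e^{yT})_+]=\E[e^{\B_T}\mathds 1_{\tilde\B_T<y}]+e^{yT}\P(\tilde\B_T\ge y)$, which drives~\eqref{ldcovcall}. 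Throughout I pass freely between $\{\tilde\B_T\le y\}$ and $\{\tilde\B_T<y\}$ (and their complements) by inserting $\pm\delta$ and letting $\delta\downarrow0$: this is legitimate since $\Lambda^*$ is continuous (its domain $\U$ is bounded, Remark~\ref{rem:bounded}) and the maps $\beta\mapsto\inf_{x\in(-\infty,\beta]^n}\Lambda^*(x)$, $y\mapsto\inf_\lambda\{-\lambda y+\Lambda(\lambda e_i)\}$, etc., are continuous.

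For~\eqref{limitcall}, taking expectations gives $\E[(e^{\B_T}-e^{yT})_+]=\sum_j\omega_j\E[e^{Y^j_T}\mathds 1_{\tilde\B_T>y}]-e^{yT}\P(\tilde\B_T>y)$. The last term vanishes in the limit: if $y\ge\beta^*$ its rate is $<-y$ by~\eqref{bnd2} (using $y\ne\tilde x^*_i$), while if $y<\beta^*$ then $e^{yT}\to0$ because $\beta^*<0$. Writing $\E[e^{Y^j_T}\mathds 1_{\tilde\B_T>y}]=1-\E[e^{Y^j_T}\mathds 1_{\tilde\B_T\le y}]$, the last assertion of~\eqref{bnd3} gives the limit $1$ when $\tilde x^*_j>y$, and~\eqref{bnd4} gives $\E[e^{Y^j_T}\mathds 1_{\tilde\B_T>y}]\to0$ when $\tilde x^*_j<y$; since $y\ne\tilde x^*_j$, summing yields $\sum_i\omega_i\mathds 1_{\tilde x^*_i>y}$.

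For~\eqref{ldput}, after the parity identity, the limit of $T^{-1}\log\E[(e^{yT}-e^{\B_T})_+]$ comes from the sandwich $(e^{yT}-e^{(y-\delta)T})\P(\tilde\B_T\le y-\delta)\le\E[(e^{yT}-e^{\B_T})_+]\le e^{yT}\P(\tilde\B_T\le y)$, an application of~\eqref{bnd1}, and $\delta\downarrow0$; the strict inequality holds because $y<\beta^*$ excludes the unique minimiser $x^*=\Lambda'(0)$ of $\Lambda^*$ from the closed set $(-\infty,y]^n$, on which the good rate function $\Lambda^*$ (Theorem~\ref{thm:LDP}) then has a positive infimum. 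Case~\eqref{ldcall} is symmetric: on $\{\tilde\B_T>y+\delta\}$ one has $e^{\B_T}-e^{yT}\ge e^{\B_T}(1-e^{-\delta T})$, so
\begin{equation*}
(1-e^{-\delta T})\sum_j\omega_j\E[e^{Y^j_T}\mathds 1_{\tilde\B_T>y+\delta}]\le\E[(e^{\B_T}-e^{yT})_+]\le\sum_j\omega_j\E[e^{Y^j_T}\mathds 1_{\tilde\B_T>y-\delta}],
\end{equation*}
and since $y>\tilde\beta^*\ge\tilde x^*_j$ for every $j$, \eqref{bnd4} applies; $\delta\downarrow0$ yields $\max_{i,j}\inf_\lambda\{-\lambda y+\Lambda(\lambda e_i+e_j)\}$, which is $<0$ again by~\eqref{bnd4}.

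Case~\eqref{ldcovcall} is the delicate one, and the variational identity below is where I expect the main obstacle. Using that the exponential rate of a finite sum of nonnegative quantities is the maximum of the individual rates, I apply $1-\E[(e^{\B_T}-e^{yT})_+]=\E[e^{\B_T}\mathds 1_{\tilde\B_T<y}]+e^{yT}\P(\tilde\B_T\ge y)$: by~\eqref{bnd2} (as $y>\beta^*$) the second term has rate $y+\max_i\inf_\lambda\{-\lambda y+\Lambda(\lambda e_i)\}=y-\inf_{x\notin(-\infty,y]^n}\Lambda^*(x)=:y-I_M(y)$, which is precisely the claimed value; and since $\E[e^{\B_T}\mathds 1_{\tilde\B_T<y}]=\sum_j\omega_j\E[e^{Y^j_T}\mathds 1_{\tilde\B_T<y}]$, \eqref{bnd3} (plus a $\delta$-argument) gives the first term rate $\max_j\bigl(-\inf_{x\in(-\infty,y]^n}(\Lambda^*(x)-x_j)\bigr)=\sup_{x\in(-\infty,y]^n}(\max_j x_j-\Lambda^*(x))=\sup_{m\le y}(m-I_M(m))$, where $I_M(m):=\inf\{\Lambda^*(x):\max_j x_j=m\}$. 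It remains to prove that, for $y\in(\beta^*,\hat\beta^*)$,
\begin{equation*}
\sup_{m\le y}\bigl(m-I_M(m)\bigr)=y-I_M(y),
\end{equation*}
so that the first term does not dominate. For $m\ge\beta^*$, convexity of $\Lambda^*$ (its minimiser $x^*$ being feasible for $\{\max_j x_j\le m\}$) yields $I_M(m)=\inf_{\{\max_j x_j\ge m\}}\Lambda^*=\min_k\inf_{x_k\ge m}\Lambda^*(x)$, hence $m-I_M(m)=\max_k h_k(m)$ with $h_k(m):=m-\inf_{x_k\ge m}\Lambda^*(x)$; each $h_k$ is concave (equal to $m$ for $m\le x^*_k$ and to $m-g_k(m)$ for $m\ge x^*_k$, where $g_k(m):=\inf_{x_k=m}\Lambda^*(x)$ is convex and minimal, equal to $0$, at $m=x^*_k$, so the branches match in value and slope), and $\sup_m h_k(m)=\sup_x(x_k-\Lambda^*(x))=\Lambda^{**}(e_k)=\Lambda(e_k)=0$, attained at $m=(\Lambda'(e_k))_k=\tilde x^*_k$, so $h_k$ is nondecreasing on $(-\infty,\tilde x^*_k]$. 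As $y<\hat\beta^*=\min_k\tilde x^*_k$, the map $m\mapsto\max_k h_k(m)=m-I_M(m)$ is nondecreasing on $[\beta^*,y]$ (with value $\beta^*$ at $m=\beta^*$ since $I_M(\beta^*)=0$), while $m-I_M(m)\le m<\beta^*$ for $m<\beta^*$; hence the supremum is attained at $m=y$. Finally $y-I_M(y)=y+\max_i\inf_\lambda\{-\lambda y+\Lambda(\lambda e_i)\}<\min(0,y)$, because each convex $f_i(\lambda):=-\lambda y+\Lambda(\lambda e_i)$ satisfies $f_i(0)=0$, $f_i(1)=-y$, $f_i'(0)=x^*_i-y<0$ and $f_i'(1)=\tilde x^*_i-y>0$, so its infimum is attained in $(0,1)$ and is strictly below $\min(f_i(0),f_i(1))=\min(0,-y)$.
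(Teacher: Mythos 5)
Your proof is correct, and for \eqref{limitcall}, \eqref{ldput} and \eqref{ldcall} it follows essentially the same route as the paper: the same decomposition $\E[(e^{\B_T}-e^{yT})_+]=\E[e^{\B_T}\mathds{1}_{\tilde\B_T>y}]-e^{yT}\P(\tilde\B_T>y)$, the same $\delta$-sandwiches, and the same appeals to parts 1--4 of Lemma \ref{bounds.lm} (your explicit use of call--put parity for the first equality in \eqref{ldput} is just the martingale property the paper assumes at the start of Section \ref{sec:AsymptoticPricing}). The only genuine divergence is the regime $y\in(\beta^*,\hat\beta^*)$: both you and the paper write $1-\E[(e^{\B_T}-e^{yT})_+]=e^{yT}\P(\tilde\B_T\ge y)+\E[e^{\B_T}\mathds{1}_{\tilde\B_T< y}]$ and take the maximum of the two rates, but to show the expectation term does not dominate, the paper stays on the dual side: by \eqref{bnd3} its rate is $y+\max_j\inf_{\lambda^j\le1,\,\lambda^i\le0,\,i\ne j}\{\Lambda(\lambda)-y\langle\lambda,\mathbf{1}\rangle\}$, which is at most $y+\max_j\inf_{\lambda\in\R}f_j(\lambda)$ simply because the ray $\{\lambda e_j:\lambda\le1\}$ lies in the constraint set and the unconstrained infimum of $f_j$ is attained in $(0,1)$; this is exactly the rate of the probability term. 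You instead argue on the primal side, writing the rate as $\sup_{m\le y}\bigl(m-I_M(m)\bigr)$ with $I_M(m)=\inf_{\max_j x_j=m}\Lambda^*(x)$ and proving monotonicity of $m\mapsto m-I_M(m)$ on $[\beta^*,\hat\beta^*)$ via concavity of $h_k(m)=m-\inf_{x_k\ge m}\Lambda^*(x)$ and the location of its maximum at $\tilde x^*_k$. This is longer than the paper's one-line relaxation, but it is correct and gives slightly more (the two terms have exactly the same exponential rate, where the paper only needs an inequality); note that concavity of $h_k$ follows at once from convexity of $m\mapsto\inf_{x_k\ge m}\Lambda^*(x)$, which is cleaner than your branch-matching remark --- the two slopes need not match at $m=x^*_k$, only be ordered, which is all concavity requires. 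Finally, your derivation of the strict bound $<\min(0,y)$ from $f_i(0)=0$, $f_i(1)=-y$ and the interior minimiser combines in one step what the paper obtains separately from Lemma \ref{bounds.lm} part 2 (giving $<0$) and from $f_i'(0)<0$ (giving $<y$).
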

\begin{proof}${}$\\
\noindent\textit{Proof of \eqref{limitcall}.}  We remark that 
\begin{align}
\E\left[(
  e^{\mathcal B_{T}}-e^{yT})_+\right] = \E\left[
  e^{\mathcal B_{T}}\mathds 1_{\tilde\B_T
    > y}\right] - e^{yT}\mathbb P\left[\tilde\B_T >
  y \right]\label{call2terms}
\end{align}
and consider the two terms separately. If $y<0$, the second term
clearly converges to zero. Assume then that $y\geq 0$. Since $\beta^*\leq
0$, by Lemma 
\ref{bounds.lm} part 2, 
$$
\lim_{T\to \infty} \,T^{-1} \log
  e^{yT}\P\left(\tilde \B_T >y \right) <0
$$
This proves that the second
term in \eqref{call2terms} converges to zero. We now focus on the
first term, which satisfies
$$
 \E \left[
  e^{\mathcal B_{T}}\mathds 1_{\tilde\B_T
    > y}\right]  = \sum_{i=1}^n \omega_i \E \left[
  e^{Y^i_{T}}\mathds 1_{\tilde\B_T
    > y}\right].
$$
Fix some $i\in \{1,\dots,n\}$. Then, by Lemma
\ref{bounds.lm} parts 3 and 4, if $y> \tilde x^*_i$ then
$$
\lim_{T\to \infty}\E\left[
  e^{Y^i_{T}}\mathds 1_{\tilde\B_T
    > y}\right]  = 0,
$$
and if $y < \tilde x^*_i$ then 
$$
\lim_{T\to \infty}\E\left[
  e^{Y^i_{T}}\mathds 1_{\tilde\B_T
   \le y}\right]  = 0.
$$
Combining these estimates for different $i$, the proof of
\eqref{limitcall} is complete. 

\smallskip

\noindent \textit{Proof of \eqref{ldput}} The equality 
\[ 
e^{y\,T} (1-e^{-\delta T}) \mathds{1}_{\left\{\tilde \B_T < y - \delta \right\}} \le \left(e^{y\,T} - e^{\B_{T}} \right)_+ \le e^{y\,T} \mathds{1}_{\left\{\tilde \B_T < y \right\}}
\]
holds for every $\delta > 0$ and $T >0$. Then by successively taking the expectation, the logarithm and multiplying by $T^{-1}$, we find
\begin{align*}
& y + T^{-1} \log(1-e^{-\delta T}) + T^{-1} \log\P\left(\tilde \B_T < y-\delta\right) \\
\le \: & T^{-1} \log\E\left[(e^{y\,T}-e^{\B_{T}})_+\right] \le y + T^{-1} \log\P\left( \tilde \B_T < y\right) \:.
\end{align*}
Passing to the limit $T\to \infty$ and using Lemma \ref{bounds.lm}
part 1, the proof is complete. 

\smallskip

\noindent\textit{Proof of \eqref{ldcall}.} We use the inequality
\[ 
e^{\B_{T}} (1-e^{-\delta T}) \mathds{1}_{\left\{y<\tilde \B_T  - \delta \right\}} \le \left(e^{\B_{T} } - e^{y\,T} \right)_+ \le e^{\B_{T}} \mathds{1}_{\left\{y<\tilde \B_T \right\}}.
\]
Consider for instance the upper bound. Taking the expectation and the
logarithm, we obtain $\log \mathbb \E[ e^{\B_{T}}
\mathds{1}_{\left\{\tilde \B_T>y \right\}}] = \log  \sum_{j=1}^n \omega_j\E\left[ e^{Y^j_{T}} \mathds{1}_{\left\{\tilde \B_T> y \right\}}\right]
$ and thus
\begin{align*}
T^{-1}\log \mathbb \E[ e^{\B_{T}}
\mathds{1}_{\left\{\tilde \B_T>y \right\}}] &\le  \max_{j=1,\dots,n} T^{-1} \log
 \E\left[ e^{Y^j_{T}}
   \mathds{1}_{\left\{\tilde \B_T> y \right\}}\right],\\
 T^{-1}\log \mathbb \E[ e^{\B_{T}}
\mathds{1}_{\left\{\tilde \B_T>y+\delta \right\}}] & \ge \max_{j=1,\dots,n} T^{-1}\log
 \E\left[ e^{Y^j_{T}}
   \mathds{1}_{\left\{\tilde \B_T> y+\delta \right\}}\right] +\log(\omega_j)/T.
\end{align*}
The result then follows from Lemma \ref{bounds.lm}, part 4. 

\smallskip

\noindent\textit{Proof of \eqref{ldcovcall}.} We use the following
identity. 
\begin{align*}
1
- \E[(e^{\B_{T}}-e^{yT})_+] &= \E[e^{\B_{T}}- (e^{\B_{T}}-e^{yT})_+]\\&= e^{yT} \P[\tilde\B_T >
y] + \E[e^{\B_{T}}
\mathds 1_{\tilde\B_T \leq y}].
\end{align*}
By Lemma \ref{bounds.lm}, part 2, 
\begin{align*}
\lim_{T\to \infty} T^{-1} \log e^{yT}\P[\tilde\B_T > y]  =y+\max_{i=1,\dots,n}
\inf_{\lambda \in \mathbb R} \{-\lambda y + \Lambda(\lambda e_i)\}
  <0. 
\end{align*}
Consider the function $f_i:\mathbb R\to \mathbb R$, $f_i(\lambda) = -\lambda y + \Lambda(\lambda
e_i)$. Since $f_i(0) = 0$ and $f'_i(0)  = -y + x^*_i<0$, it follows
that also 
$$
y + \max_{i=1,\dots,n}
\inf_{\lambda \in \mathbb R} \{-\lambda y + \Lambda(\lambda e_i)\} <
y. 
$$

On the other hand, by Lemma \ref{bounds.lm}, part 3, 
\begin{align*}
\lim_{T\to \infty} T^{-1} \log\E[e^{\B_{T}}
\mathds 1_{\tilde\B_T \leq y}]&= y+\max_{j=1,\dots,n}\inf_{\lambda^j \leq 1,
      \lambda^i\leq 0, i\neq j} \{ 
    \Lambda(\lambda) - y \langle \lambda, \mathbf 1\rangle\}\\
& \leq y+\max_{j=1,\dots,n}\inf_{\lambda \leq 1} f_j(\lambda). 
\end{align*}
Since, for $y\in (\beta^*,\hat \beta^*)$, $f_j'(0)<0$ and $f_j'(1)>0$, the
infimum is attained on the interval $(0,1)$, and the contribution of
this term is less  than the one of the first term. 
The properties of the logarithm allow to conclude the proof. 
\end{proof}

\subsection{Implied volatility asymptotics}\label{ivasymp.sec}
In the Black-Scholes model with volatility $\sigma$,  we have (see, e.g. \cite{FordeJacquier2011}, Corollary 2.12)
\begin{align*}
\lim_{T\to \infty} T^{-1} \log (C^{BS}(T,yT,\sigma)+
  e^{yT} - 1)&= - \frac{1}{2}\left(\frac{\sigma}{2} -
  \frac{y}{\sigma}\right)^2,\quad  y \leq -\frac{\sigma^2}{2}\\
\lim_{T\to \infty} T^{-1} \log C^{BS}(T,yT,\sigma) &= - \frac{1}{2}\left(\frac{\sigma}{2} -
  \frac{y}{\sigma}\right)^2,\quad  y \geq \frac{\sigma^2}{2}\\
\lim_{T\to \infty} T^{-1} \log \left(1 - C^{BS}(T,yT,\sigma)\right) &= - \frac{1}{2}\left(\frac{\sigma}{2} -
  \frac{y}{\sigma}\right)^2, \quad -\frac{\sigma^2}{2} < y <
  \frac{\sigma^2}{2}. 
\end{align*}
Under the Wishart model, for the basket option, we can write:
\begin{align}
\lim_{T\to \infty} T^{-1} \log \E\left[(e^{yT} -
  e^{\B_{T}})_+\right] &= - L(y),\quad  y \leq \beta^*\label{eqconvopt.eq}\\
\lim_{T\to \infty} T^{-1} \log \E\left[(e^{\B_{T}}
  -e^{yT} )_+\right] &= - L(y),\quad  y \geq \tilde \beta^*\notag\\
\lim_{T\to \infty} T^{-1} \log \left(1 - \E\left[(e^{\B_{T}}
  -e^{yT} )_+\right] \right) &= - L(y), \quad \beta^* < y <
  \hat \beta^*,\notag
\end{align}
where 
\begin{align*}
L(y) &= -y - \inf_{\lambda\in \mathbb R^n: \lambda_i \leq 0, i=1,\dots,n}
\{ \Lambda(\lambda)-y \langle \lambda, \mathbf 1\rangle \},\quad y
\leq \beta^* \\
L(y) &= -\max_{i,j=1,\dots,n}\inf_{\lambda
  \in \mathbb R}
  \{- \lambda y  + \Lambda(\lambda e_i + e_j)\},\quad y\geq \tilde
        \beta^*\\
L(y) &= -y - \max_{i=1,\dots,n}\inf_{\lambda \in \mathbb R} \{-\lambda y + \Lambda(\lambda e_i)\},
        \quad \beta^* < y< \hat
        \beta^*.
\end{align*}

We deduce (see \cite{Ja:Ke:Mi2013} for details) that the limiting implied
volatility of a basket option in the Wishart model is given by
\begin{align}
\sigma_\infty(y) = \sqrt{2}\left(\xi\sqrt{ L(y)+ y} + \eta\sqrt{L(y)} \right), \label{iv}
\end{align}
where $\xi$ and $\eta$ are constants with $\xi^2 = \eta^2 = 1$, which
must be chosen to satisfy the conditions
\begin{align*}
&y \leq -\frac{\sigma_\infty^2(y)}{2} &&\text{if}\quad y \leq
  \beta^*\\
&y \geq \frac{\sigma_\infty^2(y)}{2}  &&\text{if}\quad y \geq
 \tilde \beta^*\\
&-\frac{\sigma_\infty^2(y)}{2}  < y < \frac{\sigma_\infty^2(y)}{2}
   &&\text{if}\quad \beta^* < y < 
  \tilde \beta^*.
\end{align*}
 First of all
remark that by taking $\lambda=0$ and $\lambda = e_i$ it follows that
$L(y)\geq y$ and $L(y) \geq 0$, so that the expressions under the
square root sign are positive. It is easy to see that for $y\leq \beta^*$, these conditions imply
$\xi=-1$ and $\eta=1$ since $b^*<0$ and $-y\le L(y)$, and for $y\geq \tilde \beta^*$ one has $\xi = 1$ and
$\eta = -1$.  For $\beta^*< y< \hat \beta^*$, we still have $|y|\le \max(L(y),L(y)+y)$ and to satisfy the
conditions in this case and $\sigma_\infty(y)>0$, one must take $\xi=\eta = 1$. 

The case when $\hat \beta^* < y< \tilde \beta^*$ requires a specific
treatment. It is characterized by the following proposition.
\begin{proposition}
Let $\hat \beta^* < y < \tilde \beta^*$. Then, $\sigma_\infty(y) =
\sqrt{2y}$ and 
$$
\sigma(T,yT)  = \sqrt{2y} + N^{-1}(C_\infty(y))
T^{-1/2} + \smallO{T^{-1/2}}
$$
as $T\to \infty$, where $C_\infty(y) = \sum_{i=1}^n \omega_i
\mathds 1_{\tilde x^*_i > y}$. 
\end{proposition}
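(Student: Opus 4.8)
The plan is to exploit that, when $\hat\beta^*<y<\tilde\beta^*$, the renormalized log-strike $y$ sits exactly on the boundary $y=\frac{1}{2}\sigma_\infty^2(y)$ between the two ``exponential'' Black--Scholes regimes of Section~\ref{ivasymp.sec}; there the basket call price neither decays to $0$ nor tends to $1$ but converges to a constant in $(0,1)$, and the implied volatility is then recovered by a second-order (central-limit-type) inversion of the Black--Scholes formula around $\sqrt{2y}$. The first step is to identify that constant: by \eqref{limitcall} (which applies since $y\neq\tilde x^*_i$ for all $i$),
\[
C(T,yT)=\E\big[(e^{\B_T}-e^{yT})_+\big]\longrightarrow C_\infty(y):=\sum_{i=1}^n\omega_i\,\mathds 1_{\tilde x^*_i>y}\qquad(T\to\infty),
\]
and one checks that $\hat\beta^*<y<\tilde\beta^*$ forces $0<C_\infty(y)<1$: there is at least one index with $\tilde x^*_i>y$ because $y<\max_i\tilde x^*_i$, at least one with $\tilde x^*_i<y$ because $y>\min_i\tilde x^*_i$, and $\sum_i\omega_i=1$. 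Hence $c^*:=N^{-1}(C_\infty(y))$ is well defined, and since $y>\hat\beta^*>0$ one also has that $\sigma\mapsto C^{BS}(T,yT,\sigma)$ is continuous and strictly increasing, so that $\sigma(T,yT)$ is well defined once $T$ is large.

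The core of the argument is the Black--Scholes asymptotics: for every fixed $c\in\R$,
\[
C^{BS}\big(T,\,yT,\,\sqrt{2y}+cT^{-1/2}\big)\longrightarrow N(c)\qquad(T\to\infty).
\]
Writing $\sigma_T=\sqrt{2y}+cT^{-1/2}$ one gets $\frac{1}{2}\sigma_T^2T-yT=\sqrt{2y}\,c\,\sqrt T+O(1)$, so the first Black--Scholes argument $d_1=(\frac{1}{2}\sigma_T^2T-yT)/(\sigma_T\sqrt T)$ tends to $c$, while $d_2=d_1-\sigma_T\sqrt T\to-\infty$; moreover $yT-\frac{1}{2}d_2^2=-\frac{T}{2\sigma_T^2}(\frac{1}{2}\sigma_T^2-y)^2\to-\frac{c^2}{2}$, so the Gaussian tail bound $N(-x)\le(x\sqrt{2\pi})^{-1}e^{-x^2/2}$ gives $e^{yT}N(d_2)\le(|d_2|\sqrt{2\pi})^{-1}e^{\,yT-d_2^2/2}\to 0$. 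Hence $C^{BS}(T,yT,\sigma_T)=N(d_1)-e^{yT}N(d_2)\to N(c)$.

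The conclusion follows by a monotone squeeze. Fix $\delta>0$; since $N$ is strictly increasing, $N(c^*-\delta)<C_\infty(y)<N(c^*+\delta)$, so combining the last two displays yields, for all large $T$,
\[
C^{BS}\big(T,yT,\sqrt{2y}+(c^*-\delta)T^{-1/2}\big)<C(T,yT)<C^{BS}\big(T,yT,\sqrt{2y}+(c^*+\delta)T^{-1/2}\big).
\]
Strict monotonicity of $C^{BS}$ in $\sigma$, together with $C^{BS}(T,yT,\sigma(T,yT))=C(T,yT)$, then traps $\sigma(T,yT)$ strictly between $\sqrt{2y}+(c^*-\delta)T^{-1/2}$ and $\sqrt{2y}+(c^*+\delta)T^{-1/2}$ for $T$ large. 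Letting $\delta\downarrow 0$ gives $(\sigma(T,yT)-\sqrt{2y})\sqrt T\to c^*=N^{-1}(C_\infty(y))$, which is exactly the stated expansion, and in particular $\sigma_\infty(y)=\lim_{T\to\infty}\sigma(T,yT)=\sqrt{2y}$.

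The main obstacle is the Black--Scholes step: because $y$ is at the critical value $\frac{1}{2}\sigma_\infty^2(y)$, none of the three large-maturity regimes recalled in Section~\ref{ivasymp.sec} applies directly (the call price is neither exponentially small nor exponentially close to $1$), so one must track the precise Gaussian scaling --- in particular verify that the $e^{yT}N(d_2)$ term, which at criticality is only polynomially (not exponentially) small, still vanishes, and that $d_1$ converges to $c$ rather than drifting. The remaining steps (the limit $C(T,yT)\to C_\infty(y)$ from \eqref{limitcall}, and the monotone inversion) are routine.
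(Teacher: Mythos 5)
Your proof is correct, and it rests on the same two pillars as the paper's: the convergence $C(T,yT)\to C_\infty(y)$ from \eqref{limitcall}, and the Mills-ratio argument showing that the term $e^{yT}N(d_2)$ in the Black--Scholes price is negligible at the critical strike. The difference is in how the inversion is carried out. The paper works directly with the unknown implied volatility: it applies the Mills bound at $\sigma(T,yT)$ to deduce the asymptotic relation $\bigl(-y+\tfrac{1}{2}\sigma(T,yT)^2\bigr)/\sigma(T,yT)=N^{-1}(C_\infty(y))T^{-1/2}+\smallO{T^{-1/2}}$, and then inverts the map $f(z)=-y/z+z/2$ explicitly via $f^{-1}(x)=x+\sqrt{x^2+2y}$. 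You instead evaluate the Black--Scholes price only at the deterministic candidate volatilities $\sqrt{2y}+cT^{-1/2}$, show it converges to $N(c)$, and then sandwich $\sigma(T,yT)$ using strict monotonicity of $C^{BS}$ in $\sigma$. Your route is slightly more self-contained: it never needs to control the asymptotics of quantities evaluated at the unknown $\sigma(T,yT)$ (the paper must first argue that $\tfrac{y+\sigma(T,yT)^2/2}{\sigma(T,yT)}\sqrt{T}\to\infty$), and it makes explicit the check $0<C_\infty(y)<1$, which is exactly what $\hat\beta^*<y<\tilde\beta^*$ provides and which the paper leaves implicit; the paper's route, in exchange, produces the intermediate asymptotic equation for $d_1$ in the spirit of the reference it follows. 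One shared caveat: both your argument and the paper's invoke \eqref{limitcall}, which was established under the hypothesis $y\neq\tilde x^*_i$ for all $i$; for $n\geq 3$ the condition $\hat\beta^*<y<\tilde\beta^*$ does not by itself rule out $y=\tilde x^*_i$ for an intermediate index, so your parenthetical ``which applies since $y\neq\tilde x^*_i$ for all $i$'' is an assumption rather than a consequence --- but this mirrors the paper's own implicit assumption and is not a gap specific to your proof.
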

\begin{proof}We follow the arguments of the proof of Theorem 3.3 in
  \cite{jacquier2018implied} with some minor changes. 
The Black-Scholes call option price satisfies
$$
C^{BS}(T,yT,\sigma) = N\left(\frac{-y+\frac{\sigma^2}{2}}{\sigma}
  \sqrt{T}\right) - e^{yT} N\left(\frac{-y-\frac{\sigma^2}{2}}{\sigma}
  \sqrt{T}\right).  
$$
We have by definition of the implied volatility and equation~\eqref{limitcall},
$$
C^{BS}(T,yT,\sigma(t,yT)) = C(T,yT) \underset{T\to +\infty}\to C_\infty(y). 
$$
Since $y>\hat \beta^*>0$, as $T\to \infty$, we get
necessarily $\frac{y+\frac{\sigma(T,yT)^2}{2}}{\sigma(T,yT)}
  \sqrt{T}\to +\infty$. Using the classical bound on the Mills ratio
  $N(-x)\leq x^{-1} \phi(x)$ for $x>0$, where $\phi$ is the standard Gaussian
  density, we have 
$$
e^{yT} N\left(\frac{-y-\frac{\sigma(T,yT)^2}{2}}{\sigma(T,yT)}
  \sqrt{T}\right) \le  \phi\left(\frac{y-\frac{\sigma(T,yT)^2}{2}}{\sigma(T,yT)}
  \sqrt{T}\right) \frac{\sigma(T,yT)}{\left(y+\frac{\sigma(T,yT)^2}{2}\right)\sqrt{T}} \to 0
$$
as $T\to \infty$. Therefore,
\begin{align}
\frac{-y+\frac{\sigma(T,yT)^2}{2}}{\sigma(T,yT)}
  = N^{-1}(C_\infty(y)) T^{-1/2} + \smallO{T^{-1/2}}. \label{limd1}
\end{align}
Consider now the function $f(z) = -\frac{y}{z} + \frac{z}{2}$. Its
inverse which is positive in the neighborhood of zero is given by 
$$
f^{-1}(x) = x+ \sqrt{x^2 + 2y}
$$
Applying $f^{-1}$ to both sides of \eqref{limd1} and neglecting terms
of order $\smallO{T^{-1/2}}$, the proof is complete. 
\end{proof}

\section{Variance reduction} \label{sec:VarianceReduction}

Denote $P(S_T)$ the payoff of a European option on $(S_T^1,...,S_T^n)$. The
price of an option is generally calculated as the expectation
$\E(P(S_T))$ under a certain risk-neutral measure $\P$. When the
number of assets $n$ is low, this expectation may be evaluated by
Fourier inversion, however, when the dimension is large, as in the
case of index options, Monte Carlo is the method of choice. The
standard Monte Carlo estimator of $\E(P(S_T))$ with $N$ samples is given by
$$
\widehat P_N = \frac{1}{N} \sum_{j=1}^N P(S^{(j)}_T),
$$
where $S^{(j)}_T$ are i.i.d.~samples of $S_T$ under the measure
$\P$. The variance of the standard Monte Carlo estimator is given by
$$
\text{Var}[\widehat P_N] = \frac{1}{N} \text{Var}[P(S_T)],
$$
and is often too high for real-time applications. To decrease the
computational time, various variance reduction methods have been
proposed, the most popular being importance sampling.

The importance sampling method is based on the following identity, valid for any probability measure $\Q$, with respect to which $\P$ is absolutely continuous. 
$$
\E[P(S_T)] = \E^{\mathbb Q}\left[\frac{d\mathbb
    P}{d\mathbb Q}P(S_T)\right].
$$
This allows one to define the {importance sampling estimator}
$$
\widehat P^{\mathbb Q}_N := \frac{1}{N} \sum_{j=1}^N \left[\frac{d\mathbb
  P}{d\mathbb Q}\right]^{(j)} P(S^{(j),\Q}_{T}),
$$
where $S^{(j),\Q}_T$ are i.i.d.~samples of $S_T$ under the
measure $\mathbb Q$. For efficient variance reduction, one needs then to find a probability measure $\mathbb Q$
such that $S_T$ is {easy to simulate} under $\mathbb Q$ and the variance 
$$
\text{Var}_{\mathbb Q} \left[ P(S_T)\frac{d\mathbb P}{d\mathbb Q}
\right] = \E^{\P}\left[P(S_T)^2\frac{d\P}{d\Q}\right] -
\E^\P[P(S_T)]^2
$$ 
is considerably smaller than the original variance $\text{Var}_{\mathbb P} \left[P(S) \right]$. 

In this paper we consider the class of measure changes $\left\{\P_\theta\,:\, \theta \in \R^n \right\}$, where
\[
\frac{d\P_\theta}{d\P} = \frac{e^{\theta^\top Y_T}}{\E\left[e^{\theta^\top Y_T}\right]}\:.
\]
To find the optimal variance reduction parameter $\theta^*$, we
therefore need to minimize the variance of the estimator under
$\mathbb Q$, or, equivalently, the expectation
$$
\E^{\P}\left[P(S_T)^2\frac{d\P}{d\P_\theta}\right]. 
$$

\subsection{Asymptotic variance reduction}
Denoting $H(Y_T):=\log P\left(e^{Y_T}\right)$, the optimization problem writes
\begin{equation}\label{eq:OptimizationProblemExact}
\inf_{\theta \in \R^n} \E\left[ \exp\left( 2H(Y_T) - \theta^\top Y_T + \G_1(\theta) \right)\right]\:,
\end{equation}
where
\[
\G_\epsilon(\theta) := \epsilon\log \E\left[e^{\frac{\theta^\top Y_T^\epsilon}{\epsilon} }\right] \:.
\]
Since we cannot compute the minimizer for this expression explicitly,
we instead choose to minimize an asymptotic proxy for the variance,
based on Varadhan's lemma (Theorem \ref{thm:Varadhan}). This proxy is
introduced in the following proposition. 
\begin{proposition}\label{prop:AsymptoticProblem}
Let $H: \mathbb R^n \to \mathbb R\cup \{-\infty\}$ be a continuous
function and $\theta \in \R^n$ be such that there exists $\gamma > 1$ with
\begin{equation}\label{eq:ExponentialBound}
\limsup_{\epsilon \rightarrow 0} \epsilon \log \E\left[ \exp\left\{ \gamma\,\frac{2H(Y_T^\epsilon) - \theta^\top Y_T^\epsilon}{\epsilon}\right\}\right] < \infty \:.
\end{equation}
Then
\begin{align*}
& \lim_{\epsilon \rightarrow 0} \epsilon \log \E\left[ \exp\left\{ \frac{2H(Y_T^\epsilon) - \theta^\top Y_T^\epsilon
+ \G_\epsilon(\theta)}{\epsilon}\right\}\right] \\
& \qquad \qquad \qquad \qquad \qquad = \sup_{y \in \R^n} \left\{ 2H(y) - \theta^\top y - \Lambda^*(y) \right\} + \Lambda(\theta) \:.
\end{align*}
\end{proposition}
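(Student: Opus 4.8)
The plan is to separate the deterministic normalising term from the random part. Since $\G_\epsilon(\theta)$ does not depend on the realisation of $Y_T^\epsilon$, it factors out of the expectation, giving
\[
\epsilon \log \E\left[ \exp\left\{ \frac{2H(Y_T^\epsilon) - \theta^\top Y_T^\epsilon + \G_\epsilon(\theta)}{\epsilon}\right\}\right] = \G_\epsilon(\theta) + \epsilon \log \E\left[ \exp\left\{ \frac{2H(Y_T^\epsilon) - \theta^\top Y_T^\epsilon}{\epsilon}\right\}\right].
\]
The proof then reduces to passing to the limit in each of the two terms on the right. For the first, Proposition~\ref{prop:ExponentialCondition} together with Remark~\ref{rem:ExponentialCondition2} yield $\G_\epsilon(\theta) \to \Lambda(\theta)$ as $\epsilon \to 0$, for every $\theta \in \R^n$. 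For the second, I would invoke the extended Varadhan lemma.

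Concretely, set $\varphi(y) := 2H(y) - \theta^\top y$, which is continuous from $\R^n$ to $\R \cup \{-\infty\}$ because $H$ is. By Theorem~\ref{thm:LDP} the family $(Y_T^\epsilon)_{\epsilon>0}$ satisfies a large deviations principle with good rate function $\Lambda^*$, and hypothesis~\eqref{eq:ExponentialBound} is exactly the moment condition $\limsup_{\epsilon\to 0}\epsilon\log\E[\exp(\gamma\varphi(Y_T^\epsilon)/\epsilon)]<\infty$ for some $\gamma>1$ required by Theorem~\ref{thm:Varadhan}. Applying that theorem with $A = \R^n$, for which $A^\circ = \bar A = \R^n$, the $\liminf$ and $\limsup$ coincide and
\[
\lim_{\epsilon\to 0}\epsilon \log \E\left[ \exp\left\{ \frac{2H(Y_T^\epsilon) - \theta^\top Y_T^\epsilon}{\epsilon}\right\}\right] = \sup_{y\in\R^n}\left\{ 2H(y) - \theta^\top y - \Lambda^*(y)\right\}.
\]
Adding the two limits gives the claimed identity.

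I do not anticipate a real obstacle: the argument is essentially a one-line reduction to Varadhan's lemma once the deterministic factor is isolated. The points requiring care are (i) that the above decomposition is legitimate, i.e.\ that $\G_\epsilon(\theta)$ is finite for each fixed $\epsilon>0$ so that its logarithm is well defined --- this holds for $\theta\in\U$ via the explicit Laplace transform of Proposition~\ref{prop:LaplaceExplicitForm}, which is the range relevant for the importance sampling application; (ii) that the hypothesis~\eqref{eq:ExponentialBound} matches verbatim the moment condition of the extended Varadhan lemma (Theorem~\ref{thm:Varadhan}) with test function $\varphi = 2H - \theta^\top(\cdot)$, rather than the classical boundedness-from-above condition; and (iii) that in the degenerate case $\theta\notin\U$ one has $\Lambda(\theta)=+\infty$ and both sides of the identity are $+\infty$, so the statement is trivially true and its content lies in the case $\theta\in\U$.
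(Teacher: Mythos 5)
Your proof is correct and follows essentially the same route as the paper: isolate the deterministic term $\G_\epsilon(\theta)$, send it to $\Lambda(\theta)$ via Proposition~\ref{prop:ExponentialCondition}, and handle the random part with the extended Varadhan lemma (Theorem~\ref{thm:Varadhan}) applied to the continuous function $y\mapsto 2H(y)-\theta^\top y$ under hypothesis~\eqref{eq:ExponentialBound}. The additional care you take in points (i)--(iii) is a welcome clarification but not a departure from the paper's argument.
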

\begin{proof}
By Theorem \ref{thm:Varadhan}, 
\begin{equation}\label{eq:VaradhanProof1}
\lim_{\epsilon \rightarrow 0} \epsilon \log \E\left[ \exp\left\{ \frac{2H(Y_T^\epsilon) - \theta^\top Y_T^\epsilon}{\epsilon}\right\}\right] = \sup_{y \in \R^n} \left\{2H(y) - \theta^\top y - \Lambda^*(y)\right\} \:.
\end{equation}
Furthermore, by Proposition \ref{prop:ExponentialCondition}, 
\begin{equation}\label{eq:VaradhanProof2}
\epsilon \log \E\left[ \exp\left\{ \frac{\G_\epsilon(\theta)}{\epsilon}\right\}\right] = \G_\epsilon(\theta) \underset{\epsilon \rightarrow 0}{\longrightarrow} \Lambda(\theta) \:.
\end{equation}
Multiplying \eqref{eq:VaradhanProof1} and \eqref{eq:VaradhanProof2} finishes the proof.
\end{proof}
\begin{remark}
In particular, if $H$ is continuous and bounded from above and $\theta$ is such that $\phi(-\theta) \in \SPP$, condition \eqref{eq:ExponentialBound} is met.
\end{remark}
\begin{definition}
A parameter $\theta^* \in \R^n$ is \textit{asymptotically optimal} if it
achieves the infimum in the minimisation problem
\begin{equation} \label{eq:AsymptoticProblem}
\inf_{\theta \in \R^n} \sup_{y \in \R^n} \left\{ 2H(y) - \theta^\top y - \Lambda^*(y) \right\} + \Lambda(\theta) \:.
\end{equation}
\end{definition}
\begin{theorem}\label{thm:OptimalTheta}
Let $H$ be a concave upper semi-continuous function. Then
\[
\inf_{\theta \in \R^n} \sup_{y \in \R^n} \left\{ 2H(y) - \theta^\top y - \Lambda^*(y) \right\} + \Lambda(\theta) = 2 \inf_{\theta \in \R^n} \left\{ \hat{H}(\theta) + \Lambda(\theta) \right\} \:,
\]
where 
\[
\hat{H}(\theta) = \sup_{y \in \R^n} \left\{ H(y) - \theta^\top y \right\} \:.
\]
Furthermore, if $\theta^*$ minimizes the right-hand side, it also
minimizes the left-hand side.
\end{theorem}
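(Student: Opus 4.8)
The plan is to rewrite the inner supremum $\sup_{y}\{2H(y)-\theta^\top y-\Lambda^*(y)\}$ by Fenchel duality, and then to exploit the convexity of $\Lambda$ to collapse the remaining infimum over $\theta$. To set up, I would first record the structural facts: by Remark~\ref{rem:ConvexityOfLambda} and the proof of Proposition~\ref{prop:EssentialSmoothness}, $\Lambda$ is a proper, convex, lower semi-continuous function with domain $\U$, so that $\Lambda^{**}=\Lambda$; by Remark~\ref{rem:bounded} the set $\U$ is bounded (and it is convex, being the domain of a convex function), hence $\Lambda^*$ is finite on all of $\R^n$. Also $\hat H$ is convex, being a supremum of affine functions of $\theta$, and $\hat H>-\infty$ since $H$ is proper.

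The heart of the argument is the following identity, valid for every $\theta\in\R^n$, which I would obtain by applying the Fenchel duality theorem (e.g. \cite[Thm.~31.1]{rockafellar1970convex}) to the finite convex function $y\mapsto\Lambda^*(y)+\theta^\top y$ and the proper concave function $y\mapsto 2H(y)$:
\[
\sup_{y\in\R^n}\bigl\{2H(y)-\theta^\top y-\Lambda^*(y)\bigr\}=\inf_{\eta\in\R^n}\bigl\{2\hat H(\eta)+\Lambda(2\eta-\theta)\bigr\}.
\]
The constraint qualification is trivially met because $\Lambda^*$ has full domain; computing the convex conjugate $\sup_y\{\langle\psi-\theta,y\rangle-\Lambda^*(y)\}=\Lambda^{**}(\psi-\theta)=\Lambda(\psi-\theta)$ and the concave conjugate $\inf_y\{\langle\psi,y\rangle-2H(y)\}=-2\hat H(\psi/2)$, and substituting $\psi=2\eta$, gives the displayed equality. (The ``$\le$'' inequality here is elementary from the definitions of $\hat H$ and $\Lambda^*$; only the reverse inequality uses duality, hence the boundedness of $\U$.)

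Substituting this into the left-hand side of the theorem yields $\inf_{\eta,\theta}\{2\hat H(\eta)+\Lambda(2\eta-\theta)+\Lambda(\theta)\}$. For fixed $\eta$, midpoint convexity of $\Lambda$ gives $\Lambda(2\eta-\theta)+\Lambda(\theta)\ge 2\Lambda(\eta)$ with equality at $\theta=\eta$ (and trivially so when $\eta\notin\U$, since $\U$ is convex), so the double infimum equals $\inf_\eta\{2\hat H(\eta)+2\Lambda(\eta)\}=2\inf_\theta\{\hat H(\theta)+\Lambda(\theta)\}$, which is the claimed identity. For the last assertion, if $\theta^*$ minimises $\hat H+\Lambda$ then $\theta^*\in\U$, and taking $\eta=\theta^*$ in the inner infimum of the displayed formula shows that $\sup_y\{2H(y)-(\theta^*)^\top y-\Lambda^*(y)\}+\Lambda(\theta^*)\le 2\bigl(\hat H(\theta^*)+\Lambda(\theta^*)\bigr)$, i.e. the left-hand objective evaluated at $\theta^*$ is at most the common value of the two sides; since it cannot be strictly smaller than that value (which equals the infimum defining the left-hand side), $\theta^*$ attains it.

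The main obstacle is the duality step: one must set up the Fenchel / conjugate-of-a-sum argument with the correct pair of conjugates (convex versus concave), keep track of the factor $2$, and notice that the constraint qualification comes for free only because $\Lambda^*$ is everywhere finite --- which is exactly where the boundedness of $\U$ from Remark~\ref{rem:bounded} is needed. Everything else reduces to the elementary midpoint-convexity estimate for $\Lambda$ and routine bookkeeping.
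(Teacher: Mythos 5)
Your argument is correct and is essentially the paper's proof in different packaging: the paper writes $-\Lambda^*(y)=\inf_{\lambda}\{\Lambda(\lambda)-\lambda^\top y\}$ and interchanges $\sup_y$ with $\inf_\lambda$ via the concave--convex minimax theorem (Corollary 37.3.2 of \cite{rockafellar1970convex}, using the boundedness of $\U$ from Remark \ref{rem:bounded}), which after the substitution $\lambda=2\eta-\theta$ is exactly your Fenchel-duality identity $\sup_y\{2H(y)-\theta^\top y-\Lambda^*(y)\}=\inf_\eta\{2\hat H(\eta)+\Lambda(2\eta-\theta)\}$. From there the midpoint-convexity step (with equality at $\lambda=\theta$, i.e.\ $\eta=\theta$) and the argument transferring the minimizer $\theta^*$ coincide with the paper's, so the two proofs differ only in whether the sup--inf interchange is justified by Fenchel's duality theorem (Theorem 31.1, with the constraint qualification supplied by the finiteness of $\Lambda^*$) or by the minimax corollary, both of which ultimately rest on the boundedness of $\U$.
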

\begin{proof}
We follow the idea of the proof of \cite[Theorem 8]{Ge:Ta}, with some
major simplifications due to the present finite-dimensional setting. By definition of $\Lambda^*$,
\begin{align*}
\inf_{\theta \in \R^n} & \sup_{y \in \R^n} \left\{ 2H(y) - \theta^\top y - \Lambda^*(y) + \Lambda(\theta) \right\} \\
= \: & \inf_{\theta \in \R^n} \sup_{y \in \R^n} \left\{ 2H(y) - \theta^\top y - \sup_{\lambda \in \R^n} \left\{\lambda^\top y - \Lambda(\lambda)\right\} + \Lambda(\theta) \right\} \\
& \quad = \: \inf_{\theta \in \R^n} \sup_{y \in \R^n} \inf_{\lambda \in \R^n} \left\{ 2H(y) - \theta^\top y -\lambda^\top y + \Lambda(\lambda) + \Lambda(\theta) \right\} \:.
\end{align*}
The function
\[
(y, \lambda) \mapsto 2H(y) - \theta^\top y -\lambda^\top y + \Lambda(\lambda) + \Lambda(\theta)
\]
is concave-convex on $\R^n \times \U$ where $\U$ is bounded by Remark~\ref{rem:bounded} and both $\R^n$ and $\U$ are convex. Therefore, by the minimax Theorem for concave-convex functions (see, e.g., Corollary 37.3.2 in \cite{rockafellar1970convex}),
\begin{align*}
\sup_{y \in \R^n} \inf_{\lambda \in \R^n} & \left\{ 2H(y) - \theta^\top y -\lambda^\top y + \Lambda(\lambda) + \Lambda(\theta) \right\}\\
= \: & 
\inf_{\lambda \in \R^n} \sup_{y \in \R^n} \left\{ 2H(y) - \theta^\top y -\lambda^\top y + \Lambda(\lambda) + \Lambda(\theta) \right\} \:.
\end{align*}
This allows us to rewrite
\begin{align}
\inf_{\theta \in \R^n} & \sup_{y \in \R^n} \left\{ 2H(y) - \theta^\top y - \Lambda^*(y) + \Lambda(\theta) \right\} \notag\\
= \: & 
\inf_{\theta \in \R^n} \inf_{\lambda \in \R^n} \sup_{y \in \R^n} \left\{ 2H(y) - \theta^\top y -\lambda^\top y + \Lambda(\lambda) + \Lambda(\theta) \right\} \notag\\
& = \: 
2 \inf_{\theta \in \R^n} \inf_{\lambda \in \R^n} \left\{\hat{H}\left(\frac{\theta+\lambda}{2}\right) + \frac{\Lambda(\lambda) + \Lambda(\theta)}{2} \right\} = 2 \inf_{\theta \in \R^n} \left\{ \hat{H}(\theta) + \Lambda(\theta) \right\} \:,\label{lastline}
\end{align}
where the last equality is justified by the fact that, by convexity, 
\[
\frac{\Lambda(\lambda) + \Lambda(\theta)}{2} \ge \Lambda\left( \frac{\lambda + \theta}{2} \right) 
\]
with equality if $\lambda = \theta$.

To prove the last statement of the theorem, assume that the infimum in
the right-hand side of \eqref{lastline} is attained by
$\theta^*$. Then, using the equality of the right-hand side and the
left-hand side, and taking $\lambda=\theta^*$ in the left-hand side,
we see that the same value $\theta^*$ also attains the infimum in
left-hand side. 
\end{proof}
\begin{remark} Similarly to \cite[Definition 6]{Ge:Ta} and to the
  discussion in Section 4 of \cite{Rob2010}, it can be shown that the
  asymptotically optimal $\theta$ in Theorem \ref{thm:OptimalTheta}
  reaches the asymptotic lower bound of the variance on the log-scale
  over all equivalent measure changes.

Let $\Q \sim \P$ be an equivalent measure change. 
Then by Jensen's inequality
\begin{align*}
\lim_{\epsilon \rightarrow 0 } \epsilon \log\E^\Q\left(e^{\frac{2\,H(Y_T^\epsilon)}{\epsilon}} \left(\frac{d\P}{d\Q}\right)^2\right) 
& \ge 2\,\lim_{\epsilon \rightarrow 0 } \epsilon \log\E^\Q\left(e^{\frac{H(Y_T^\epsilon)}{\epsilon}} \frac{d\P}{d\Q}\right) \\
& = 2\,\lim_{\epsilon \rightarrow 0 } \epsilon \log\E\left(e^{\frac{H(Y_T^\epsilon)}{\epsilon}} \right) \:.
\end{align*}
By Theorem \ref{thm:Varadhan}, the right-hand side is equal to
\begin{align*}
2 \sup_{y \in \R^n} \left\{ H(y) - \Lambda^*(y) \right\} 
& = 2 \sup_{y \in \R^n} \inf_{\theta \in \R^n} \left\{ H(y) - \theta^\top y + \Lambda(\theta) \right\} \\
& = 2 \inf_{\theta \in \R^n} \left\{ \sup_{y \in \R^n} \left\{ H(y) - \theta^\top y \right\} + \Lambda(\theta) \right\} \:,
\end{align*}
where the second equality is obtained by the minimax theorem for concave-convex functions  \cite{rockafellar1970convex}, 
already used in the proof of Theorem \ref{thm:OptimalTheta}. But by the same Theorem \ref{thm:OptimalTheta}, this bound is reached when $\theta$ is asymptotically optimal. 
\end{remark}

\section{Numerical results}\label{sec:Numerics}

\subsection{Long-time implied volatility}
Let us now fix the parameters of the model to the values 
\[
b = -\left(\begin{array}{cc}
1.0 & 0.7 \\
0.7 & 0.7
\end{array}\right)\:, \qquad a = \left(\begin{array}{cc}
0.2 & 0 \\
0 & 0.3
\end{array}\right)\]
and $\alpha = 1.5$, with initial values $S_0= \mathds{1}$ and $x =  I_2$ and consider the problem of pricing a basket put option with log-payoff 
\[
H(Y_T) = \log\left(K -  \frac{1}{2} e^{Y_T^1} + \frac{1}{2} e^{Y_T^2}\right)_+.
\]

Figure \ref{smile.fig} shows the implied volatility smile for such an
option, for $T = \frac{1}{3}$, computed by Monte Carlo over 100,000
trajectories, together with the $95\%$ confidence interval. To sample the paths of the process,
we use the exact simulation of the Wishart process described in~\cite{AA2013}, Algorithm~3. Thus, we obtain the values of
$X_{t_i}$ on the regular time grid~$t_i=i \Delta t$, with $i\in \N$ and $\Delta t>0$. Then, for the stock, we use a trapezoidal rule since it gives a second-order weak convergence (see Section 4.3 in~\cite{AA2013} for details):
 $$Y_{t_{i+1}} = Y_{t_i} - \frac{1}{2}\text{diag}\left[ a^\top \frac{X_{t_i}+X_{t_{i+1}}}{2} a\right] \Delta t + \textup{Chol}\left(a^\top \frac{X_{t_i}+X_{t_{i+1}}}{2} a \right)(Z_{t_{i+1}} - Z_{t_i}),$$
where $Z$ is a Brownian motion sampled independently from $X$ and
$\textup{Chol}(M)$ is the Cholesky decomposition of a positive definite matrix $M$.


\begin{figure}
\centerline{\includegraphics[width=0.6\textwidth]{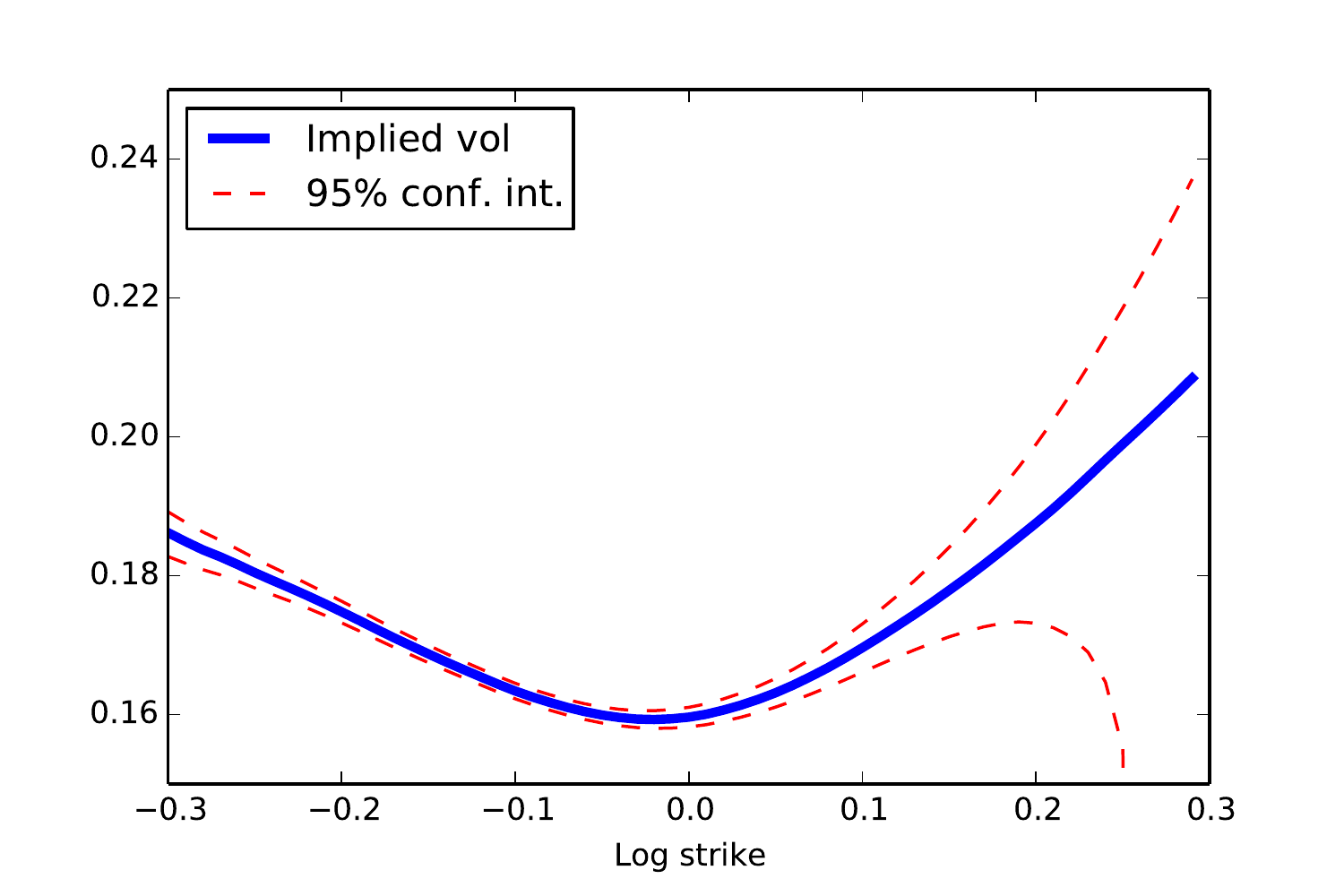}}
\caption{Basket implied volatility smile in the two-dimensional
  Wishart model. The upper and lower bounds correspond to the $95\%$
  confidence interval.}
\label{smile.fig}
\end{figure}

We next analyze the convergence of the renormalized implied volatility
smile to the long-maturity limit described in section
\ref{ivasymp.sec}. Figure \ref{ivasymp.fig}, shows the renormalized
smiles for different maturities together with the limiting
smile. These smiles were computed by Monte Carlo with 100,000
trajectories and a discretization time step $\Delta t = 0.1$. We
see that the convergence indeed appears to take place but it is quite
slow: even for 50-year maturity using the limit as the approximation
for the smile would lead to $10-15\%$ errors. 

\begin{figure}
\centerline{\includegraphics[width=0.7\textwidth]{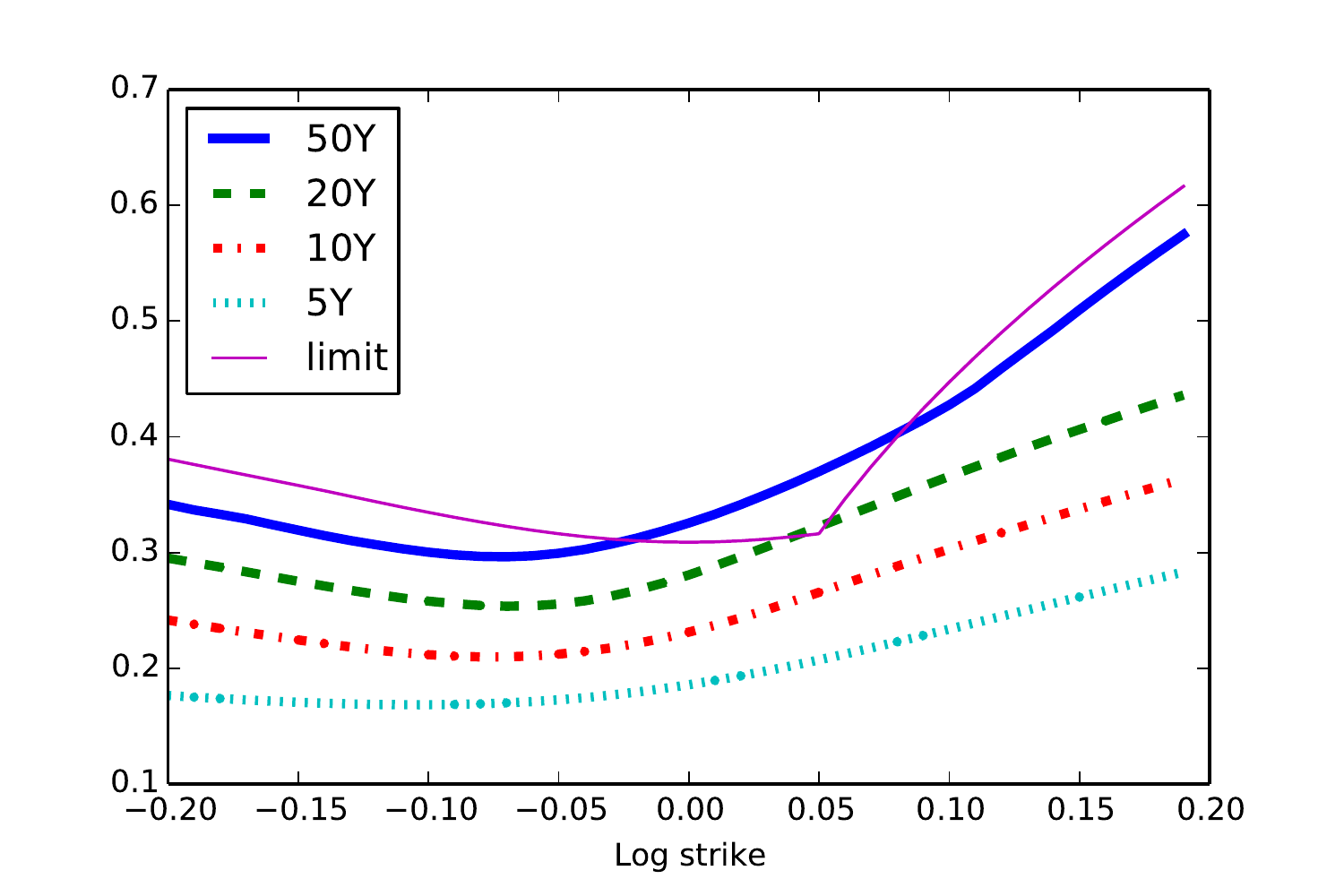}}
\caption{Convergence of the renormalized implied volatility
  smile to the theoretical limit in the Wishart model. }
\label{ivasymp.fig}
\end{figure}



\subsection{Variance reduction}

We now wish to test numerically the variance reduction method to price
basket put options. In order to do so, we first identify the law of
the Wishart process under the measure $\P_\theta$ and then calculate the asymptotically optimal measure change to finally test the method through Euler Monte-Carlo simulations. 

\subsubsection{Change of measure}

In order to simulate from the model under $\P_\theta$, we need the following result.
\begin{proposition}
Let $\theta \in \R^n$ be such that $\E[e^{\theta^\top Y_T}]<\infty$ and consider the change of measure $\frac{d\P_\theta}{d\P} = \frac{e^{\theta^\top Y_T}}{\E\left[e^{\theta^\top Y_T}\right]}$. Under $\P_\theta$, the process $(Y_t,X_t)$ has dynamics
\[
d Y_t = \left( r \mathbf{1} - \frac{1}{2} \left( (a^\top X_t \, a)_{11}\,,\,...\,,\,(a^\top X_t \, a)_{nn}\right)^\top + a^\top X_t \, a \, \theta \right) \, dt + a^\top X_t^{1/2} \, dZ_t^\theta
\]
and 
\begin{align*}
dX_t & = \left( \alpha I_n + (b + 2 \, \gamma_\theta(T-t)) X_t + X_t (b + 2 \, \gamma_\theta(T-t)) \right) \, dt \\
& \qquad \qquad \qquad \qquad \qquad \qquad + X_t^{1/2} \, dW_t^{\theta} + (dW_t^\theta)^\top X_t^{1/2} \:, \qquad X_0 = x \:,
\end{align*}
where $\gamma_\theta(t) = -\frac{1}{2}\left(V'(t,\theta) \,
  V^{-1}(t,\theta) + b\right)$, $V(t,\theta) = V(t)$ is given in
Proposition~\ref{prop:LaplaceExplicitForm} and $\left(Z_t^\theta\right)_{t \ge 0}$ and $\left(W_t^\theta\right)_{t \ge 0}$ are $\R^n$ and $\R^{n\times n}$-dimensional independent standard $\P_\theta$-Brownian motions.
\end{proposition}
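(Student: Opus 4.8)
Since the claim describes the law of $(Y,X)$ under the tilted measure $\P_\theta$, the natural strategy is to identify the Radon--Nikodym density process of the change of measure as an exponential affine functional of $(Y_t,X_t)$ and then read off the new dynamics from Girsanov's theorem.

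The plan is as follows. First I would consider the closed $\P$-martingale
\[
M_t := \E\!\left[\frac{d\P_\theta}{d\P}\,\middle|\,\F_t\right] = \frac{\E[e^{\theta^\top Y_T}\mid\F_t]}{\E[e^{\theta^\top Y_T}]},
\]
which is a genuine (not merely local) martingale because it is a conditional expectation of the integrable random variable $d\P_\theta/d\P$. Since $(Y,X)$ is a Markov process, Proposition~\ref{prop:LaplaceExplicitForm} applied on $[t,T]$ with $(Y_t,X_t)$ in place of the initial data $(Y_0,x)$ gives
\[
\E[e^{\theta^\top Y_T}\mid\F_t] = \exp\!\Big(\theta^\top Y_t + \beta_\theta(T-t) + \tr{\gamma_\theta(T-t)\,X_t}\Big),
\]
for a deterministic scalar function $\beta_\theta$ and the matrix function $\gamma_\theta(s) = -\tfrac12\big(V'(s,\theta)V^{-1}(s,\theta)+b\big)$; the identity $V'(s,\theta) = \phi^{1/2}(\theta)V(s,\theta) - \exp(-s\,\phi^{1/2}(\theta))(b+\phi^{1/2}(\theta))$ from the proof of Proposition~\ref{prop:LaplaceExplicitForm} shows that $\tr{\gamma_\theta(s)X_t}$ is exactly the $X_t$-linear term in that formula, and $\gamma_\theta$ is symmetric and $C^1$ on $[0,T]$. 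Hence $M_t = \exp(N_t-N_0)$ with $N_t := \theta^\top Y_t + \beta_\theta(T-t) + \tr{\gamma_\theta(T-t)X_t}$.

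Next I would apply It\^o's formula to $N_t$ using \eqref{eq:Y_t}--\eqref{eq:X_t}. Since $M=e^{N-N_0}$ is a martingale its finite-variation part vanishes, so only the local-martingale part of $N$ matters; using symmetry of $\gamma_\theta(T-t)$ and $X_t^{1/2}$ together with the cyclic and transpose invariance of the trace, this part is
\[
dN_t^{\mathrm{mart}} = (a\theta)^\top X_t^{1/2}\,dZ_t + 2\,\tr{X_t^{1/2}\gamma_\theta(T-t)\,dW_t}.
\]
Girsanov's theorem then asserts that under $\P_\theta$ the processes $Z_t^\theta := Z_t - \int_0^t X_s^{1/2}a\theta\,ds$ and $W_t^\theta := W_t - 2\int_0^t X_s^{1/2}\gamma_\theta(T-s)\,ds$ are continuous local martingales with unchanged brackets, namely $\langle Z^{\theta,i},Z^{\theta,j}\rangle_t=\delta_{ij}t$, $\langle W^\theta_{ij},W^\theta_{kl}\rangle_t=\delta_{ik}\delta_{jl}t$ and $\langle Z^{\theta,i},W^\theta_{kl}\rangle_t=0$; by L\'evy's characterization they are therefore independent standard Brownian motions of the stated dimensions under $\P_\theta$.

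Finally I would substitute $dZ_t = dZ_t^\theta + X_t^{1/2}a\theta\,dt$ into \eqref{eq:Y_t}, noting that $a^\top X_t^{1/2}\cdot X_t^{1/2}a\theta = a^\top X_t a\,\theta$, to obtain the asserted drift of $Y$; and substitute $dW_t = dW_t^\theta + 2X_t^{1/2}\gamma_\theta(T-t)\,dt$ into \eqref{eq:X_t}, where the added drift $X_t^{1/2}\big(2X_t^{1/2}\gamma_\theta(T-t)\big) + \big(2X_t^{1/2}\gamma_\theta(T-t)\big)^\top X_t^{1/2} = 2X_t\gamma_\theta(T-t) + 2\gamma_\theta(T-t)X_t$ combines with $bX_t+X_tb$ to give $(b+2\gamma_\theta(T-t))X_t + X_t(b+2\gamma_\theta(T-t))$, which is the claimed equation for $X$. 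I expect the main obstacle to be bookkeeping rather than conceptual: carefully re-running the computation in the proof of Proposition~\ref{prop:LaplaceExplicitForm} from a general starting time to justify the conditional Laplace transform (and to see that $V(s,\theta)$ remains invertible on $[0,T]$ as a consequence of $\E[e^{\theta^\top Y_T}]<\infty$), and tracking the factor $2$ in the $W$-bracket that originates from the two terms $X_t^{1/2}dW_t$ and $(dW_t)^\top X_t^{1/2}$ in \eqref{eq:X_t}.
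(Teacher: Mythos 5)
Your proposal is correct and follows essentially the same route as the paper: both identify the density process as an exponential-affine functional of $(Y_t,X_t)$ via the (conditional) Laplace transform of Proposition \ref{prop:LaplaceExplicitForm}, use It\^o's formula together with the martingale property to isolate its martingale part, and then apply Girsanov's theorem and substitute the shifted Brownian motions back into \eqref{eq:Y_t}--\eqref{eq:X_t}. The only quibble is notational: your intermediate expression $2\tr{X_t^{1/2}\gamma_\theta(T-t)\,dW_t}$, read literally, corresponds to the integrand $\gamma_\theta(T-t)X_t^{1/2}$ rather than $X_t^{1/2}\gamma_\theta(T-t)$ (the paper writes $2\tr{\left(X_t^{1/2}\gamma_\theta(T-t)\right)^{\top} dW_t}$), but the Girsanov shift and final dynamics you state are the correct ones.
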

\begin{proof}
By Equation \ref{eq:LaplaceEquation}, the Radon-Nikodym
density satisfies
\[
\zeta_t
:= \left.\frac{d\P_\theta}{d\P}\right|_{\F_t} \!\!\!\! = \frac{\E\!\left[e^{\theta^\top Y_T}\middle| \F_t \right]}{\E\left[e^{\theta^\top Y_T}\right]} = \frac{e^{\frac{\alpha}{2} \tr{b} t - \theta^\top \! Y_0 - r \theta^\top\! \mathbf{1} t - \tr{\gamma_\theta(T) \, x}} }{\det\!\left[ V(T,\theta)\right]^{\alpha/2}\!\det\!\left[ V(T-t,\theta)\right]^{-\alpha/2}}\, e^{\theta^\top \! Y_t + \tr{\gamma_\theta(T-t) X_t}} \:.
\]
By It\^o formula, the martingale property of $\zeta_t$, Equations \eqref{eq:X_t}
and \eqref{eq:Y_t}, and the properties of the trace, the dynamics of $\zeta_t$ is
\begin{align*}
d\zeta_t
& = \zeta_t \left( \theta^\top a^\top X_t^{1/2} \, dZ_t + \tr{\gamma_\theta(T-t) \, X_t^{1/2} \, dW_t} + \tr{\gamma_\theta(T-t) \, (dW_t)^\top \, X_t^{1/2}} \right) \\ 
& = \zeta_t \left( \theta^\top a^\top X_t^{1/2} \, dZ_t + 2\, \tr{\left(X_t^{1/2} \, \gamma_\theta(T-t) \right)^\top dW_t}\right) \:.
\end{align*}
Therefore, by Girsanov's theorem, 
\[
Z_t^\theta := Z_t - \int_0^t X_s^{1/2} a \,\theta \,ds
\]
and
\[
W_t^{\theta} := W_t - 2 \int_0^t  X_s^{1/2} \gamma_\theta(T-s) \, ds
\]
are $n$-dimensional and $n\times n$-dimensional standard $\P_\theta$-Brownian motions. Replacing $dZ_t$ and $dW_t$ in \eqref{eq:X_t} and \eqref{eq:Y_t} by their $\P_\theta$ versions finishes the proof.
\end{proof}

We note that $X$ is no longer a Wishart process under the probability~$\P_\theta$, since its dynamics has time-dependent coefficients. To sample paths on the time interval $[t_i,t_{i+1}]$, we use the exact scheme for the Wishart process with the coefficient $b+2\gamma_\theta(T-(t_i+t_{i+1})/2)$ instead of~$b$. As explained in~\cite{Alfonsibook} subsection~3.3.4 in the case of the CIR process with time-dependent coefficients, this leads to a second order scheme for the weak error. Then, we can approximate $Y$ in the same way as under $\P$:
\begin{align*}
  Y_{t_{i+1}} = Y_{t_i} &+ \left[r\mathbf 1-  \frac{1}{2}\text{diag}\left[ a^\top \frac{X_{t_i}+X_{t_{i+1}}}{2} a\right] + a^\top \frac{X_{t_i}+X_{t_{i+1}}}{2} a \theta \right] \Delta t\\& + \textup{Chol}\left(a^\top \frac{X_{t_i}+X_{t_{i+1}}}{2} a \right)(Z_{t_{i+1}} - Z_{t_i}),
\end{align*}
where $Z$ is a Brownian motion sampled independently from $X$. This gives a second order scheme for $(X,Y)$.

\subsubsection{Optimal variance reduction parameter for the European basket put option}
 
In this section, we compute the asymptotically optimal measure to
price basket put options with log-payoff $H(Y_T) = \log(K -
\omega^\top e^{Y_T})_+$, for some $\omega \in (\R_+^*)^n$. It is shown
in \cite[Section 4]{Ge:Ta} that the function $H$ is concave and that
its convex conjugate is given by
$$
\hat{H}(\theta) =  \left\{\begin{aligned}
&+\infty && \theta_k \geq 0 \text{ for some $k$}\\
& -\left(1-\sum_k \theta_k\right)\log \frac{1- \sum_k \theta_k}{K} - \sum_{k} \theta_k \log(-\theta_k/\omega_k) &&\text{otherwise.}
\end{aligned}\right.
$$
To compute the asymptotically optimal measure change parameter
$\theta^*$ using Theorem \ref{thm:OptimalTheta} we then minimize
$\hat{H}(\theta) + \Lambda(\theta)$ with a numerical convex
optimization algorithm. 

\subsubsection{Numerical simulations}

Let us now fix the parameters of the model to the values 
\[
b = -\left(\begin{array}{cc}
0.7 & 0.3 \\
0.3 & 0.5
\end{array}\right)\:, \qquad a = \left(\begin{array}{cc}
0.1 & 0 \\
0 & 0.12
\end{array}\right)\]
and $\alpha = 4.5$, with initial values $S_0= \mathds{1}$ and $x = \, I_2$ and consider the problem of pricing a basket put option with log-payoff 
\[
H(Y_T) = \log\left(K - \frac{1}{2}\, e^{Y_T^1} + \frac{1}{2}\, e^{Y_T^1}\right)_+.
\]
For a wide variety of maturities $T$ and strikes
$K$, listed in Table \ref{tab:BasketVarianceReduction}, we simulate
100,000 trajectories, using the discretization scheme described above,
with step size $\Delta = \frac{1}{40}$, under both measures $\P$ and
$\P_\theta$ for the asymptotically optimal $\theta$. The results are
presented in Table \ref{tab:BasketVarianceReduction}.   
\begin{table}[H]
\centering
\begin{tabular}{llllll}
\hline
Maturity, years & Strike & Price & Std. dev. & Var. ratio & Time, seconds\\
  \hline 
0.50	&0.7&	2.18e-07&	3.37e-08&	119&	202\\
0.50&	0.8&	3.29e-05&	9.5e-07&	22.5&	167\\
0.50&	0.9&	1.776e-03&	1.38e-05&	5.28&	169\\
0.50&	1.0&	2.6201e-02&	6.85e-05&	3.15	&167\\
0.50&	1.1&	1.0306e-01	&9.86e-05&	3.96&	167\\
0.50&	1.2&	2.0027e-01&	8.29e-05&	6.68&	167\\
0.50	&1.3&	3.0005e-01	&6.41e-05&	11.3&	180\\
0.50&	1.4&	3.9999e-01	&5.32e-05&	16.5&	168\\\hline
0.25&	1.0&	1.730e-02&	5.17e-05&	2.42&	92\\
1.00&	1.0&	4.115e-02&	9.51e-05&	3.76	&319\\
2.00&	1.0&	6.423e-02	&1.39e-04&	3.86&	618\\
3.00&	1.0&	8.319e-02	&1.78e-04&	3.63	&934\\
5.00	&1.0&	1.1579e-01&	2.46e-04	&3.22&	1522\\
\hline
\end{tabular}
\caption{The variance ratio as function of the maturity and the strike for the basket put option on the Wishart stochastic volatility model.}
\label{tab:BasketVarianceReduction}
\end{table}

The variance ratio is the ratio of the variance under the original
measure $\mathbb P$ to that under the asymptotically optimal measure
$\mathbb P_\theta$. As expected, the performance of the importance
sampling algorithm is best for options far from the money, when the
exercise is a rare event, but even for at the money options the
variance reduction factor is significant, of the order of 3--4. The
computational overhead for using the variance reduction algorithm is
small: it does not exceed $20\%$ for a small number of trajectories and
decreases with the number of trajectories because some precomputation
steps are performed only once.

{\bf Acknowledgements. } This research benefited from the support of the ``Chaire Risques Financiers'', Fondation du Risque. 


\end{document}